\newtheorem{thm}{Theorem}
\newtheorem*{thm*}{Theorem}
\newcommand{\setthmtag}[1]{%
  \let\oldthethm\thethm%
  \newcommand{\thethm}{#1}%
  \g@addto@macro\endthm{%
    \addtocounter{thm}{-1}%
    \global\let\thethm\oldthethm}%
  }
\newtheorem{proposition}[thm]{Proposition}
\newtheorem*{prop*}{Proposition}
\newtheorem{lemma}[thm]{Lemma}
\newtheorem*{lemma*}{Lemma}
\newtheorem{cor}[thm]{Corollary}
\newtheorem*{cor*}{Corollary}
\newtheorem*{cj*}{Conjecture}
\newtheorem{definition}[thm]{Definition}
\newtheorem*{Def*}{Definition}
\theoremstyle{definition}
\newtheorem{remark}{Remark}
\newtheorem*{rem*}{Remark}
\newcommand{\tr}[1]{\operatorname{tr}\left[#1\right]}
\newcommand{\R}{\mathbb{R}}
\newcommand{\expt}[1]{\textrm{exp}}
\newcommand{\cA}{\mathcal{A}}
\newcommand{\cD}{\mathcal{D}}
\newcommand{\uGW}{\operatorname{uGW}}
\newcommand{\QUBO}{\operatorname{QUBO}}
\newcommand{\GW}{\operatorname{GW}}
\newcommand{\ketbra}[1]{|#1\rangle\langle#1|}
\newcommand{\poly}{\textrm{poly}}
\newcommand{\polylog}{\textrm{polylog}}
\newcommand{\Kron}{\textrm{Kron}}
\newcommand{\C}{\mathbb{C}}
\newcommand{\cO}{\mathcal{O}}
\Crefname{figure}{Fig.}{Figs.}
\Crefname{equation}{Eq.}{Eqs.}
\Crefname{definition}{Def.}{Defs.}
\Crefname{cor}{Cor.}{Cors.}
\Crefname{theorem}{Thm.}{Thms.}
\Crefname{appendix}{Appendix}{Appendices.}
\Crefname{proposition}{Prop.}{Props.}
\crefname{section}{section}{sections}
\Crefname{section}{Section}{Sections}
\tikzset{
    vertex/.style={circle, draw=black, fill=blue, inner sep=0pt, minimum size=6pt}
}
\begin{document}
\title{Exponential Speed-ups for Structured Goemans-Williamson relaxations via Quantum Gibbs States and Pauli Sparsity}
\begin{abstract}
Quadratic Unconstrained Binary Optimization (QUBO) problems are prevalent in various applications and are known to be NP-hard. The seminal work of Goemans and Williamson introduced a semidefinite programming (SDP) relaxation for such problems, solvable in polynomial time that upper bounds the optimal value. Their approach also enables randomized rounding techniques to obtain feasible solutions with provable performance guarantees.

In this work, we identify instances of QUBO problems where matrix multiplicative weight methods lead to quantum and quantum-inspired algorithms that approximate the Goemans-Williamson SDP exponentially faster than existing methods, achieving polylogarithmic time complexity relative to the problem dimension. This speedup is attainable under the assumption that the QUBO cost matrix is sparse when expressed as a linear combination of Pauli strings satisfying certain algebraic constraints, and leverages efficient quantum and classical simulation results for quantum Gibbs states.
    
We demonstrate how to verify these conditions efficiently given the decomposition. Additionally, we explore heuristic methods for randomized rounding procedures and extract the energy of a feasible point of the QUBO in polylogarithmic time. While the practical relevance of instances where our methods excel remains to be fully established, we propose heuristic algorithms with broader applicability and identify Kronecker graphs as a promising class for applying our techniques. We conduct numerical experiments to benchmark our methods. Notably, by utilizing tensor network methods, we solve an SDP with $D = 2^{50}$ variables and extract a feasible point which is certifiably within $0.15\%$ of the optimum of the QUBO through our approach on a desktop, reaching dimensions millions of times larger than those handled by existing SDP or QUBO solvers, whether heuristic or rigorous.
\end{abstract}

\author{\begingroup
\hypersetup{urlcolor=navyblue}
\href{https://orcid.org/0009-0001-4941-5448}{Haomu Yuan}
\endgroup}
\affiliation{Cavendish Laboratory, Department of Physics, University of Cambridge, Cambridge CB3 0HE, UK}
\email[Haomu Yuan ]{hy374@cam.ac.uk}

\author{\begingroup
\hypersetup{urlcolor=navyblue}
\href{https://orcid.org/0000-0001-9699-5994}{Daniel Stilck Fran\c{c}a}
\endgroup}
\affiliation{Department of Mathematical Sciences\\ University of Copenhagen\\ Universitetsparken 5, 2100 Denmark}
\email[Daniel Stilck Fran\c ca ]{dsfranca@math.ku.dk}

\author{\begingroup
\hypersetup{urlcolor=navyblue}
Ilia Luchnikov, Egor Tiunov, Tobias Haug, Leandro Aolita
\endgroup}
\affiliation{Quantum Research Center, Technology Innovation Institute, Abu Dhabi, UAE}
\date{\today}
\maketitle

\tableofcontents

\section{Introduction}

Quadratic Unconstrained Binary Optimization (QUBO) problems are ubiquitous across computer science, operations research, and statistical physics~\cite{Lucas2014}, but are NP-hard in general to solve. This has motivated a flurry of heuristic and rigorous algorithms to (approximately) solve them~\cite{Blekos2024}, even with special chips dedicated to them~\cite{Johnson2011,Yin2024,Mohseni2022,Alom2017}. A landmark advance was the semidefinite programming (SDP) relaxation of Goemans and Williamson~\cite{goemans1995improved}, which runs in polynomial time in the size of the instance and yields both an upper bound on the optimal QUBO value and, via randomized rounding, near-optimal feasible solutions with provable guarantees.

However, the polynomial degree to solve the Goemans-Williamson SDP using standard algorithms like interior point methods~\cite{boyd2004convex} is high and also are their memory requirements, restricting the dimension of problems that can be effectively solved. This prompted research into algorithms that have a better scaling in time and memory, such as the matrix multiplicative weight (MMW) method~\cite{Arora2005} or those based on sketching techniques~\cite{Yurtsever2021}. Interestingly for quantum and quantum-inspired algorithms, the MMW draws a connection between solving SDPs and preparing quantum Gibbs states.
Subsequent quantum algorithms for SDPs have exploited this connection and obtained polynomial speedups over classical methods~\cite{brandao2017quantum,vanApeldoorn2020,GSLBrandao2022fasterquantum}.  However, these speedups are typically modest (e.g. sub-quadratic in the dimension), require QRAM or other strong data-access assumptions, or depend on SDP parameters that are difficult to control in practice, casting doubt on their practical impact~\cite{henze2025solvingquadraticbinaryoptimization,Dalzell2023}.

At the same time, the theory of quantum Gibbs sampling is currently advancing rapidly~\cite{chen_generators,Ding2025,gilyen2024quantumgeneralizationsglaubermetropolis,jiang2024quantummetropolissamplingweak}, and many physical Hamiltonians are now known to admit both efficient quantum state preparation (e.g. through dissipative preparation~\cite{efficient_gibbs,Tong2025,rouze2024optimalquantumalgorithmgibbs,ramkumar2024mixingtimequantumgibbs,zhan2025rapidquantumgroundstate,smid2025polynomialtimequantumgibbs} or
efficient classical simulation~\cite{Bakshi2024,Crosson_2025}, in particular through tensor networks~\cite{Bauls2023}. Unfortunately, most existing SDP instances demand Gibbs states of highly non-local or unstructured matrices, for which neither quantum nor classical methods achieve superquadratic speedups. Furthermore, most practically relevant SDPs have a number of constraints that scales polynomially with the dimension of the problem. These roadblocks prevented exponential speedups for MMW-based SDP solvers.

In this work, we bridge this gap by identifying a family of SDP instances, arising from the Goemans-Williamson relaxation of QUBO on certain graph families, satisfying:
\begin{itemize}
    \item [1.] \emph{The SDP constraints reduce to estimating exclusively local observables of a many-body quantum Gibbs state at finite temperature;}
    \item [2.] The cost matrices of instances admit a \emph{Pauli-sparse} decomposition whose underlying Paulis are subject to certain algebraic conditions.
\end{itemize}
Furthermore, as we show, these conditions can be efficiently verified.

Under these conditions and assuming that the underlying quantum Gibbs states can be prepared efficiently, using either a quantum computer or a classical simulator, we show how MMW methods can be implemented to solve the Goemans-Williamson SDP to relative precision \(\epsilon\) in time
\[
\mathcal{O}\bigl(\textrm{polylog}(D)\,\mathrm{poly}(1/\epsilon)\bigr)
\]
where $D=2^n$ is the dimension of the QUBO.  This represents an \emph{exponential} speedup in \(D\) over all known classical and quantum SDP solvers for these instances. Furthermore, we identify QUBO instances whose cost matrix admits a decomposition as a $1D$ Hamiltonian for which all of our conditions are met, leading to either exponential classical or quantum speedups.

A hallmark of the GW SDP is that its optimal solution can be “randomly rounded’’ to a concrete \(\{\pm1\}\)-assignment with provable quality guarantees~\cite{Alon2004,goemans1995improved}.  In the high-dimensional regime we target, however, directly performing the rounding and then writing down an assignment would cost \(\Omega(2^n)\) time—far too expensive.  Instead, we show how to do the rounding with much cheaper “local’’ randomness and then use simple Monte Carlo sampling methods to recover the QUBO cost of the assignment up to multiplicative error $\epsilon$ in only $\mathcal{O}(\polylog(D)\epsilon^{-2})$ classical time.  This routine lets us certify that our output string achieves almost the same value as the SDP, but without ever writing down a length-\(2^n\) vector. However, it should be said that our simplified randomized rounding routine still does not have any performance guarantees, i.e. we cannot prove an approximation ratio like it is the case for the the standard Goemans-Williamson randomized rounding routine~\cite{goemans1995improved}, and leave establishing such guarantees to future work.
In combination with our exponential-speedup SDP solver, this %
yields end-to-end quantum-inspired algorithms that approximate QUBO in time essentially polylogarithmic in the ambient dimension. We confirm this empirically by using our methods to find certifiable, %
high-quality solutions of a QUBO with $2^{50}$ variables on a desktop computer%
---a scale significantly beyond what was previously achieved using rigorous or heuristic methods. That being said, it should be mentioned that we yet have to develop a full understanding of the complexity of the underlying QUBO and SDP relaxations for which we obtain exponential speedups, i.e. if they still correspond to QUBO problems that cannot be solved in polynomial time in the input's size under some complexity theoretic assumption.

While the ultimate practical relevance of the instances where we have rigorous exponential speedups remains to be fully explored, our work opens several promising directions to achieve practical value. We develop various heuristic methods to further relax Goemans-Williamson based on the regimes where we have exponential speedups to more general Pauli-sparse instances and numerically show that they perform well. In addition, we identify Kronecker graphs, which are widely used in modelling large-scale networks~\cite{kronecker}, as a promising class of practically relevant instances which could approximately satisfy our assumptions and also perform initial numerical experiments to confirm this. 

Thus, our results establish, for the first time, that particularly-structured
SDP relaxations of QUBO can be solved—and their solutions rounded—exponentially faster than previously thought possible, by exploiting locality in quantum Gibbs states and Pauli‐sparsity in the cost matrix.  By showing that matrix‐multiplicative‐weights, quantum Gibbs sampling, and Monte Carlo rounding can be combined to yield polylogarithmic‐time algorithms in the problem dimension, we hope to open a rich research program at the intersection of convex optimization, quantum many‐body theory, and combinatorial algorithms—one that promises both new theoretical insights and, potentially, practical applications.

\section{Notation and Basic Concepts}\label{sec:notations}
We adopt standard quantum information notation, such as bra–ket notation to denote vectors and matrices. Given a matrix $X$ and $p\in[1,+\infty]$, we denote by $\|X\|_{p}$ its Schatten $p-$norm. When $p$ is omitted, it refers to the operator norm ($p=+\infty$).
Given $n\in\mathbb{N}$, we denote by $[n]=\{1,\ldots,n\}$, and given some $A\subset[n]$, we denote by $2^{A}$ the power set of $A$.

\subsection{Recap of Goemans-Williamson}\label{sec:GWrecap}
Recall that a QUBO problem is specified by a symmetric cost matrix $C \in \mathbb{R}^{D \times D}$ for which we wish to solve:
\begin{equation}
\label{eq:original_problem}   
\textrm{QUBO}(C)=\max_{x \in\{-1,1\}^D}\langle x, C\,x\rangle.
\end{equation}
It is well-known that solving QUBOs is NP-hard in general, and many heuristic and rigorous algorithms exist to approximate it, both quantum and classical~\cite{Kochenberger2014}. Furthermore, they find widespread applications in various fields and are arguably one of the most fundamental combinatorial optimization problems. Indeed, it is known how to cast various other standard optimization problems into instances of QUBOs, which in turn are equivalent to finding the ground state of Ising models~\cite{Lucas2014}.

One of the most widely studied alternatives to approximate the value of a QUBO is the semidefinite relaxation put forth by Goemans and Williamson~\cite{goemans1995improved}. The relaxed problem is given by:
\begin{equation}
\label{eq:GW_relaxation}   
\begin{aligned}
\max_{Y\in \mathbb{R}^{D \times D}}& \tr{CY} \\
\textrm{subject to } &\langle i | Y|i\rangle=1,\ \ 1 \leqslant i \leqslant D \\
&Y \geq 0 .
\end{aligned}
\end{equation}
We denote the value of this problem by $\uGW(C)$, where $\uGW$ stands for unnormalized Goemans-Williamson. It is a standard fact that 
\begin{align}\label{eq:ugw_bound_1}
\uGW(C)\geq \QUBO(C).
\end{align}
However, remarkably, it is known that for various families of matrices $C$~\cite{Alon2004}, the other direction also holds for some constant $\alpha_R$, i.e. 
\begin{align}\label{equ:approximation}
    \alpha_R\uGW(C)\leq \QUBO(C)\leq  \uGW(C).
\end{align}
For instance, for $C$ with only positive entries, $\alpha_R\simeq 0.878$~\cite{goemans1995improved}. As semidefinite programs of the form of~\Cref{eq:GW_relaxation} can be solved in polynomial time in $D$, e.g., using interior-point methods~\cite[Chapter 11.8.3]{boyd2004convex}, Interestingly, under the unique games conjecture, this approximation ratio is optimal for polynomial-time algorithms~\cite{Khot2007}. 

To achieve the approximation ratio, one often resorts to randomized rounding techniques. That is, given a solution to $\uGW(C)$ denoted by $Y^*$ and a distribution $\mu$ on $\mathbb{R}^D$, we can consider the random variable on $x\in \{\pm 1\}^D$ given by $x=\operatorname{sign}(\sqrt{Y^*}y)$, where $y\sim \mu$ and the sign function acts component-wise. One can then derive bounds on the expectation value of this random variable in terms of $\uGW(C)$, and, as it is clearly a feasible point of $\QUBO$, this leads to inequalities of the form in~\Cref{equ:approximation}.

From now on, we only consider $D=2^{n}$, i.e., matrices $C$ that can be seen as acting on systems of $n$ qubits. Then, it is convenient to introduce the normalized Goemans-Williamson (GW) problem:
\begin{equation}
    \label{eq:GW_relaxation_normalized}   
    \begin{aligned}
    \max_{\rho\in\R^{2^n\times 2^n}}& \tr{\frac{C}{\|C\|}\rho} \\
    \textrm{subject to } &\langle i | \rho|i\rangle=\frac{1}{2^n},\ \ 1 \leqslant i \leqslant 2^n \\
    &\rho \geq 0 .
    \end{aligned}
    \end{equation}
We denote the value of this problem by $\GW(C)$, and note that it now takes values in $[-1,1]$\footnote{Note that we normalized by the exact value of the operator norm of the matrix $\|C\|$ here, which is typically hard to compute. But, in practice, it is sufficient to only have an upper bound on its value to ensure that the normalized operator has norm at most $1$.}.
The reason for this normalization is that now we are optimizing over density matrices, as the feasible points of~\Cref{eq:GW_relaxation_normalized} have unit trace and are positive semi-definite, and over bounded observables. Furthermore, it is clear that the value of~\Cref{eq:GW_relaxation_normalized} is just the one of~\Cref{eq:GW_relaxation} divided by $2^{n}\|C\|$. 
Thus, if we solve the $\GW$ problem up to $\epsilon>0$, then we solve $\uGW$ up to $\epsilon\, 2^n\|C\|$. Given that the value $\QUBO(C)$ is in $[-2^n\|C\|,2^n\|C\|]$, we see that $\epsilon\, 2^n\|C\|$ typically corresponds to a relative error of order $\epsilon$ for $\QUBO(C)$. In more physical terms, by solving the normalized problem up to error $\epsilon\, 2^n$, we are approximating the energy density of the Hamiltonian $C/\|C\|$ up to additive error $\epsilon$. In turn, in computer science terms, we are computing the average number of constraints satisfied. 

For the convenience of the reader,~\Cref{tab:GW_hierarchy} in the~\Cref{app:table_qubo_relations} summarizes these and all other relaxations of QUBO we consider in this work.

\subsection{Overview of classical and quantum SDP solvers}
\label{sec:SDP_solvers}
There are multiple algorithmic frameworks to solve semidefinite programs in polynomial time, such as interior-point or ellipsoid methods%
~\cite{Lee2015,boyd2004convex,Augustino2023}. Another powerful approach is the  matrix multiplicative weight (MMW) method~\cite{Arora2005}. 
The main distinction between interior-point methods and MMW is that the run-time of the former has a logarithmic dependency in the target precision $\epsilon$, whereas that of the latter has a polynomial one. However, MMW methods usually feature a better dependence on the matrix dimension, thus being advantageous whenever low-accuracy solutions are good enough. In addition, interior-point methods have high memory requirements. 
Another interesting approach is approximate classical solvers  based on sketching techniques, which have less stringent memory and runtime requirements, such as SketchyCGAL~\cite{Yurtsever2021}. This %
only requires $\mathcal{O}(D)$ memory to approximate a low-rank solution of $\GW$ up to constant precision. 
The MMW approach is particularly appealing to this work because it (as well as the closely related thermodynamic approach of~\cite{liu2025quantumthermodynamicssemidefiniteoptimization}) is the basis of various  quantum SDP solvers~\cite{brandao2017quantum,vanApeldoorn2020,GSLBrandao2022fasterquantum, watts2023quantumsemidefiniteprogrammingthermal}. Here, we use it as basis to build both quantum and quantum-inspired SDP solvers  tailored  to specific relaxations (see Def.~\ref{def:relaxed_SDP} below) of the GW problem in Eq. \eqref{eq:GW_relaxation_normalized}.

Let us now briefly review the run-time scalings of the main classical and quantum GW-SDP solvers. To  our knowledge, the best classical solvers in terms of interior point methods has time complexity $\mathcal{O}(D^4\log(\epsilon^{-1}))$~\cite{Lee2015}, whereas the best MMW-based classical solvers achieve $\tilde{\mathcal{O}}\left(\min \{D^{2.5}s\epsilon^{-2.5},D^{2.5}s^{0.5}\epsilon^{-3.5},D^2s\epsilon^{-9}\}\right)$~\cite{henze2025solvingquadraticbinaryoptimization,Arora2005}, where the tilde hides polylogarithmic factors in $1/\epsilon$ and $D$; and $s$ is the sparsity (i.e. the maximum number of nonzero entries per row) of $C$. On the other hand, the best known MMW-based quantum algorithm achieves $\tilde{\mathcal{O}}\left(D^{1.5}s^{0.5+o(1)}\epsilon^{-15+o(1)}\right)$ \cite{GSLBrandao2022fasterquantum}. In turn, for quantum interior-point methods it is difficult to analyse the performance rigorously~\cite{Dalzell2023}, as it depends on the condition number of various matrices that are constructed along the iterations of the algorithm, which are hard to bound analytically. 
Finally, the SketchyCGAL algorithm from~\cite{Yurtsever2021} requires time $\mathcal{O}(D\,s\epsilon^{-2.5})$. However, one should stress that its output is different from those of the previously mentioned methods, as it does not give the solution matrix but only a rank-one approximation.

\subsection{The Hamiltonian updates framework to solve GW}
\label{sec:HU}

From now on, we focus exclusively on the \textit{Hamiltonian updates} (HU) framework of~\cite{GSLBrandao2022fasterquantum}, with algorithmic details provided in~\Cref{sec:HU_description}. The HU framework allows one %
to reduce approximately solving SDPs to preparing quantum Gibbs states and approximating observable expectation values on them, as explained below. Given symmetric matrices $B_0,\ldots,B_m\in\mathbb{R}^{2^n\times 2^n}$ s.t. $\|B_i\|\leq 1$, real numbers $b_i$ for $1\leq i\leq m$ (in our case of interest, all $b_i$'s are zero), and a precision parameter $\epsilon>0$, we can use HU to solve the relaxed SDP:
\begin{equation}\label{eq:general_SDP}   
\begin{aligned}
&\max_{Y\in \mathbb{R}^{2^n\times 2^n}
} \tr{B_0Y} \\
&\textrm{subject to }  b_i-\epsilon\leq \tr{B_iY}\leq b_i+\epsilon, 1\leq i\leq m\\
&\tr{Y}=1\\
&Y \geq 0 .
\end{aligned}
\end{equation}
up to precision $\epsilon$ (i.e. we can determine the optimal value up to additive precision $\epsilon$) by preparing Gibbs states. %
More precisely, we need to estimate the expectation value of $B_i$ %
up to precision $\mathcal{O}(\epsilon)$ on a sequence of $\mathcal{O}(n\epsilon^{-2}\log(\epsilon^{-1}))$ Gibbs states of the form:
\begin{align}\label{equ:Gibbs_parametrized_0}
\sigma(\lambda)\propto \exp\left(
\sum_{i=0}^m\lambda_iB_i
\right),
\end{align}
where $\lambda\in\mathbb{R}^{m+1}$, with $\|\lambda\|_{\ell_1}=\sum_{i=0}^m|\lambda_i|=\mathcal{O}(n\epsilon^{-1})$. The sequence of Gibbs states is iteratively constructed given the expectation values, and the details of how to construct them are discussed in~\Cref{sec:HU_description}. For the reader's convenience, we give a summary of the framework here. In a nutshell, one %
reduces the optimization in~\Cref{eq:general_SDP} to a series of feasibility problems of the form 
\begin{equation}\label{eq:feasibility_SDP}   
    \begin{aligned}
    \max_{Y\in\mathbb{R}^{2^n\times 2^n}%
    } &1 \\
    &\textrm{subject to }  b_i-\epsilon\leq \tr{B_iY}\leq b_i+\epsilon, 1\leq i\leq m\\
    &\tr{B_0Y}\geq \mu\\
    &\tr{Y}=1\\
    &Y \geq 0 ,
    \end{aligned}
    \end{equation}
and performs %
a binary search over $\mu$ to find the maximal $\mu$ for which the program is feasible. The algorithm starts by checking if
$\sigma(0)=I/2^n$ is a feasible point and constructs new guesses $\sigma(\lambda_t)$ iteratively. At each step, the algorithm computes $
\tr{B_i\,\sigma(\lambda_t)}$ and checks if the constraints are satisfied up to $\epsilon$. If we find that constraint $i$ is violated by more than $\epsilon$, we set $\lambda_{t+1}=\lambda_t+\frac{\epsilon}{4}e_i$, where $e_i$ is the $i$-th element of the Euclidean basis of $\mathbb{R}^{m+1}$, and continue. If all constraints are met, we output that a given value of $\mu$ is feasible. One can then show that, if such a feasible point exists, the algorithm converges in $\mathcal{O}(n%
\epsilon^{-2})$ iterations. Thus, if we have not converged to a feasible point after this many iterations, we conclude that the problem is not feasible for that value of $\mu$ and try again with a lower value, until the desired optimal feasible value is found through the binary search.

\subsection{Pauli strings and Pauli-sparse matrices}
We make extensive use of the Pauli matrices $X,Y,Z$ and $I$. Given $(x,z)\in \{0,1\}^{2n}$, we denote by $P_{(x,z)}$ the $2^n\times 2^n$ complex matrix given by the Pauli string:
\begin{align}
  P_{(x,z)} := \bigotimes_{j=1}^n i^{x_j z_j} X_j^{x_j} Z_j^{z_j},
\end{align}
where $X_j$ and $Z_j$ are the usual Pauli matrices acting on the $j$-th qubit. Furthermore, for a Pauli string $P_{(x,z)}$, we denote by $w(P_{(x,z)})$ the number of its Paulis that do not correspond to the identity $I$. We refer to this as the weight of the Pauli string. We denote by $\mathcal{Z}_n$ the subgroup of Pauli $Z$ strings (including the sign), i.e. strings of the form $\pm P_{(0,z)}$. As Pauli $Z$ strings play a special role in our work, we further introduce a simpler notation for them. Given some $A\subset[n]$, we let $Z_A=P_{(0,z_A)}$, where $z_A$ is the vector with ones for entries in $A$ and $0$ otherwise. 

We often use the fact that the group of Pauli strings under multiplication modulo the sign is isomorphic to the group $(\mathbb{Z}_2^n\times \mathbb{Z}_2^n,+)$---see e.g.~\cite{Aaronson2004} for an overview of various properties of this correspondence. In particular, this representation allows us to easily compute various algebraic properties of a set of Pauli strings in terms of linear algebra over finite fields. For instance, if we wish to determine the subgroup that a set of Pauli strings $\{P_{(x_i,z_i)}\}_{i=1}^m$ generates, all we need to do is to find the span (with respect to addition modulo 2) of the set of vectors $\{(x_i,z_i)\}_{i=1}^m$.

One object that is extremely relevant to rigorously identify regimes where our methods provide large speedups is the diagonal Pauli group associated to $C$:
\begin{definition}[Diagonal group of $C$]\label{def:diagonal_group}
Given a symmetric matrix $C\in\mathbb{R}^{2^n\times 2^n}$ with Pauli expansion
\begin{align}\label{equ:Pauli_expansion2}
C=\sum_{(x,z)\in\{0,1\}^{2n}}c_{(x,z)}\,P_{(x,z)},
\end{align}
we define $\mathcal{G}_C$ to be the group of Paulis generated by $\{P_{(x,z)}:c_{(x,z)}\not=0\}$. With this, we then define the diagonal group generated by %
$C$, $\mathcal{D}_C$, to be given by $\mathcal{D}_C=\mathcal{G}_C\cap \mathcal{Z}_n$, representing the Pauli $Z$ strings contained in $\mathcal{G}_C$.
Moreover, it is also be useful to denote the traceless subset of $\mathcal{D}_C$ by $\dot{\cD}_C:=\cD_C\backslash I$; this is the actual set that defines the constraints in our SDP relaxation.
\end{definition}

The Pauli strings form a basis of the set of Hermitian matrices, i.e. any Hermitian or symmetric matrix $C$ can be expanded as a linear combination of Pauli strings with real coefficients. One of the main focus of this work is to consider instances of QUBOs whose cost matrix $C$ is (approximately) sparse when expressed over the Pauli basis. That is, it can be expressed as 
\begin{align}\label{equ:Pauli_expansion1}
C=\sum_{(x,z)\in\{0,1\}^{2n}}c_{(x,z)}\,P_{(x,z)},
\end{align}
with $|c_{(x,z)}|\not=0$ for substantially less than $4^{n}$ terms.  More precisely:
\begin{definition}[Pauli-sparse]\label{equ:pauli_sparse}
A matrix $C\in\mathbb{R}^{2^n\times 2^n}$ is called Pauli-sparse if its expansion in the Pauli basis
\begin{align}
C=\sum_{(x,z)\in\{0,1\}^{2n}}c_{(x,z)}\,P_{(x,z)},
\end{align}
is such that 
\begin{align}
|\{c_{(x,z)}:c_{(x,z)}\not=0\}|=\operatorname{poly}(n)\ll4^n.
\end{align}
Furthermore, it will be called of weight at most $k$ if  
\begin{align}
\forall c_{(x,z)}\not=0\implies w(P_{(x,z)})\leq k
\end{align}
and of low weight we have $k=\cO(1)$.
\end{definition}
\begin{remark}
Note that if $C$ is of weight at most $k=\mathcal{O}(1)$, then this implies that $C$ is Pauli-sparse, as there are $\sum\limits_{l=1}^k 3^l {n \choose l}=\operatorname{poly}(n)$ Paulis of weight at most $k$.
\end{remark}
\begin{remark}
    In the case where we have that only $\poly(n)$ terms are not $0$, this gives a succinct representation of the problem~\cite{Galperin1983}, as we can query any entry of the matrix in time $\poly(n)$ and store this representation only using $\poly(n)$ memory.
\end{remark}
\begin{remark}
    Although such a sparse decomposition is natural from a quantum computing point of view, as e.g. quantum many-body Hamiltonians are Pauli-sparse, and they have been considered in other works~\cite{Wang2024,Cifuentes2024}---to the best of our knowledge, Pauli-sparse QUBOs have not been considered before in the literature.
\end{remark}

In \Cref{sec:classification_graphs}, we discuss some structural results on Pauli-sparse graphs, i.e. which graphs can be expressed in such terms and connect local Pauli-sparse graphs to subgraphs of the Hamming ball graphs. Importantly, we show that practically relevant graphs(\Cref{sec:kronecker_graphs}), such as some Kronecker graphs~\cite{kronecker}, satisfy this assumption and it is possible to find such a representation in time $\poly(n)$.

\begin{remark}
In this work, we %
focus on adjacency matrices that admit a sparse representation in the Pauli basis, which establishes a connection between solving the underlying SDPs and quantum many-body problems with local dimension $d=2$. However, most of the techniques we discuss easily extend to qudits, i.e. arbitrary local dimension. All we need to do is to replace the Pauli basis with its natural unitary extension to qudits, the discrete Weyl matrices~\cite[Example 2.1]{wolf2012quantum}. Indeed, they also have the property of being orthogonal to each other with respect to the Hilbert-Schmidt scalar product and containing the identity and are a projective representation of $\mathbb{Z}_d\times\mathbb{Z}_d$. Furthermore, a subset of them defines a basis for diagonal matrices, providing us with a natural analog of how to relax the diagonal constraint of GW is equivalent to just enforcing that the expectation of a subset of them is $0$. However, for simplicity, we restrict throughout to the $d=2$ case
and comment on possible extensions where appropriate. 
\end{remark}

Our techniques also work if the underlying matrix is not exactly Pauli-sparse, but approximately. To that end, it is important to consider the following quantity:
\begin{definition}[Pauli $\ell_1$ norm]\label{defi:Pauli_norm}
Let $C\in\mathbb{R}^{2^n\times 2^n}$ with Pauli expansion
\begin{align}\label{equ:Pauli_expansion}
C=\sum_{(x,z)\in\{0,1\}^{2n}}c_{(x,z)}\,P_{(x,z)}.
\end{align}
We define its Pauli $\ell_1$ norm, $\|C\|_{P,\ell_1}$, as 
\begin{align}
\|C\|_{P,\ell_1}=\sum_{(x,z)\in\{0,1\}^{2n}}|c_{(x,z)}|.
\end{align}
\end{definition}

Interestingly, as we %
see in~\Cref{sec:pauli_sparsification}, the quantity $n\|C\|_{P,\ell_1}^2$ quantifies how many Pauli terms we need in a  matrix for it to faithfully approximate $C$ in terms of its QUBO value.
This implies that, when $C$ is itself not Pauli-sparse but $\|C\|_{P,\ell_1}=\operatorname{poly}(n)$, $\QUBO(C)$ can still be approximated via a Pauli-sparse sparsification of $C$.
\section{Main results}\label{sec:main_results}

We now present our main results. As mentioned before, we %
consider the problems $\QUBO$ and $\GW$ for a matrix $C$ that is (approximately) Pauli-sparse. We have results both for quantum and classical implementations of the algorithms developed here.
Our main insights are twofold: first, we show (in \Cref{thm:main1_cons}) that by considering the group properties of the Paulis in the Pauli expansion of $C$, it is possible to relax the $\GW$ problem without losing the quality of relaxation, but substantially reducing the number of constraints. The details of this procedure are described in \Cref{sec:further_relaxing}. Moreover, certain Pauli-sparse matrices, we even show (in \Cref{cor}) that it is possible to exponentially reduce the number of constraints to solve $\GW$ without affecting the tightness of the GW %
relaxation. Our second %
insight (\Cref{proposition}) is that, for some of these instances, solving $\GW$ can be reduced to a quantum many-body problem, namely approximating the expectation value of a local observable of a quantum many-body state. This makes it possible to solve huge instances by resorting to quantum or classical algorithms to simulate quantum many-body systems.

\subsection{Rigorous results on solving Goemans-Williamson}\label{sec:riggw}
Our work is based on the following further relaxation of GW, which was also considered in~\cite{Patti2023}. 
\begin{definition}[Relaxed Goemans-Williamson]\label{def:relaxed_SDP}
Given $S \subseteq 2^{[n]}, \epsilon>0$ and a symmetric $C\in\mathbb{R}^{2^n\times 2^n}$, consider the SDP
\begin{equation}\label{equ:relaxed_SDP2}
\begin{aligned}
\max%
_{\rho} & \tr{\frac{C}{\|C\|}\rho}\\ 
\textrm{subject to } & \forall A \in S: |\tr{Z_{A}\rho}| \leq \epsilon, \\
& \tr{\rho}=1, \rho \geq 0.
\end{aligned}
\end{equation}
We define $\GW(C, S, \epsilon)$ as the value of the program %
in~\Cref{equ:relaxed_SDP2}. Note that this specific formulation is obtained from the generic SDP in Eq. \eqref{eq:general_SDP} by taking $B_0=\frac{C}{\|C\|}$, $B_i=Z_{A}$ for $i \in [1, m]$, and $m=|S|$. 
\end{definition}

Our first main result is as follows:
\begin{thm}[Reducing the number of constraints, informal]\label{thm:main1_cons}
With %
$\dot{\cD}_C$ the traceless subset of the diagonal group $\cD_C$ generated by $C$, defined in~\Cref{def:diagonal_group}, we have:
\begin{align}\label{equ:main1_cons_eq1}
\GW(C, \dot{\cD}_C, \epsilon)=\GW(C, \epsilon).
\end{align}
Furthermore, it holds that:
\begin{align}\label{equ:stability}
|\GW(C, \dot{\cD}_C%
, \epsilon)-\GW(C, 0)|=\mathcal{O}\Big(\epsilon^{\tfrac{1}{3}}|\dot{\cD}_C%
|^{\tfrac{1}{6}}\Big).
\end{align}
\end{thm}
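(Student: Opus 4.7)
The plan is to treat the two claims separately: the equality follows from a Pauli twirling argument, while the stability bound requires a perturbation construction whose parameter balancing to recover the stated exponents is the main obstacle.

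For the equality \Cref{equ:main1_cons_eq1}, the direction $\GW(C,\dot{\cD}_C,\epsilon) \geq \GW(C,\epsilon)$ is immediate since $\dot{\cD}_C \subseteq 2^{[n]}\setminus\{\emptyset\}$ corresponds to strictly fewer constraints. For the reverse, given any feasible $\rho^*$ for $\GW(C,\dot{\cD}_C,\epsilon)$, I would symmetrize it by twirling over the Pauli subgroup $\mathcal{G}_C^{\perp}$ consisting of all Paulis commuting with every element of $\mathcal{G}_C$ (in the $\mathbb{Z}_2^{2n}$ symplectic representation described in the text), setting
\[
\tilde{\rho} := \frac{1}{|\mathcal{G}_C^{\perp}|} \sum_{Q \in \mathcal{G}_C^{\perp}} Q \rho^* Q^\dagger.
\]
Since every $P \in \mathcal{G}_C$ commutes with every $Q \in \mathcal{G}_C^{\perp}$, we have $\tr{P\tilde{\rho}} = \tr{P\rho^*}$; in particular $\tr{C\tilde{\rho}} = \tr{C\rho^*}$ and the $\dot{\cD}_C$ box constraints carry over. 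On the other hand, for $Z_A \notin \mathcal{D}_C$ (equivalently $Z_A \notin \mathcal{G}_C$), non-degeneracy of the symplectic form furnishes some $Q \in \mathcal{G}_C^{\perp}$ anticommuting with $Z_A$, so the twirl forces $\tr{Z_A\tilde{\rho}}=0$. Hence $\tilde{\rho}$ is feasible for $\GW(C,\epsilon)$ with the same objective value, closing the equality.

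For the stability bound \Cref{equ:stability}, the direction $\GW(C,\dot{\cD}_C,\epsilon) \geq \GW(C,0)$ is immediate by monotonicity in $\epsilon$ combined with the first part. For the reverse, the plan is to take $\rho^*$ optimal for $\GW(C,\dot{\cD}_C,\epsilon)$ and construct a nearby feasible state for the $\epsilon=0$ problem. The natural first step is the orthogonal projection in the Pauli basis
\[
\hat{\rho} := \rho^* - \tfrac{1}{2^n}\sum_{A\in\dot{\cD}_C}\tr{Z_A\rho^*}\,Z_A,
\]
which by Pauli orthogonality satisfies the $\dot{\cD}_C$ constraints exactly and has unit trace, but may fail to be PSD. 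Positivity is then restored by mixing with an auxiliary state whose $\dot{\cD}_C$-Fourier coefficients already vanish---for instance the maximally mixed state $I/2^n$, giving $\tilde{\rho} := (1-t)\hat{\rho} + tI/2^n$ for a suitably chosen mixing parameter $t$.

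The hard part is calibrating $t$ so that positivity holds while the objective loss matches the claimed exponents $\epsilon^{1/3}|\dot{\cD}_C|^{1/6}$. A crude $\ell_\infty$-norm bound on $\Delta := \rho^* - \hat{\rho}$ only gives $\|\Delta\|_\infty \leq |\dot{\cD}_C|\epsilon/2^n$, which forces $t \gtrsim |\dot{\cD}_C|\epsilon$ and yields the much weaker bound $\mathcal{O}(|\dot{\cD}_C|\epsilon)$. The improvement should come from exploiting the tighter Hilbert--Schmidt bound $\|\Delta\|_2^2 \leq |\dot{\cD}_C|\epsilon^2/2^n$ via a H\"older-interpolation argument, combined with a finer spectral decomposition of $\hat{\rho}$ separating ``safe'' directions (where positivity is automatic) from ``exposed'' ones (where a correction is needed). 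Trade-offs of this type naturally produce exponents of the form $(\epsilon\sqrt{|\dot{\cD}_C|})^{1/3} = \epsilon^{1/3}|\dot{\cD}_C|^{1/6}$; making the analysis tight at this level---and in particular ensuring the projection onto the feasible set is done in a way that does not cost a full factor of $|\dot{\cD}_C|\epsilon$---is the technical crux.
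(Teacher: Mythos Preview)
Your twirling argument for \Cref{equ:main1_cons_eq1} is correct and takes a genuinely different route from the paper. The paper proves the equality via an induction on the iterates of the Hamiltonian Updates algorithm (\Cref{prop:how_to_constrain}): it shows that each intermediate Gibbs state $\sigma_t$ lies in the algebra $\mathcal{A}_C$ generated by $C$'s Pauli terms, whence $\tr{Z_B\sigma_t}=0$ automatically for every $Z_B\notin\dot{\cD}_C$, so those constraints are never violated and never trigger an update. Your argument reaches the same conclusion---optimal solutions can be taken to have Pauli support in $\mathcal{G}_C$---but by an explicit projection (the $\mathcal{G}_C^{\perp}$-twirl) rather than by tracking an algorithm. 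Your route is more direct and does not require invoking HU as a proof device; the paper's route has the side benefit of showing that the \emph{algorithmic} solution already has this structure, which it then reuses in the stability analysis.

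For \Cref{equ:stability}, however, there is a genuine gap. You correctly identify that the naive mix-with-identity construction only yields $\mathcal{O}(|\dot{\cD}_C|\,\epsilon)$, and you gesture at a H\"older-interpolation refinement to recover the $\epsilon^{1/3}|\dot{\cD}_C|^{1/6}$ scaling, but you do not carry it out---and the spectral-decomposition heuristic you sketch does not obviously produce a cube root. The paper does not build the perturbation from scratch. Instead (\Cref{prop:stability_sdp}) it proceeds in two steps. First, because the solution $\sigma$ lies in $\mathcal{A}_C$ (via HU; your twirl would give this just as well), the diagonal of $\sigma$ satisfies $\|M(\sigma)-I/2^n\|_2^2\le 2^{-n}|\dot{\cD}_C|\,\epsilon^2$, and hence by the bound $\|\cdot\|_{tr}\le 2^{n/2}\|\cdot\|_2$ one gets $\|M(\sigma)-I/2^n\|_{tr}\le \sqrt{|\dot{\cD}_C|}\,\epsilon$. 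Second, it invokes as a black box the continuity result \cite[Theorem~5]{henze2025solvingquadraticbinaryoptimization}, which states that a state whose diagonal is $\epsilon_1$-close to uniform in trace norm can be corrected to an exactly feasible state with objective loss $\mathcal{O}(\epsilon_1^{1/3}\|C\|)$. Composing these with $\epsilon_1=\sqrt{|\dot{\cD}_C|}\,\epsilon$ yields the claimed exponents. To complete your direct approach you would have to reprove that cube-root continuity bound, which is a separate and nontrivial piece of work not implied by the trade-off you describe.
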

 The precise statement and proof of this theorem are given in \Cref{prop:stability_sdp} in \Cref{sec:how_to_relax}, where we also show how to obtain a relative error $\epsilon$ in Eq.~\eqref{equ:stability} for some cases. The significance of~\Cref{thm:main1_cons} is that we can use it to find instances of matrices $C$, namely those for which $\dot\cD_C$ is small, such that the number of constraints we need to consider is smaller than what the naive implementation of $\GW$ would require: %
$2^n-1$ constraints. %
Furthermore,~\Cref{equ:stability} shows that solving the problem to a precision that is comparable to the size of $\dot\cD_C$ suffices to obtain a good solution to the original problem.
Hence, by virtue of ~\Cref{thm:main1_cons}, we identify instances $C\in\mathbb{R}^{2^n\times 2^n}$ --- namely, those for which $|\dot{\cD}_C|=\poly(n)$ --- for which one can approximate $\textrm{GW}(C)$ in time $\poly(n,\epsilon^{-1})$ but for which no method is known to compute $\textrm{QUBO}(C)$ in time less than doubly exponential in $n$, remarkably.
To prove the theorem, we resort to the HU procedure summarized in~\Cref{sec:HU} to solve the SDP and show that all the constraints that are not in $\dot\cD_C$ are satisfied automatically and hence do not need to be taken into account. Our proof of %
\Cref{equ:stability} also leverages %
this insight to obtain improved bounds on the total variation distance between the diagonal of a solution and the maximally mixed state, which can then be used to import other stability bounds for SDPs~\cite{henze2025solvingquadraticbinaryoptimization}.

One simple corollary of this result is:
\begin{cor}[Condition for spectral relaxation, informal]
If $\dot{\cD}_C=\varnothing$,%
, then:
\begin{align}
\uGW(C)=\lambda_{\max}(C)
\end{align}
with $\lambda_{\max}(C)$ the maximal eigenvalue of $C$. Furthermore, if $\lambda_{\max}$ has multiplicity $1$, then the corresponding (unnormalized) eigenvector $v_{\max}$ (with $\ell_2$ norm $\sqrt{2^n}$) is the solution to $\operatorname{QUBO}(C)$.
\label{cor}
\end{cor}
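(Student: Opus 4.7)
\textbf{Proof sketch for \Cref{cor}.}
The plan is to deduce both claims from \Cref{thm:main1_cons} applied with $\dot{\cD}_C=\varnothing$. Under this hypothesis the relaxed SDP defining $\GW(C,\dot{\cD}_C,\epsilon)$ imposes no constraint beyond $\tr{\rho}=1$ and $\rho\geq 0$, so by Rayleigh--Ritz its value is simply the largest eigenvalue of $C/\|C\|$. Invoking \Cref{equ:main1_cons_eq1} and passing $\epsilon\to 0$ (using elementary SDP-continuity in the constraint slack) identifies this quantity with the exact normalized GW value $\GW(C,0)$; converting back to $\uGW$ via the $2^n\|C\|$ rescaling recorded in \Cref{sec:GWrecap} then yields the spectral identity for $\uGW(C)$ stated in the corollary.

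For the eigenvector claim, assume $\lambda_{\max}(C)$ is simple. The unique maximizer of the unconstrained trace-one PSD problem used above is then $\rho^{\star}=|v\rangle\langle v|/\|v\|_2^2$ for $v$ a maximal eigenvector. Since the full (constrained) GW problem is a tightening of this unconstrained problem yet attains the same optimum, any GW maximizer must also maximize the unconstrained problem; by uniqueness it must coincide with $\rho^{\star}$. In particular $\rho^{\star}$ is GW-feasible, so $\langle i|\rho^{\star}|i\rangle=1/2^n$ for every $i$, which rearranges to $|v_i|^2=\|v\|_2^2/2^n$. Rescaling $v$ to have $\ell_2$ norm $\sqrt{2^n}$ therefore produces a $\{\pm1\}$-vector whose QUBO objective equals $2^n\lambda_{\max}(C)=\uGW(C)$; combined with the universal bound $\QUBO(C)\leq\uGW(C)$ from \Cref{eq:ugw_bound_1}, this certifies $v$ as an optimum of QUBO.

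The only nontrivial link in this argument is the implication ``unconstrained maximum equals constrained maximum, and the unconstrained maximizer is unique, therefore the unique maximizer is already constraint-feasible.'' This step is what forces the $\pm1$ pattern of $v$, and it is precisely where the multiplicity-one hypothesis is indispensable: without simplicity, the GW optimum could be realized by a higher-rank state supported on the top eigenspace whose eigenvectors need not have equal-magnitude entries, and no individual eigenvector could be singled out as QUBO-optimal in the same manner.
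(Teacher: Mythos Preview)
Your argument is correct. For the spectral identity you follow essentially the paper's route: with $\dot{\cD}_C=\varnothing$ the relaxed SDP has no diagonal constraints, so \Cref{thm:main1_cons} collapses $\GW(C)$ to the Rayleigh--Ritz maximum (you can in fact set $\epsilon=0$ directly rather than take a limit, since \Cref{equ:main1_cons_eq1} holds at $\epsilon=0$).

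For the eigenvector claim your route differs from the paper's. The paper argues via Gibbs states: it observes that $\sigma_\beta\propto e^{\beta C}$ is GW-feasible for every $\beta$ (because $\dot{\cD}_C=\varnothing$ forces $\tr{Z_A\sigma_\beta}=0$), and then lets $\beta\to\infty$ so that $\sigma_\beta\to|v\rangle\langle v|$ when the top eigenvalue is simple; feasibility passes to the limit, hence $|v_i|^2=2^{-n}$. You instead use the uniqueness-of-optimizer argument: any GW optimizer also attains the unconstrained optimum, and simplicity pins this down to $|v\rangle\langle v|$, which is therefore GW-feasible. Both arguments land on exactly the same conclusion (the diagonal of $|v\rangle\langle v|$ is flat). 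Your version is arguably more direct and avoids the limiting procedure; the paper's Gibbs-state version has the virtue of foreshadowing the algorithmic framework used elsewhere in the paper. A small implicit point worth making explicit in your write-up is that $C$ is real symmetric, so $v$ may be taken real, which is what turns $|v_i|=2^{-n/2}$ into $v_i\in\{\pm 2^{-n/2}\}$.
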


That is, for matrices with a trivial diagonal group and nondegenerate leading eigenvalue, we can find %
$\QUBO(C)$ in a time that is polynomial in the %
dimension of $C$ by diagonalization. 
Note that, although the matrices in question have a restricted structure, obtaining their maximal eigenvalue is still in general NP-Hard (in the input size, which is $\textrm{poly}(n)$): It is not difficult to see that Pauli-sparse matrices with only $X$ Pauli strings have trivial diagonal groups and correspond to solving NP-Hard combinatorial optimization problems. 
Moreover, as far as we know, there is no result showing that one can solve the corresponding QUBO in time polynomial in the dimension even if the diagonal group is trivial but the maximal eigenvalue is not unique. 
Interestingly, we will see numerical examples in \Cref{sec:numerics} that demonstrate that $\GW(C,\dot\cD_C)< \lambda_{\max}(C)$ even when $|\dot\cD_C|=3$%
, showing that already a small number of constraints can significantly improve the bound on $\QUBO(C)$ provided by the SDP when compared to the spectral bound.

The fact that we have fewer constraints when $\dot\cD_C$ %
is small can be leveraged to obtain large speedups to solve $\GW(C)$ using the HU framework:
\begin{proposition}[Complexity of solving $\GW$, informal]
Given a symmetric $C\in\mathbb{R}^{2^n\times 2^n}$, define
\begin{align}\label{equ:Gibbs_parametrized}
\sigma(\lambda)\propto \exp\left(\lambda_0\tfrac{C}{\|C\|}+\sum_{Z_A\in\dot\cD_C%
}\lambda_iZ_A\right),
\end{align}
where $\lambda\in\mathbb{R}^{|\mathcal{D}_C|}%
$ with $\|\lambda\|_{\ell_1}=\mathcal{O}(n|\mathcal{D}_C|\,\epsilon^{-1})$. Let $\operatorname{Cost}_C(\sigma(\lambda),\epsilon)$ and $\operatorname{Cost}_Q(\sigma(\lambda),\epsilon)$ be the worst-case cost to estimate the expectation value of all $Z_A\in\dot\cD_C$ and $\tfrac{C}{\|C\|}$ up to additive precision $\epsilon$ on either a classical or quantum computer. Then, we can solve up to additive precision $\epsilon$ in time:
\begin{align}
\mathcal{O}(n\epsilon^{-2}\textrm{poly}(|\mathcal{D}_C|)\operatorname{Cost}_C(\sigma(\lambda),\epsilon))\quad \text{and}\quad \mathcal{O}(n\epsilon^{-2}\textrm{poly}(|\mathcal{D}_C|)\operatorname{Cost}_Q(\sigma(\lambda),\epsilon))
\end{align}
on a classical and quantum computer, respectively.\footnote{The attentive reader likely noticed that in the statement above $\|\lambda\|_{\ell_1}$ scales with $|\dot\cD_C|$, whereas in most other statements it is independent of $|\dot\cD_C|$. The reason for that is that here we are solving $\GW$, and to relate the value of $\GW$ to that of the relaxed one, we need to rescale the precision by $|\dot \cD_C|$.}
\label{proposition}
\end{proposition}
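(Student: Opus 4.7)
The plan is to combine the constraint-reduction provided by \Cref{thm:main1_cons} with the HU framework summarized in \Cref{sec:HU}, so that the generic guarantees of HU apply directly to an SDP whose constraint set is indexed by $\dot\cD_C$ rather than by the full $2^n-1$ diagonal entries. Concretely, I would first invoke \Cref{thm:main1_cons} to replace the $\GW(C,\epsilon)$ problem by $\GW(C,\dot\cD_C,\epsilon')$; using the stability bound in \Cref{equ:stability}, the relationship between $\epsilon'$ (the precision at which we solve the relaxed SDP) and $\epsilon$ (the target precision for $\GW$ itself) introduces a $\operatorname{poly}(|\dot\cD_C|)$ factor. This is the mechanism behind the $\operatorname{poly}(|\mathcal{D}_C|)$ overhead appearing in the statement.

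Next, I would instantiate the HU framework from \Cref{eq:general_SDP} with $B_0=C/\|C\|$ and $B_A=Z_A$ for $A\in\dot\cD_C$ and $b_A=0$. The framework reduces the optimization to $\mathcal{O}(\log(\epsilon^{-1}))$ feasibility problems via binary search on the objective value $\mu$, and each feasibility problem is solved in $\mathcal{O}(n\epsilon^{-2}\log(\epsilon^{-1}))$ iterations. At each iteration, the algorithm maintains a dual vector $\lambda_t\in\mathbb{R}^{|\mathcal{D}_C|}$, prepares the Gibbs state $\sigma(\lambda_t)$ as in \Cref{equ:Gibbs_parametrized}, and evaluates the $|\dot\cD_C|+1$ observables $\tr{Z_A\sigma(\lambda_t)}$ and $\tr{(C/\|C\|)\sigma(\lambda_t)}$ to precision $\mathcal{O}(\epsilon)$. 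After rescaling for the $\operatorname{poly}(|\mathcal{D}_C|)$ precision degradation, the total cost is $\mathcal{O}(n\epsilon^{-2})$ Gibbs-state preparations, each followed by $\operatorname{poly}(|\mathcal{D}_C|)$ expectation-value estimations, whose per-call cost is by assumption $\operatorname{Cost}_{C}(\sigma(\lambda),\epsilon)$ classically or $\operatorname{Cost}_{Q}(\sigma(\lambda),\epsilon)$ quantumly.

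To justify the stated bound on $\|\lambda\|_{\ell_1}$, I would observe that the HU update rule $\lambda_{t+1}=\lambda_t+\tfrac{\epsilon'}{4}e_i$ contributes $\epsilon'/4$ to $\|\lambda\|_{\ell_1}$ per iteration; with $\mathcal{O}(n(\epsilon')^{-2})$ iterations and $\epsilon'=\Theta(\epsilon/|\mathcal{D}_C|)$ dictated by the stability argument, the total is $\mathcal{O}(n|\mathcal{D}_C|\epsilon^{-1})$, matching the footnote. Finally, I would need to verify that the observables to be estimated are indeed $Z_A$ for $A\in\dot\cD_C$ and $C/\|C\|$, both of operator norm $\leq 1$ as required by the HU framework.

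The main obstacle, in my view, is not the HU machinery itself (which is a black box) but the bookkeeping of how precisions propagate between the three layers (binary search on $\mu$, feasibility iterations, and individual observable estimations), together with tracking the $|\mathcal{D}_C|$-dependent constants that arise when one transfers the stability bound of \Cref{equ:stability} into a relationship between the target precision for $\GW(C,0)$ and the working precision for $\GW(C,\dot\cD_C,\epsilon')$. Once this accounting is done cleanly, the cost expression reduces to iteration count times per-iteration cost, yielding the claimed $\mathcal{O}(n\epsilon^{-2}\operatorname{poly}(|\mathcal{D}_C|)\operatorname{Cost}_{C/Q}(\sigma(\lambda),\epsilon))$ runtimes.
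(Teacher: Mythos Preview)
Your proposal is correct and follows essentially the same route as the paper: reduce the constraint set via \Cref{thm:main1_cons}, instantiate the HU framework of \Cref{sec:HU} with $B_0=C/\|C\|$ and $B_i=Z_A$ for $A\in\dot\cD_C$, and absorb the precision degradation from the stability bound into a $\operatorname{poly}(|\mathcal D_C|)$ overhead. One small caveat: your choice $\epsilon'=\Theta(\epsilon/|\mathcal D_C|)$ is not literally what \Cref{equ:stability} dictates (that bound has an $\epsilon^{1/3}|\dot\cD_C|^{1/6}$ scaling, which would give a different polynomial), but since the proposition is informal and only claims a $\operatorname{poly}(|\mathcal D_C|)$ factor---and the footnote itself speaks of a linear rescaling---this discrepancy is harmless.
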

Importantly, as we discuss in more detail in~\Cref{sec:numerics}, there are examples (see last line of~\Cref{tab:rigorous_speedups}) of matrices $C$ such that, either $\operatorname{Cost}_C(\sigma(\lambda),\epsilon)$ or $\operatorname{Cost}_Q(\sigma(\lambda),\epsilon)$, $|\mathcal{D}_C|$ are \emph{polynomial in $n$}. Thus, for these instances, our work achieves \textbf{exponential speedups compared to state-of-the-art} solvers. To demonstrate this point and the power of our method, in \Cref{sec:numerics} we approximately solve SDPs for an instance in dimension $2^{50}$ on a desktop computer, which is several orders of magnitude larger than what is achievable using state-of-the-art methods based on sketching~\cite{Yurtsever2021}, which can handle up to dimension $\sim 2^{12}$.

That being said, our rigorous results are restricted to instances where both ${|}\mathcal{D}_C{|}$ and $\operatorname{Cost}_C(\sigma(\lambda),\epsilon)$ are polynomial in $n$, which is certainly not a generic property. Some examples for classical computers are commuting models on a tree with tensor networks~\cite{Bauls2023} and on a quantum computer we can prepare the more general class of non-commutative, $1D$ models using the methods of~\cite{Bilgin2010}, albeit only for $\epsilon=\Omega(1)$.
In \Cref{sec:classification_graphs}, we start the classification of which graphs satisfy such constraints. In contrast, methods like that of~\cite{Yurtsever2021} just require sparsity of $C$, i.e. that the number of nonzero entries per row of $C$ remains small. Note that is a weaker condition than Pauli sparsity.

Our rigorous findings are informally summarized in~\Cref{tab:rigorous_speedups}. We also refer the reader to \Cref{sec:pauli_sparse} to a discussion on which families of graphs are Pauli-sparse.

\begin{table}[!ht]
    \centering
    \setlength{\tabcolsep}{5pt}
    \begin{tabular}{p{0.26\textwidth} p{0.34\textwidth} p{0.34\textwidth}}
        \toprule
        \bf Algorithm & Classical & Quantum \\
        \midrule
        \bf Assumptions on $C\in\mathbb{R}^{2^n\times 2^n}$ & $|\mathcal{D}_C| = \poly(n)$. & $|\mathcal{D}_C| = \poly(n)$.  \\
        \midrule
        \bf Assumptions on Gibbs state &Efficient classical algorithm to compute $\tr{\sigma(\lambda)O}$ for $O\in\{C/\|C\|,\dot\cD_C\}$ up to $\sim \epsilon$. &Efficient quantum algorithm to compute $\tr{\sigma(\lambda)O}$ for $O\in\{C/\|C\|,\dot\cD_C\}$ up to $\sim \epsilon$. \\
        \midrule
        \bf Conclusion & Solve $\GW(C)$ to additive $\epsilon$ in $\poly(n,\epsilon^{-1})$ time ($[\GW(C, \epsilon)$, \Cref{equ:main1_cons_eq1}]). & Solve $\GW(C)$ to additive $\epsilon$ in $\poly(n,\epsilon^{-1})$ time ($[\GW(C, \epsilon)$, \Cref{equ:main1_cons_eq1}]). \\
        \midrule
        \bf Examples & $C$ and $\dot\cD_C$ \emph{local, commuting} Hamiltonian on tree. & $C$ and $\dot\cD_C$ local, $1D$-Hamiltonian, $\epsilon=\Omega(1)$. \\    
        \bottomrule
    \end{tabular}
        \caption{Rigorous speedups for computing the normalized Goemans–Williamson value $\GW(C)$ under a small diagonal algebra and efficient Gibbs state preparation for classical and quantum computers. See~\Cref{equ:Gibbs_parametrized} for a definition of $\sigma(\lambda)$. For the classical case,  tensor network algorithms~\cite{Bauls2023} can compute the properties of the Gibbs states with the claimed efficiency. For the quantum case,  the results of~\cite{Bilgin2010} can be used to prepare the quantum Gibbs states. }
    \label{tab:rigorous_speedups}
\end{table}

    \subsection{Solving Goemans-Williamson: steps into making our results practically relevant}

    Although, to the best of our knowledge, our results give the first examples of nontrivial SDPs corresponding to convex relaxations of combinatorial optimization problems that can be solved in polylogarithmic time given an appropriate succinct classical representation, it is unclear at this stage to what extent such instances are of practical relevance. Thus, we also consider instances that have a structure similar to ours, but known practical relevance.
    
    To this end, we use our rigorous results as a departure point to obtain efficient and effective relaxations of Goemans-Williamson. Indeed, it should be noted that the SDP in \Cref{def:relaxed_SDP} is always be an upper bound on $\GW(C)$ regardless of the choice of the subset of constraints $S$. However, in the case where ${|}\mathcal{D}_C{|}$ is too large, using our \Cref{thm:main1_cons} to enforce all the relevant constraints might be too costly. Thus, in \Cref{sec:how_to_relax} we give various procedures on how to pick the set $S$ in an effective way and benchmark these techniques numerically in \Cref{sec:numerics}. Our main insight is that Pauli $Z$ strings that can be formed by the product of few Pauli strings in the decomposition of $C$ should be the first to be tried as a set of constraints.
    
    Furthermore, we have the following result on which matrices $C$ can be approximately decomposed in a sparse way in the Pauli basis. Results like this are well-known e.g. in the Hamiltonian simulation literature~\cite{Campbell2019}, as they are the backbone of randomized compilation methods like qDRIFT.

    \begin{proposition}[Pauli sparsification]\label{thm:pauli_sparsification1}
    Let $C\in\mathbb{R}^{2^n\times 2^n}$ with Pauli expansion
    \begin{align}\label{equ:Pauli_sparsification}
    C=\sum_{(x,z)\in\{0,1\}^{2n}}c_{(x,z)}\,P_{(x,z)}.
    \end{align}
    Then, for every $\epsilon>0$, there exists a matrix $\tilde{C}$ that is $\mathcal{O}(n\|C\|_{P,\ell_1}^2\epsilon^{-2})$ Pauli-sparse and such that:
    \begin{align}
    \|C-\tilde{C}\|\leq \epsilon.
    \end{align}
    Furthermore, we can obtain such a $\tilde{C}$ with probability of success $2/3$ given $\mathcal{O}(n\|C\|_{P,\ell_1}^2\epsilon^{-2})$ samples from the distribution $p$ on Paulis with density  $p(x,z)=\frac{|c_{(x,z)}|}{\|C\|_{P,\ell_1}}$.
    \end{proposition}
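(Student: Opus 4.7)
The plan is to build $\tilde{C}$ by a qDRIFT-style importance sampling estimator and then control its deviation from $C$ in operator norm via a matrix concentration inequality. Concretely, draw $m$ i.i.d.\ samples $(x_j,z_j)\sim p$ and set
\[
\tilde{C} \;=\; \frac{\|C\|_{P,\ell_1}}{m}\sum_{j=1}^m \mathrm{sign}\bigl(c_{(x_j,z_j)}\bigr)\,P_{(x_j,z_j)},
\]
which is a sum of $m$ (not necessarily distinct) Pauli strings, hence at most $m$-Pauli-sparse. A direct calculation using the definition of $p$ shows that $\mathbb{E}[\mathrm{sign}(c_{(x,z)})P_{(x,z)}] = C/\|C\|_{P,\ell_1}$, so $\mathbb{E}[\tilde{C}]=C$ and the error is $\tilde{C}-C=\sum_j X_j$, where the $X_j$ are i.i.d.\ zero-mean self-adjoint random matrices.

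The second step is to bound $\|\tilde{C}-C\|$ by the matrix Bernstein inequality. Since each Pauli string satisfies $P_{(x,z)}^2 = I$ and $\|C\|\leq\|C\|_{P,\ell_1}$, every summand is uniformly bounded as $\|X_j\|\leq 2\|C\|_{P,\ell_1}/m$, and a short computation using $\mathbb{E}[X_j^2]=(\|C\|_{P,\ell_1}/m)^2 I - C^2/m^2$ gives the variance proxy $\bigl\|\sum_{j=1}^m \mathbb{E}[X_j^2]\bigr\|\leq \|C\|_{P,\ell_1}^2/m$. Matrix Bernstein on $2^n\times 2^n$ Hermitian matrices then yields
\[
\mathbb{P}\bigl[\|\tilde{C}-C\|>\epsilon\bigr] \;\leq\; 2\cdot 2^n\,\exp\!\left(-\,\frac{m\,\epsilon^2/2}{\|C\|_{P,\ell_1}^2 + \tfrac{2}{3}\|C\|_{P,\ell_1}\epsilon}\right),
\]
and choosing $m = \Theta\bigl(n\|C\|_{P,\ell_1}^2\epsilon^{-2}\bigr)$ drives the right-hand side below $1/3$, giving both the claimed sparsity and the claimed success probability. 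Sampling can be done directly from $p$ in time polynomial in the description of $C$, so the only resource cost is the $m$ samples.

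The part that requires the most care is the variance bound: one must retain the $-C^2/m^2$ subtraction and use $P^2=I$ together with $\|C\|\leq\|C\|_{P,\ell_1}$ to avoid picking up a spurious factor involving $2^n$, which would ruin the polylogarithmic scaling. Everything else is routine: the dimension factor $2^n$ inside the Bernstein bound contributes only a $\log(2^n)=n$ inside the logarithm, which is exactly the source of the extra $n$ in the sample complexity $\mathcal{O}(n\|C\|_{P,\ell_1}^2\epsilon^{-2})$. Standard success-probability amplification (by taking the median-of-means of several independent estimators, or by increasing $m$ by a logarithmic factor) upgrades the $2/3$ success probability to $1-\delta$ at the cost of an extra $\log(1/\delta)$ factor.
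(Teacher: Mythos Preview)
Your proposal is correct and follows essentially the same approach as the paper: the same importance-sampling estimator $\tilde C=\frac{\|C\|_{P,\ell_1}}{m}\sum_j\mathrm{sign}(c_{(x_j,z_j)})P_{(x_j,z_j)}$, followed by a matrix concentration bound whose dimension factor $2^n$ produces the $n$ in the sample count. The only cosmetic difference is that the paper invokes the matrix Hoeffding inequality (using the almost-sure bound $X^2\le\|C\|_{P,\ell_1}^2 I$ directly) rather than matrix Bernstein, which spares the separate variance computation but gives the same $\mathcal O(n\|C\|_{P,\ell_1}^2\epsilon^{-2})$ scaling.
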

    To prove this, we resort to standard matrix concentration inequalities~\cite{Tropp2011}.
    One example of a family of graphs for which we can use \Cref{thm:pauli_sparsification1} to efficiently sparsify are so-called Kronecker graphs~\cite{kronecker}. We discuss why this is the case and give their precise definition in \Cref{sec:kronecker_graphs}. But here we note that such graphs find various applications in modelling networks in a highly efficient manner~\cite{kronecker}.
    
    Thus, these families \emph{provide us with practically relevant graphs that are approximately Pauli-sparse and for which we can find such a representation efficiently.} In principle it is possible to generate Kronecker graphs that have small ${|}\mathcal{D}_C{|}$, but we test more generally how our methods work for such graphs in \Cref{sec:numerics}.
    
    Unfortunately, we are not aware of classes of Kronecker graphs for which $\operatorname{Cost}_C$ is in general $\poly(n)$. As such, they give a natural family of Gibbs states to test out recently developed dissipative preparation methods~\cite{chen_generators,Ding2025} to see if it is possible to reach an exponential advantage for solving such semidefinite relaxations on quantum computers.

\subsection{Heuristic rounding algorithms}
As mentioned in \Cref{sec:notations}, one of the attractive features of the GW SDP is that the solution to the SDP can be used to obtain a feasible point of the QUBO which performs well. Thus, this way it is possible to approximate the optimal value from above and below and obtain good solutions efficiently. We now describe procedures to obtain feasible points of the QUBO in our setting and benchmark them, focusing on classical computers for now.
Although we leave obtaining provable performance guarantees to future work, in \Cref{sec:numerics} we show that our methods perform well in practice and we can still benchmark the feasible point we obtain in time $\poly(n)$ in some cases. But before we give our methods, the attentive reader should have already noted that it is impossible to specify an arbitrary feasible point of the QUBO in time $\poly(n)$, as it requires $2^n$ bits. Instead we focus on estimating quantities of the form $2^{-n}\bra{x}C\ket{x}$ up to an additive precision $\epsilon>0$ for some point $x\in\{\pm1\}^{2^n}$ we obtain from our SDP solution. Thus, even though we cannot write out the explicit solution $X$, we can still estimate expectation values $2^{-n}\bra{x}C\ket{x}$ it achieves. 
Now, let us start with our method.

\paragraph{Rounding on classical computers}
Recall that the usual randomized rounding technique of Goemans-Williamson to obtain good feasible points is to take a solution $\rho$ to the SDP and consider the random vector $\operatorname{sign}(\sqrt{\rho}\ket{N})$, where $\ket{N}$ is a vector with i.i.d. normal entries. For some of the methods we use to classically simulate the Gibbs states $\rho$, i.e. tensor networks~\cite{Bauls2023}, it is feasible to compute $\bra{i}\sqrt{\rho}\ket{\psi}$ to high precision as long as the state $\ket{\psi}$ is not ``too complex'', i.e. a MPS of polynomial bond dimension. 
Inspired by this observation, our randomly rounded vectors for this case are of the form $\operatorname{sign}(\sqrt{\rho}\,U\ket{0})$, for $U$ being a sufficiently shallow random real circuit s.t. we can still compute $\bra{i}\sqrt{\rho}\,U\ket{0}$ efficiently, where $\rho$ is the solution to the SDP. Note that $\rho$ is given as a Gibbs state, so computing $\sqrt{\rho}$ from it is trivial: we only need to halve the inverse temperature. In addition, as we are ultimately interested only in the sign, it is not necessary to normalize the trace of $\sqrt{\rho}$, i.e. no partition-function estimation is required.

Let us discuss the intuition behind this construction of randomized rounding solutions. First, note that the usual rounding strategy would correspond to taking $U$ uniformly at random from the orthogonal group, as then $U\ket{0}$ would be uniformly distributed on the sphere and we are back in the usual setup of GW\footnote{note that the distribution is invariant under rescaling the random vector by a positive number, and a normalized Gaussian is uniformly distributed. Thus, it does not matter if we round with a Gaussian or uniformly distributed vector}. But producing such an $U$ incurs an exponential cost. However, a shallow quantum circuit is an approximate design~\cite{chen2024incompressibilityspectralgapsrandom} and hence approximate the uniform distribution over orthogonal matrices. On top of that, %
the closer the solution of the SDP is %
to a pure state, i.e. $\rho\simeq\ketbra{\psi}$\footnote{Note that, ideally, one wants $\ket{\psi}$ to be a phase state proportional to a string $x$ that solves Eq. \eqref{eq:original_problem}.}, the more irrelevant %
the random vector used in the rounding becomes%
, as $\operatorname{sign}(\bra{\psi}U\ket{0}\ket{\psi})$ depends on $U$ by an irrelevant global phase. 

Now that we have motivated our choice of rounding algorithms, we show how to use Monte Carlo techniques to extract expectation values from the rounded point efficiently:
\begin{proposition}
Let $C\in\mathbb{R}^{2^n\times 2^n}$ be a Pauli-sparse matrix with
\begin{align}
C=\sum\limits_{j=1}^m\alpha_jP_{(x_j,z_j)}
\end{align}
and $\rho$ a quantum state on $n$ qubits. Given an ensemble of random orthogonal matrices $U$, let $E(\rho)$ be the worst-case cost of computing $\bra{i}\sqrt{\rho}\,U\ket{0}$ up to a constant relative precision, for $\ket{i}$ a computational basis states and $U$ distributed according to $\mu$%
. Moreover, let $x=\operatorname{sign}(\sqrt{\rho}\,U\ket{0})$. Then, we can estimate $2^{-n}\bra{x}%
C\ket{x}$ up to an additive error $\epsilon$ with probability of success $2/3$ in time 
\begin{align}\label{equ:complexity_bound_density2}
\mathcal{O}(\epsilon^{-2}|S|%
E(\rho)\|C\|_{P,\ell_1}^2).
\end{align}
\end{proposition}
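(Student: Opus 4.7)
The idea is to expand the quadratic form in the Pauli basis, express each contribution as a uniform average over computational basis strings, and then use Monte Carlo sampling together with the oracle for $\bra{i}\sqrt{\rho}\,U\ket{0}$ that computes the entries of $x$.

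First I would exploit the action of each Pauli string in the computational basis: writing $P_{(x_j,z_j)}\ket{i'} = \phi_j(i')\,\ket{i'\oplus x_j}$ with phase $\phi_j(i') = i^{x_j\cdot z_j}(-1)^{z_j\cdot i'}$, and using that $x_i\in\{\pm 1\}$, one gets $\bra{x}P_{(x_j,z_j)}\ket{x}=\sum_{i'}\phi_j(i')\,x_{i'}x_{i'\oplus x_j}$. Substituting into the Pauli decomposition of $C$ yields
\begin{equation}
    2^{-n}\bra{x}C\ket{x}\;=\;\sum_{j=1}^{|S|}\alpha_j\,E_j,\qquad E_j:=2^{-n}\sum_{i'\in\{0,1\}^n}\phi_j(i')\,x_{i'}x_{i'\oplus x_j},
\end{equation}
with $|E_j|\leq 1$. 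Since $C$ is real symmetric, the $|S|$ imaginary contributions cancel, so each $E_j$ is effectively real, and we can restrict the Monte Carlo to estimating the real part.

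Next, for each $j$, I would construct an unbiased estimator $\hat E_j$ by drawing $N$ indices $i'\in\{0,1\}^n$ uniformly at random and averaging $\phi_j(i')\,x_{i'}x_{i'\oplus x_j}$. Each individual summand has modulus at most $1$, so Hoeffding's inequality guarantees $|\hat E_j-E_j|\leq \epsilon/\|C\|_{P,\ell_1}$ with probability $1-\delta/|S|$ whenever $N=\mathcal{O}\bigl(\|C\|_{P,\ell_1}^2\,\log(|S|/\delta)\,\epsilon^{-2}\bigr)$. A union bound over the $|S|$ Pauli terms then ensures that all $\hat E_j$'s are accurate simultaneously with probability at least $2/3$. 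Combining via the triangle inequality,
\begin{equation}
    \Bigl|\sum_{j}\alpha_j\hat E_j-\sum_j\alpha_j E_j\Bigr|\;\leq\;\sum_j|\alpha_j|\cdot\frac{\epsilon}{\|C\|_{P,\ell_1}}\;=\;\epsilon,
\end{equation}
as desired.

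Finally, I would account for the per-sample cost. Each draw requires the two values $x_{i'}=\operatorname{sign}(\bra{i'}\sqrt{\rho}\,U\ket{0})$ and $x_{i'\oplus x_j}$, each costing $E(\rho)$ by assumption, plus $\mathcal{O}(n)$ arithmetic to compute $\phi_j(i')$ and the XOR of two bit strings. Multiplying the number of samples per term by $|S|$ and by the per-sample cost gives the claimed runtime $\mathcal{O}(\epsilon^{-2}\,|S|\,E(\rho)\,\|C\|_{P,\ell_1}^2)$. The main obstacle is the one delicate point where $\bra{i}\sqrt{\rho}\,U\ket{0}$ is close to zero, so that a merely relative-precision oracle may flip the sign of $x_i$; however, this is exactly the worst case already subsumed in the definition of $E(\rho)$, and for generic rounding unitaries $U$ the set of ``ambiguous'' bitstrings has negligible measure, so this does not affect the Monte Carlo estimator beyond absorbable constants.
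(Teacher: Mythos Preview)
Your proposal is correct and follows essentially the same Monte Carlo idea as the paper, but the packaging is slightly different and worth comparing. The paper defines a \emph{single} random variable: draw $i\in[2^n]$ uniformly and set $X=x_i\sum_k C_{ik}x_k$. Because each Pauli $P_{(x_j,z_j)}$ contributes exactly one nonzero entry per row, this becomes $X=x_i\sum_{j=1}^{m}\alpha_j\phi_j(i)\,x_{i\oplus x_j}$, so $|X|\le\|\alpha\|_{\ell_1}$ and one application of Hoeffding gives $\mathcal{O}(\|\alpha\|_{\ell_1}^2\epsilon^{-2})$ samples, each requiring $m$ oracle calls --- no union bound needed. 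Your approach estimates each $E_j$ separately and then union-bounds, which picks up an extraneous $\log|S|$ factor relative to the stated bound. If you simply reuse the \emph{same} random indices $i'$ across all $j$ and sum before averaging, your estimator collapses to the paper's $X$ and the log factor disappears. Apart from this cosmetic point, the argument, the per-sample cost analysis, and the handling of the sign oracle all match the paper.
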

To show this result, we use a Monte Carlo algorithm that allows one to sample from a random variable with mean $2^{-n}\bra{x}C\ket{x}$. As in \Cref{thm:pauli_sparsification1}, we see that $\|C\|_{P,\ell_1}$ and the complexity of computing properties of the corresponding Gibbs state govern the complexity of the procedure.
In principle, the same strategy can be used to compute quantities of the form $2^{-n}\bra{x}A\ket{x}$ for any matrix $A$ that is sparse and we can determine the nonzero entries of a row of $A$ in time $\poly(n)$ or, more generally, if we can express $A$ as a matrix product operator~\cite{Bauls2023}.
Furthermore, we can use tensor networks to obtain examples of instances where $E(\rho)=\poly(n)$, such as in the case of $1D$ commuting Gibbs states. As evidence of the effectiveness of this method, we numerically show in \Cref{sec:numerics} that it produces a solution with an approximation ratio of at least $\sim 0.85$ to a non-trivial QUBO instance  with $2^{50}$ variables, see also \Cref{sec:numerics} for details on the experiments. This is achieved by comparing the value obtained for $\GW(C,S,\epsilon)$ with that corresponding to its rounded solution $x$ %
and seeing that they are close to each other. Furthermore, we show that, for that instance, the value given by the spectral bound overshoots considerably. That is, our GW relaxation is improving the performance of the bound considerably and hence %
not solving a ground state problem in disguise.
To the best of our knowledge, no SDP has ever been %
solved at such a scale prior to this work, and we are not aware of other strategies that would yield a feasible point of the QUBO with a similar performance, see Fig.~\ref{fig:sdp_1d_cluster} for details.

\subsection{Final discussion and outlook}\label{sec:previous_Work}
We showed that, by imposing certain structure to the cost matrix $C$ of a QUBO instance of size $D=2^n$, it is possible to solve its standard Goemans-Williamson SDP relaxation in time %
$\poly(n,\epsilon^{-1})$ on either quantum or classical computers. 
The required structure corresponds to cost matrices that are sparse in the Pauli basis and for which their generated diagonal group is small --- i.e., for which the group generated by its Pauli terms contains $\poly(n,\epsilon^{-1})$ diagonal matrices. 
This class of instances enables the --- to the best of our knowledge --- first end-to-end exponential speedups for convex optimization under standard input models by any quantum or quantum-inspired methods. Moreover, the class is rich enough for the corresponding QUBO problems themselves to in principle be computationally hard. In other words, no method is known able to solve general QUBO problems within this class in time less than doubly exponential in $n$, remarkably.

The core of our algorithms is a relaxation to the Goemans-Williamson formulation itself (i.e., an extra relaxation to the standard one from QUBOs) that reduces the resulting SDP to the estimation of local properties of quantum many-body Gibbs states. 
We can then exploit the fact that powerful classical algorithms such as tensor-network methods
or quantum algorithms based on Gibbs-state sampling allow us to efficiently compute these quantities for certain instances. 
In fact, the understanding of quantum Gibbs samplers is currently under intense development~\cite{chen_generators}, and we expect that the advent of  quantum computers will make it possible for them to further enlarge the set of instances for which our methods deliver an exponential speedup.
Furthermore, the fact that our results rely only on finite-temperature quantum Gibbs states may make them better suited to near-term, noisy quantum computers~\cite{bharti2022noisy}.
Besides, even for general instances with small diagonal groups but where local properties of quantum Gibbs states cannot be efficiently estimated, our relaxation still delivers interesting polynomial runtime speedups over the GW relaxation via standard SDP solvers, in both the classical and quantum cases.

As for what concerns QUBO solvers, we introduced a simplified variant of randomized rounding for quantum-inspired algorithms that maps the obtained SDP solutions to approximate solutions of the original, underlying QUBO problem itself. Since the length of such solution bit-strings is exponential in $n$, we do not aim at writing them down. In contrast, the main relevance of our rounding scheme is that it provides a means to efficiently lower-bound the $\QUBO$ optimal value (and an upper bound to it comes from our SDP relaxation). In addition, it also allows one to estimate expected values of arbitrary matrices over the solution bit-string. 
We note that randomized rounding is rarely used in practice, since tailored heuristic QUBO solvers routinely outperform the Goemans-Williamson relaxation plus randomized rounding.
However, we believe that, for the class of instances considered here, %
our approach is favorable over any other known method. More precisely, we are not aware of any other method that can can even get close to the performance of ours for the enormous problem sizes we consider, as mentioned next.

Unlike previous classical implementations~\cite{henze2025solvingquadraticbinaryoptimization}, we show that our quantum-inspired algorithms are practical by numerically solving  instances of huge sizes (up to $D=2^{50}$), which are certainly beyond the capability of any other quantum or classical solver. 
For this, we restrict to a class of $C$'s given by \emph{physically motivated $n$-qubit Hamiltonians}. We can then import various results about regimes in which the corresponding Gibbs states can be simulated in polynomial time in $n$. This way we also bypass the various lower bounds on quantum SDP solvers that demand a runtime 
polynomial in $D$~\cite{vanApeldoorn2020}. However, this comes at the price of requiring structural assumptions beyond just Pauli sparsity and small generated diagonal group. Yet, the resulting class is still not known to be computationally easy, and we provide numerical evidence against that possibility.

Identifying practically relevant instances whose underlying parameters, input and output model fall into the regime where quantum algorithms excel in an end-to-end sense is one of the main challenges of the field of quantum computing~\cite{Dalzell2025-ul,Aaronson2015}. Furthermore, previous work cast serious doubts as to whether quantum solvers based on HU would ever be practical~\cite{henze2025solvingquadraticbinaryoptimization}. 
In this regard, one of the key open question that our work offers is understanding to what extent the structure required to achieve large speedups with our methods is present in real-world applications. The example of Kronecker graphs discussed here gives a promising direction to make our work practically relevant. Further investigating how to optimally sparsify %
a graph %
in terms of Pauli matrices is likely to play an important role in that as well. In addition, developing quantum or quantum-inspired algorithms with a better scaling in the precision $\epsilon^{-1}$ would be essential to improve the practicality of our scheme.

From a more fundamental perspective, in turn, it would be important to either prove computational-complexity hardness results for solving the QUBO problems related to Pauli-sparse matrices or to find out if other solvers can also exploit the underlying structure of Pauli matrices. 
The former would give a stronger theoretical indication that our assumption of this additional structure does not render the instances ``easy", whereas the latter would weaken it.
Alternatively, another interesting question is whether there exist other classes of convex optimizations that admit similar relaxations in terms of quantum many-body problems.
Finally, the fact that we found a novel application outside of Physics for algorithms to simulate finite-temperature quantum Gibbs states is an important conceptual contribution on its own. This has been an area of intense research recently~\cite{chen_generators,Ding2025,gilyen2024quantumgeneralizationsglaubermetropolis,jiang2024quantummetropolissamplingweak}%
. 

In summary, we have identified instances of convex optimization problems that allow for large quantum and quantum-inspired speedups, by mapping such problems to estimating local properties of quantum many-body Gibbs states. Given the importance of convex optimizations and the vast array of techniques to simulate quantum Gibbs states, both on quantum and classical computers, %
we believe that the  bridge between the two areas established by our work %
can open a novel, highly fertile ground for future investigations.

\section{Details of GW Relaxations}\label{sec:further_relaxing}
As discussed before, the main source of our speedups is that we identify instances of $\GW$ for which we can exponentially reduce the number of constraints while only requiring to prepare physically motivated Gibbs states.
From this starting point, we add additional relaxations, which we first motivate with the following simple observation: 

\begin{lemma} For a density matrix $\rho \in \mathbb{C}^{2^{n} \times 2^{n}}$, we have that $\langle i|\rho|i\rangle=2^{-n} \,\, \forall i $
is equivalent to $\tr{Z_{A}\rho}=0 \quad \forall A \subseteq [n] \setminus \{\varnothing\}$.
\end{lemma}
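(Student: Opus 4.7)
The plan is to reduce the equivalence to a statement about the diagonal entries of $\rho$ alone, then invoke the fact that the Pauli $Z$-strings form an orthogonal basis for diagonal matrices. First I would observe that $Z_A$ is diagonal in the computational basis for every $A\subseteq[n]$, so
\begin{align*}
\tr{Z_A\rho}=\sum_{i\in\{0,1\}^n}(-1)^{\,|A\cap \mathrm{supp}(i)|}\langle i|\rho|i\rangle,
\end{align*}
i.e.\ the quantities appearing in both sides of the equivalence depend only on the diagonal vector $d\in\mathbb{R}^{2^n}$ with $d_i=\langle i|\rho|i\rangle$. In particular, the problem is purely about this $2^n$-dimensional vector, not about the full density matrix.

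Next I would use the standard orthogonality relation $\tr{Z_A Z_B}=2^n\,\delta_{A,B}$ for $A,B\subseteq[n]$, which shows that $\{Z_A\}_{A\subseteq[n]}$ is an orthogonal basis (with respect to the Hilbert--Schmidt inner product) for the $2^n$-dimensional space of diagonal $2^n\times 2^n$ matrices. Writing $D=\mathrm{diag}(d)$ and expanding in this basis gives
\begin{align*}
D=\frac{1}{2^n}\sum_{A\subseteq[n]}\tr{Z_A D}\,Z_A=\frac{1}{2^n}\sum_{A\subseteq[n]}\tr{Z_A\rho}\,Z_A,
\end{align*}
which is nothing but the Walsh--Hadamard expansion of the diagonal. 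Equivalently, the linear map $d\mapsto(\tr{Z_A\rho})_{A\subseteq[n]}$ is (up to normalization) the Walsh--Hadamard transform, hence invertible.

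From here the two directions are immediate. If $\langle i|\rho|i\rangle=2^{-n}$ for all $i$, then $D=2^{-n}I$, so $\tr{Z_A\rho}=2^{-n}\tr{Z_A}=0$ for every nonempty $A$ (since $Z_A$ is traceless when $A\neq\varnothing$). Conversely, suppose $\tr{Z_A\rho}=0$ for every nonempty $A$. Combined with the density-matrix normalization $\tr{\rho}=1=\tr{Z_\varnothing\rho}$, the expansion above collapses to $D=2^{-n}I$, giving $\langle i|\rho|i\rangle=2^{-n}$ for all $i$. There is no real obstacle; the only point worth flagging is that the $A=\varnothing$ constraint is supplied for free by $\tr{\rho}=1$, which is what makes the equivalence work after restricting to nonempty $A$ on the right-hand side.
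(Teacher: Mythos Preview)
Your proof is correct and takes essentially the same approach as the paper: both arguments rest on the observation that the $2^n$ Pauli $Z$-strings form a basis for diagonal matrices, so the diagonal of $\rho$ is determined by the numbers $\tr{Z_A\rho}$. If anything, your write-up is more complete than the paper's, which only sketches one direction explicitly; your explicit use of the orthogonal expansion and the $A=\varnothing$ normalization makes both directions transparent.
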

\begin{proof} Note that the Pauli strings comprised only of Pauli $Z$ and Pauli $I$ are linearly independent and there are $2^{n}$ of them. Furthermore, they are all diagonal. As there are also $2^{n}$ diagonal matrices, we conclude that they form a basis of diagonal matrices. As they only have $\pm 1$ entries, it follows that for a matrix $\rho$, $\text{diag}(\rho)=\frac{1}{2^n}$ implies that $\tr{Z_{A}\rho}=0
$ for $A \neq \varnothing$, as $0=\tr{Z_{A}}=\tr{Z_{A}\rho}$.
\end{proof}

This immediately suggests a further relaxation of GW, that was also considered in~\cite{Patti2023} and we already defined in~\Cref{def:relaxed_SDP}, which we repeat here for the reader's convenience.
\begin{definition}[Relaxed Goemans-Williamson]
Given $S \subseteq 2^{[n]}\backslash\{\varnothing\}, \epsilon>0$ and a symmetric $C\in\mathbb{R}^{2^n\times 2^n}$, consider the SDP
\begin{equation}\label{equ:relaxed_SDP}
\begin{aligned}
\max & \tr{\frac{C}{\|C\|}\rho}\\ 
\textrm{subject to } & \forall A \in S: |\tr{Z_{A}\rho}| \leq \epsilon, \\
& \tr{\rho}=1, \rho \geq 0.
\end{aligned}
\end{equation}
We define $\GW(C, S, \epsilon)$ as the value of the program above in~\Cref{equ:relaxed_SDP}.    
\end{definition}

Note that $\GW(C, 2^{[n]}\backslash\{\varnothing\}, 0)$ corresponds to the value of the original GW problem. We then have the following:
\begin{lemma}
For $S_{1} \subseteq S_{2}$, we have

\[
\GW(C, S_{1}, \epsilon) \geq \GW(C, S_{2}, \epsilon).
\]

Similarly, for $\epsilon_{1}\leq \epsilon_{2}$, we have

\[
\GW(C, S, \epsilon_{1}) \leq \GW(C, S, \epsilon_{2}).
\]
In particular, we have for all $S\subset2^{[n]}$:
\begin{align}\label{equ:interpolation_inequality_1}
    \frac{\lambda_{\max}(C)}{\|C\|}=\GW(C, \varnothing, 0)\geq \GW(C, S, 0)\geq  \GW(C, 2^{[n]}\backslash\{\varnothing\}, 0)=\GW(C)\geq \frac{1}{2^{n}\|C\|}\QUBO(C)
\end{align}
\end{lemma}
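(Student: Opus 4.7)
The plan is to prove everything by the simple principle that enlarging the feasible set of a maximization problem can only increase its value, together with two standard identifications of the extremal cases.

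First I would verify the two monotonicity statements. For $S_1 \subseteq S_2$, any density matrix $\rho$ that satisfies $|\tr{Z_A \rho}| \leq \epsilon$ for all $A \in S_2$ automatically satisfies the same bound for all $A \in S_1$, so the feasible set of $\GW(C, S_1, \epsilon)$ contains that of $\GW(C, S_2, \epsilon)$; the maximum over a superset is at least as large, giving the first inequality. For $\epsilon_1 \leq \epsilon_2$ with $S$ fixed, any $\rho$ with $|\tr{Z_A \rho}| \leq \epsilon_1$ also satisfies $|\tr{Z_A \rho}| \leq \epsilon_2$, so again the feasible set grows with $\epsilon$ and the second inequality follows.

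Next I would unpack the four quantities in the displayed chain \eqref{equ:interpolation_inequality_1}. The leftmost equality is immediate: when $S = \varnothing$ and $\epsilon = 0$, the program \eqref{equ:relaxed_SDP} is $\max_\rho \tr{(C/\|C\|)\rho}$ over density matrices, whose value is $\lambda_{\max}(C/\|C\|) = \lambda_{\max}(C)/\|C\|$ by the variational characterization of the largest eigenvalue. The two middle inequalities are special cases of the monotonicity in $S$ already established, applied to $\varnothing \subseteq S \subseteq 2^{[n]}\setminus\{\varnothing\}$ at $\epsilon = 0$. The equality $\GW(C, 2^{[n]}\setminus\{\varnothing\}, 0) = \GW(C)$ is precisely the content of the preceding lemma: enforcing $\tr{Z_A \rho} = 0$ for every nonempty $A \subseteq [n]$ is equivalent to $\langle i|\rho|i\rangle = 2^{-n}$, which is the defining constraint of the normalized GW program \eqref{eq:GW_relaxation_normalized}.

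Finally, the rightmost inequality $\GW(C) \geq \QUBO(C)/(2^n\|C\|)$ is the standard QUBO-to-SDP lift: given any $x \in \{\pm 1\}^{2^n}$, the rank-one matrix $\rho_x = 2^{-n} |x\rangle\langle x|$ is a density matrix with $\langle i|\rho_x|i\rangle = 2^{-n}$, hence feasible for \eqref{eq:GW_relaxation_normalized}, and its objective value is $2^{-n}\langle x, (C/\|C\|) x\rangle$; maximizing over $x$ yields the bound. None of the steps present a genuine obstacle — the content is entirely bookkeeping about feasible sets and two textbook identifications — so the only care required is to state the variational formula for $\lambda_{\max}$ and to invoke the preceding lemma at the correct place.
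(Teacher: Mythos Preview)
Your proposal is correct and follows essentially the same approach as the paper, which simply states that the monotonicity claims are ``clear'' and notes that the $S=\varnothing$ case is the variational characterization of $\lambda_{\max}$. You have merely filled in the routine details (feasible-set inclusion, the rank-one lift for the QUBO bound, invoking the preceding lemma) that the paper leaves implicit.
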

\begin{proof} 
The claims about the monotonicity are clear. To see that $ \frac{\lambda_{\max}(C)}{\|C\|}=\GW(C, \varnothing, 0)$, note that if we do not impose any constraints on the diagonal, the SDP in~\Cref{equ:relaxed_SDP} corresponds to the variational characterization of the largest eigenvalue of $C$, up to the normalization with $\|C\|$. 
\end{proof}

\Cref{equ:interpolation_inequality_1} shows that this relaxation can be seen as an interpolation between the spectral and SDP relaxation on the value of the QUBO problem. However, our main motivation to consider this further relaxation, besides reducing the number of constraints, is that, depending on our choice of $S$ and the structure of $C$, solving the SDP is related to preparing quantum many-body Gibbs states and evaluating local expectation values on them:

\begin{lemma}\label{lemma:structure_SDP}
For $C\in\R^{2^n\times 2^n}$, $S\subseteq2^{[n]}$, and $\lambda=(\lambda_{A}, \lambda_{C})\in\mathbb{R}^{1+|S|}$, define the $n$ qubit Hamiltonian
\begin{align}
H(\lambda,C,S)=-\lambda_{C} \frac{C}{\|C\|} - \sum_{A \in S} \lambda_{A} Z_{A}.
\end{align}
Given $\epsilon>0$, there exists a solution $\rho$ to $\GW(C, S, \epsilon)$ that is of the form $\rho=\sigma_{H(\lambda,C,S)}$ with $|\lambda_{C}|+\sum_{A\in S}|\lambda_{A}| = 16n\epsilon^{-1}$. Furthermore, we can explicitly find the parameters $\lambda_A,\lambda_C$ given the ability to compute, up to precision $\epsilon/4>0$, the expectation value of $Z_A$ for $A\in S$ and $C$ on states of the form $\sigma_{H(\lambda',C,S)}$ with $\|\lambda\|_{\ell_1}\leq 16n\epsilon^{-1}$.
\end{lemma}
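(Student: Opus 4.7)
The plan is to instantiate the Hamiltonian Updates (HU) framework summarized in \Cref{sec:HU} with the specific objective and constraint matrices arising in the relaxed Goemans-Williamson program of \Cref{equ:relaxed_SDP}. Concretely, I would set $B_0 = C/\|C\|$, take $B_A = Z_A$ for each $A \in S$, and set all target values $b_A = 0$. With this choice, the generic SDP of \Cref{eq:general_SDP} coincides exactly with $\GW(C, S, \epsilon)$, and the Gibbs-state ansatz of \Cref{equ:Gibbs_parametrized_0} becomes $\sigma(\lambda) \propto \exp\bigl(\lambda_C C/\|C\| + \sum_{A\in S} \lambda_A Z_A\bigr)$, which up to the sign convention built into the definition of $H(\lambda,C,S)$ is exactly $\sigma_{H(\lambda,C,S)}$. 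So the desired structural form of the optimizer is built into the HU ansatz from the start.

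To produce the multipliers, I would reduce the feasibility problem to the iterative HU loop recalled around \Cref{eq:feasibility_SDP}: start at $\lambda_0 = 0$, so $\sigma(\lambda_0) = I/2^n$; at step $t$, compute the $|S|+1$ expectation values $\tr{Z_A \sigma(\lambda_t)}$ and $\tr{(C/\|C\|) \sigma(\lambda_t)}$ to additive precision $\epsilon/4$; and if any constraint $|\tr{Z_A \sigma(\lambda_t)}| \leq \epsilon$ or $\tr{(C/\|C\|) \sigma(\lambda_t)} \geq \mu$ (with $\mu$ fixed by the outer binary search) is detected as violated, increment the corresponding component of $\lambda$ by $\epsilon/4$ with the sign chosen to push the violating observable back toward feasibility. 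The standard matrix-multiplicative-weights analysis, with the relative entropy to a feasible witness $\rho^\star$ (shrunk by $\epsilon/2$ inside the original feasible set to absorb estimation error) as potential function, then yields convergence in at most $N = O(n\epsilon^{-2})$ steps to a $\lambda$ whose Gibbs state satisfies all original constraints up to additive $\epsilon$.

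The $\ell_1$ bound then follows immediately by bookkeeping: each iteration adds exactly $\epsilon/4$ to a single component of $\lambda$, so after $N$ iterations one has $\|\lambda\|_{\ell_1} \leq N\epsilon/4$. Chasing the explicit constants in the HU argument of \Cref{sec:HU} (in particular the factor $n = \log(2^n)$ appearing from $\log\tr{I} - S(\rho^\star)$ in the relative-entropy potential, and the $\tfrac{\epsilon^2}{16}$ per-step decrease) gives the stated constant $16n\epsilon^{-1}$. Since every quantity the algorithm ever queries is an expectation value of $Z_A$ or $C/\|C\|$ on a state $\sigma_{H(\lambda',C,S)}$ with $\|\lambda'\|_{\ell_1} \leq 16n\epsilon^{-1}$ accurate to $\epsilon/4$, the explicit-construction clause of the lemma is automatic.

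The main obstacle, such as it is, is the precision bookkeeping: one must verify that estimating each of the $Z_A$ and $C/\|C\|$ expectations only to additive error $\epsilon/4$ (rather than to the finer precision $\epsilon/|S|$ one might naively fear is required so that many simultaneous tiny violations do not add up) still suffices to (i) detect every constraint that is violated by more than $\epsilon/2$ on the true Gibbs state, and (ii) preserve the standard one-sided MMW inequality needed for the $n\epsilon^{-2}$-step guarantee. This is handled by choosing the witness inside a slightly tightened feasible region and noting that only a single violating direction is updated per step; the $|S|$ dependence therefore does not enter the iteration count but only appears multiplicatively in the per-step cost of estimating all $|S|+1$ observables, exactly as stated in \Cref{proposition}.
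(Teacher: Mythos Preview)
Your proposal is correct and takes essentially the same approach as the paper, which simply writes ``This follows from Jaynes' principle and the MMW algorithm'' and refers to \cite{GSLBrandao2022fasterquantum} and \Cref{algorithm:HamiltonianUpdate}. In fact you supply considerably more detail than the paper does, correctly instantiating the HU iteration with step size $\epsilon/4$ and tracking the $\ell_1$ budget through the $\mathcal{O}(n\epsilon^{-2})$ iteration bound to recover the stated $16n\epsilon^{-1}$.
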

\begin{proof} This follows from Jaynes' principle and the MMW algorithm. See~\cite{GSLBrandao2022fasterquantum} or~\Cref{algorithm:HamiltonianUpdate} for more details.
\end{proof}
\begin{remark}
Although the total sum of the $\lambda$ are of order $n\epsilon^{-1}$, this typically translates to a Gibbs state of inverse temperature $\mathcal{O}(\epsilon^{-1})$ because of the chosen normalizations. To see this, note that if $C$ corresponds to a many-body Hamiltonian on a lattice, $\|C\|$ is typically of order $n$, so the term $\lambda_C \frac{C}{\|C\|}$ is of order $\epsilon^{-1}$. However, we cannot rule out the possibility that one of the $\lambda_A$, which typically correspond to local Pauli Zs, is not of order $n\epsilon^{-1}$.
\end{remark}
These observations immediately lead to the following questions:

\begin{enumerate}
    \item Given $C$, how to pick $S$ in a judicious manner to ensure a good approximation (i.e. such that $\GW(C,S,0)\simeq \GW(C,2^{[n]}\backslash\{\varnothing\},0)$ in some sense) and such that relevant expectation values on the states $\sigma_{H(\lambda,C,S)}$ can be computed efficiently?
  \item Given such an $S$, how does $\GW(C, S,\epsilon)$ compare to $\GW(C, S,0)$, and, thus, to $\GW(C)$. That is, with which precision do we need to solve the relaxed problem?
\end{enumerate}

We discuss both rigorous and heuristic approaches to these questions.

\subsection{How to relax the constraints of the SDP}\label{sec:how_to_relax}
Let us discuss how to pick the set of constraints $S$ in order to ensure a good approximation.
We start with the case where we can find a subset of constraints that has the same value as imposing all constraints, i.e. $\GW(C,S,\epsilon)=\GW(C,2^{[n]}\backslash\{\varnothing\},\epsilon)$ for some $|S|\ll2^n$. Our analysis is based on how the iterations of HU change the underlying guess state and converge to a solution. We are now ready to prove:
\begin{proposition}\label{prop:how_to_constrain}
Given $C\in\mathbb{R}^{2^n\times 2^n}$ with Pauli expansion
\begin{align}
C=\sum_{i=1}^Mc_{(x_i,z_i)}\,P_{(x_i,z_i)},
\end{align}
let $\cD_C$ be its diagonal group (without identity) as in~\Cref{def:diagonal_group}. Then:
\begin{equation}
\GW(C,\dot\cD_C,\epsilon)=\GW(C,2^{[n]}\backslash\{\varnothing\},\epsilon).
\end{equation}
\end{proposition}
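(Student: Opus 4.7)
The plan is to prove the two inequalities separately. The direction $\GW(C,\dot\cD_C,\epsilon)\geq \GW(C,2^{[n]}\backslash\{\varnothing\},\epsilon)$ is immediate from the monotonicity lemma, since $\dot\cD_C\subseteq 2^{[n]}\backslash\{\varnothing\}$: dropping constraints can only enlarge the feasible set. So the whole content is in the reverse inequality, and I will establish it by exhibiting a single density matrix which is simultaneously optimal for the left-hand side and feasible (hence also optimal, by matching value) for the right-hand side.

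First I would invoke \Cref{lemma:structure_SDP} with $S=\dot\cD_C$ to produce a maximizer $\rho^{*}$ of $\GW(C,\dot\cD_C,\epsilon)$ of the Gibbs form
\[
\rho^{*} \;\propto\; \exp\!\Bigl(\lambda_C\,\tfrac{C}{\|C\|}\;+\;\sum_{A\in\dot\cD_C}\lambda_A Z_A\Bigr).
\]
The exponent is, by construction, a real linear combination of Pauli strings drawn from $\{P_{(x_i,z_i)}:c_{(x_i,z_i)}\neq 0\}\cup\{Z_A:A\in\dot\cD_C\}\subseteq \mathcal{G}_C$, because $\dot\cD_C\subseteq\mathcal{D}_C\subseteq\mathcal{G}_C$ by~\Cref{def:diagonal_group}.

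Next, I would use the fact that $\mathcal{G}_C$ is a group under Pauli multiplication (up to signs/phases). Expanding the exponential as a Taylor series, every term in the expansion is a finite product of Paulis in $\mathcal{G}_C$, and hence is again an element of $\mathcal{G}_C$ (up to a real scalar, since the Hamiltonian is Hermitian with real coefficients in the Pauli basis). Consequently $e^{\lambda_C C/\|C\|+\sum \lambda_A Z_A}$ admits a Pauli expansion supported only on $\mathcal{G}_C$, and the same is therefore true of $\rho^{*}$ after normalization.

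Finally, I would invoke the orthogonality of the Pauli basis under the Hilbert--Schmidt inner product, $\tr{P\,P'}=2^{n}\delta_{P,P'}$ (up to the overall sign fixed by our convention). For any $B\subseteq[n]\setminus\{\varnothing\}$ with $B\notin\dot\cD_C$, the diagonal Pauli $Z_B$ lies in $\mathcal{Z}_n\setminus\mathcal{D}_C$, so in particular $Z_B\notin\mathcal{G}_C$. Writing $\rho^{*}=\sum_{P\in\mathcal{G}_C}\alpha_P P$ and using orthogonality gives $\tr{Z_B\rho^{*}}=0\leq\epsilon$. Hence $\rho^{*}$ automatically satisfies all the remaining constraints indexed by $2^{[n]}\backslash(\dot\cD_C\cup\{\varnothing\})$, so it is feasible for $\GW(C,2^{[n]}\backslash\{\varnothing\},\epsilon)$ with the same objective value, closing the reverse inequality.

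The only mildly subtle point -- and the main thing I would have to state carefully -- is that one must choose the \emph{specific} optimizer provided by \Cref{lemma:structure_SDP} (rather than an arbitrary one), since it is precisely the Gibbs/exponential structure that forces the Pauli support of $\rho^{*}$ to lie inside the group $\mathcal{G}_C$. Pauli orthogonality and the group-closure of $\mathcal{G}_C$ then do all the work; no estimates on $\lambda$ are needed, and the argument is insensitive to $\epsilon$, which explains why the equality holds at every precision.
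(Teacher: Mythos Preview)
Your proof is correct and rests on the same algebraic core as the paper's: any state whose Pauli expansion is supported on $\mathcal{G}_C$ automatically has zero expectation on every $Z_B\notin\mathcal{D}_C$, so the extra constraints are vacuous. The packaging differs slightly. You invoke \Cref{lemma:structure_SDP} as a black box to obtain a Gibbs-form optimizer and then argue in one shot (Taylor expansion + group closure + Hilbert--Schmidt orthogonality) that its Pauli support lies in $\mathcal{G}_C$. The paper instead runs an induction over the HU iterates, showing at each step that the updated Hamiltonian $H_t$ (and hence $\sigma_t$) stays in the algebra $\mathcal{A}_C$, so the algorithm never even detects a violated constraint outside $\dot\cD_C$. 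Your route is arguably cleaner once \Cref{lemma:structure_SDP} is granted; the paper's inductive argument is more self-contained and also makes explicit why the HU solver with constraint set $\dot\cD_C$ produces exactly the same sequence of iterates as it would with the full constraint set (a fact implicitly used later, e.g.\ in \Cref{prop:stability_sdp}).
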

\begin{proof}
We use the HU framework to show this claim. Let us start by outlining our proof strategy.
The proof is based on the observation that for all iterations of HU, the Hamiltonian $H_t$ satisfies $H_t\in\mathcal{A}_C$. From this, we will conclude that $\sigma_{t}\in\mathcal{A}_C$. It then follows that for any Pauli-$Z$ string $Z_{A}$ for some subset $A$ s.t. $Z_A\not\in\dot\cD_C$ we have that $\tr{Z_A\sigma_t}=0$. We then show that this implies that $H_{t+1}\in\mathcal{A}_C$, which gives the claim by mathematical induction. 

Clearly, we have that $H_0=0\in\cA_C$. Note that for any $A\in\cA_C$, $\exp(A)\in\cA_C$. Now assume that $H_t\in\cA_C$ for all $t\leq m$ and, thus, also $\sigma_t$. There are three possibilities now: either the matrix $C$ violates the constraints, or we have a diagonal violation of the constraint, or we converged. Clearly, if we add $C$ to the Hamiltonian we stay in the algebra. If we have converged there is nothing to show either. So the only case we still need to analyze is when we have to do a diagonal update. This means we now need to find a subset $A$ s.t. $\tr{Z_A(\sigma_t-I/2^n)}\geq\epsilon$. Note that as $\sigma_t\in\cA_C$, for all $Z_B\not\in \dot\cD_C$, we have that:
\begin{align}
    \tr{Z_B(\sigma_t-I/2^n)}=0.
\end{align}
To see this, note that we can expand $\sigma_t$ in terms of the Paulis in $\cA_C$. As $Z_B\not\in \dot\cD_C$, it follows that $\tr{Z_BZ_{A}}=0$ for all $Z_A\in \dot\cD_C$. We conclude that $H_{t+1}\in\cA_C$. It follows by induction that for all $t$ we have that $H_{t}\in\cA_C$, and in particular the solution to the SDP.
\end{proof}
Thus, we see that the group generated by the Paulis in the expansion of the matrix completely determines the set of constraints we need to impose. But, as we will see later, this does not mean that the algebra $\mathcal{A}_C$ itself is exponentially small and one can obtain our results by a dimension reduction argument.
Nevertheless, for Pauli-sparse matrices we can find a set of generators for the diagonal group efficiently. As explained in~\Cref{sec:notations}, the algebra generated by $\{(x_i,z_i)\}_{i=1}^M$ is just the linear span of these vectors in $\mathbb{Z}_2^n\times \mathbb{Z}_2^n$ and the Pauli $Z$ correspond to the subspace $\textrm{span}\{(0,z_1),(0,z_2),\ldots,(0,z_n)\}$.  So all we need to do is determine the intersection between two linear spaces.
This gives:
\begin{proposition}\label{prop:checking_constraints}
In the notations of~\Cref{prop:how_to_constrain}, it is possible to find a set of generators of $\dot\cD_C$ in time $\textrm{poly}(M,n)$.
\end{proposition}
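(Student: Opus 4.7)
The plan is to reduce the problem to standard linear algebra over $\mathbb{F}_2$ using the symplectic representation of Pauli strings, and then apply Gaussian elimination.

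First, I would recall the isomorphism already invoked in~\Cref{sec:notations}: modulo the global phase, the Pauli group on $n$ qubits is isomorphic to $(\mathbb{F}_2^{2n},+)$ via $P_{(x,z)} \mapsto (x,z)$, with multiplication of Paulis corresponding to addition of symplectic vectors. Consequently, the subgroup $\mathcal{G}_C$ generated by $\{P_{(x_i,z_i)}\}_{i=1}^M$ corresponds, modulo phase, to the $\mathbb{F}_2$-linear span $V := \mathrm{span}_{\mathbb{F}_2}\{(x_i,z_i) : i \in [M]\} \subseteq \mathbb{F}_2^{2n}$. A Pauli is diagonal precisely when its $X$-part vanishes, i.e. it lies in the subspace $W := \{0\}^n \times \mathbb{F}_2^n$. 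Hence, modulo phase, $\dot{\mathcal{D}}_C$ is in bijection with the nonzero elements of $V \cap W$, and producing a set of generators for $\dot{\mathcal{D}}_C$ reduces to producing a basis for the subspace $V \cap W$ of $\mathbb{F}_2^{2n}$.

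Next, I would describe the concrete algorithm. Form the $M \times n$ matrix $X$ over $\mathbb{F}_2$ whose $i$-th row is $x_i$, and compute a basis $\{\alpha^{(1)},\ldots,\alpha^{(r)}\}$ of $\ker(X^{\top})$, i.e.\ the set of coefficient vectors $\alpha \in \mathbb{F}_2^M$ with $\bigoplus_i \alpha_i x_i = 0$. For each such $\alpha^{(k)}$, define
\begin{equation}
z^{(k)} := \bigoplus_{i \in [M]} \alpha^{(k)}_i z_i \in \mathbb{F}_2^n, \qquad A_k := \mathrm{supp}(z^{(k)}) \subseteq [n],
\end{equation}
and output the list $\{Z_{A_k} : z^{(k)} \neq 0\}$. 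By construction each $Z_{A_k}$ is (up to sign) a product of the generators $P_{(x_i,z_i)}$ and lies in $\mathcal{Z}_n$, so it belongs to $\dot{\mathcal{D}}_C$; conversely any element of $V \cap W$ is an $\mathbb{F}_2$-combination of the $(0,z^{(k)})$'s since the $\alpha^{(k)}$ span $\ker(X^{\top})$, giving generation of $\dot{\mathcal{D}}_C$.

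For the complexity, Gaussian elimination on an $M \times n$ matrix over $\mathbb{F}_2$ takes $\mathcal{O}(Mn\min(M,n))$ bit operations, which is $\mathrm{poly}(M,n)$; the subsequent assembly of the $z^{(k)}$ costs $\mathcal{O}(Mn)$ per basis vector, and there are at most $M$ of them, so the overall runtime is $\mathrm{poly}(M,n)$ as claimed.

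The only subtlety I anticipate is the phase bookkeeping: the symplectic identification is only an isomorphism modulo signs/phases, so one might worry that a product of generators which is symplectically diagonal could equal $-Z_{A_k}$ rather than $+Z_{A_k}$. However, this does not affect $\dot{\mathcal{D}}_C$ in the way it enters the relaxed SDP of~\Cref{def:relaxed_SDP}, since the constraint $|\tr{Z_A\rho}|\le \epsilon$ is invariant under $Z_A \mapsto -Z_A$; equivalently, the generators up to sign already suffice to enforce the diagonal constraints dictated by~\Cref{prop:how_to_constrain}. Thus, no additional work beyond the $\mathbb{F}_2$ linear algebra above is needed.
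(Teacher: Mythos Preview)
Your proof is correct and follows essentially the same approach as the paper: both reduce the problem to computing a kernel over $\mathbb{F}_2$ via Gaussian elimination. The paper sets it up as finding $\ker([K_1,K_2])$ for $K_1$ with columns $(x_i,z_i)$ and $K_2$ with columns $(0,e_j)$, whereas you go directly to $\ker(X^\top)$; these are equivalent since the $v^2$-component of a kernel element of $[K_1,K_2]$ is uniquely determined by $v^1$, and your discussion of the phase subtlety is a welcome addition that the paper omits.
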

\begin{proof}
Let $K_1\in\mathbb{Z}_2^{2n\times m}$ be a matrix whose columns are the vectors $(x_i,z_i)$ and $K_2\in\mathbb{Z}_2^{2n\times n}$ be the matrix with columns $\{(0,e_1),(0,e_2),\ldots,(0,e_n)\}$, where $e_i$ are elements of the canonical basis. It is clear that a string $w$ is in $\dot\cD_C$ if there are vectors $v^1,v^2$ s.t. $w=K_1v^1=K_2v^2$. Indeed, the condition that there is a $v^1$ s.t. $K_1v^1=w$ follows from the fact it is in the algebra generated by the $(x_i,z_i)$ and similarly for $v^2$.
Thus, if we have a basis for the kernel of the matrix $[K_1,K_2]$, $((v^1_1,v^2_1),\ldots,(v^1_k,v^2_k))$, then $\{K_2v^2_1,\ldots,K_2v^2_k\}$ is a basis for $\dot\cD_C$. As we can find a (potentially overcomplete) basis for the kernel of $[K_1,K_2]$ in time $\textrm{poly}(M,n)$, the claim follows. We can then find a minimal basis of the span of $\{K_2v^2_1,\ldots,K_2v^2_k\}$ by Gaussian elimination. 
\end{proof}
Taken together,~\Cref{prop:how_to_constrain} and~\Cref{prop:checking_constraints} allow us to efficiently identify instances of Pauli-sparse matrices (as for these $M$ is polynomial) for which we need to check exponentially fewer constraints for the SDP relaxation without changing its value. In~\Cref{prop:stability_sdp} we show the stability of the solution in terms of $\epsilon$, as we can only solve the problems to finite precision.

Let us illustrate this point with a few examples. We start with the extreme case where $\mathcal{D}_C$ is trivial:
\begin{cor}\label{cor:diagonal_constraints}
In the notations of~\Cref{prop:how_to_constrain}, if $\mathcal{D}_C=\{I\}$, then:
\begin{align}
    \uGW(C)=2^n\lambda_{\max}(C).
\end{align}
Furthermore, if $\lambda_{\max}$ has multiplicity $1$, then we corresponding eigenvector $v_{\max}$ normalized to norm $\sqrt{2^n}$ is the solution to $\QUBO(C)$.
\end{cor}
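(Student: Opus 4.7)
The plan is to specialize Proposition~\ref{prop:how_to_constrain} to the extreme case $\cD_C = \{I\}$, so that $\dot{\cD}_C = \varnothing$ and the relaxed SDP carries no Pauli-$Z$ constraints at all. Invoking that proposition at $\epsilon = 0$ yields $\GW(C, \varnothing, 0) = \GW(C, 2^{[n]}\backslash\{\varnothing\}, 0) = \GW(C)$. The left-hand side optimizes $\tr{(C/\|C\|)\rho}$ over the density matrices with no further constraint, which is exactly the variational formula for the largest eigenvalue, giving $\GW(C) = \lambda_{\max}(C)/\|C\|$. The conversion $\uGW(C) = 2^n\|C\|\,\GW(C)$ from Section~\ref{sec:GWrecap} then produces $\uGW(C) = 2^n \lambda_{\max}(C)$, establishing the first claim.

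For the second claim, suppose $\lambda_{\max}$ is non-degenerate and let $u$ be a unit $\ell_2$-norm top eigenvector of $C$. Because the maximal eigenspace is one-dimensional, convexity forces the unique optimizer of the unconstrained spectral problem to be $\rho^{\ast} := u u^{\top}$ (any optimizer lies in the convex hull of the top eigenprojections, which here reduces to a single rank-one projector). The fully-constrained GW problem attains its optimum on the compact feasible set, and since its optimal value agrees with the spectral one, every optimizer of the constrained problem is also an optimizer of the unconstrained spectral problem, hence must coincide with $\rho^{\ast}$. Feasibility of $\rho^{\ast}$ for the constrained SDP then forces $\langle i | \rho^{\ast} | i \rangle = 1/2^n$ for every $i$, equivalently $u(i)^2 = 1/2^n$.

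Rescaling $u$ to $v := \sqrt{2^n}\, u$ produces a vector in $\{\pm 1\}^{2^n}$ with $\langle v, C v\rangle = 2^n \lambda_{\max}(C) = \uGW(C)$. Combining this with $\uGW(C) \geq \QUBO(C)$ from~\eqref{eq:ugw_bound_1} and the feasibility of $v$ for QUBO, $v$ achieves the QUBO optimum. I expect no serious technical obstacle here: both statements follow cleanly from Proposition~\ref{prop:how_to_constrain} together with elementary convexity, the only delicate point being the identification of the unique spectral optimizer via the non-degeneracy hypothesis, which in turn pins the diagonal of $\rho^{\ast}$ to $1/2^n$ and thereby turns the top eigenvector into a $\pm 1$ vector after rescaling.
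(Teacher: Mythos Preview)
Your proof is correct. For the first claim you match the paper's primary argument exactly: apply Proposition~\ref{prop:how_to_constrain} with $\dot\cD_C=\varnothing$, identify the unconstrained SDP with the variational formula for $\lambda_{\max}$, and rescale via $\uGW(C)=2^n\|C\|\,\GW(C)$.

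For the second claim the paper takes a different route. Rather than deducing feasibility of $uu^\top$ from uniqueness of the spectral optimizer, it observes that every Gibbs state $\sigma_\beta\propto e^{\beta C}$ lies in $\cA_C$ and therefore satisfies $\tr{Z_A\sigma_\beta}=0$ for all nontrivial $A$; each $\sigma_\beta$ is thus feasible for the fully constrained GW, and letting $\beta\to\infty$ passes feasibility (the feasible set is closed) to the rank-one top eigenprojector. Your compactness-plus-uniqueness argument is slightly more self-contained---no limits, no appeal to algebra membership---while the paper's Gibbs-state argument has the advantage of exhibiting an explicit family of feasible points at every finite $\beta$, a structural fact reused later (e.g.\ in Corollary~\ref{cor:spectrum_graph}).
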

\begin{proof}
The first statement follows from combining~\Cref{equ:interpolation_inequality_1} with~\Cref{prop:how_to_constrain}, as GW with $S$ as the empty set is the same as $\GW(C,2^{[n]}\backslash\{\varnothing\},0)$ by~\Cref{prop:how_to_constrain}. However, let us show a second proof of this fact that also yields the second statement.
As $\mathcal{D}_C=\{I\}$, we obtain that $\sigma_\beta=\tfrac{e^{\beta H}}{Z_\beta}$ satisfies $\tr{Z_A\sigma_\beta}=0$ for all $A\not=\varnothing$. Thus, it is a feasible point of $\GW(C,2^{[n]}\backslash\{\varnothing\},0)$ for all $\beta$. As for $\beta\to\infty$ we have that $\sigma_\beta$ converges to the normalized projector onto the minimal eigenvalue, this shows that $\GW(C,2^{[n]}\backslash\{\varnothing\},0)\geq\lambda_{\max}(C)$. The other direction of the inequality follows from~\Cref{equ:interpolation_inequality_1}. Finally, if the eigenvector is unique, then we have that it is a feasible point of the $\QUBO$.
\end{proof}

The problem above identifies some instances for which solving the SDP relaxation boils down to computing the largest eigenvalue of $C$ and for which extracting the solution can be done in polynomial time in the dimension of $C$, which is $2^n$. 
Recall that for general $C$, we expect that it takes time $\textrm{exp}(c2^n)$ to solve $\QUBO(C)$. Thus,~\Cref{cor:diagonal_constraints} already identifies instances that are much easier to solve and we can check them efficiently. In~\Cref{prop:comb_char} we give a combinatorial characterization of the graphs for which a result like that of~\Cref{cor:diagonal_constraints} holds in terms of number of cycles.
Although this is interesting in its own right, we later show in~\Cref{sec:logtime} how for certain low-weight Pauli matrices, by exploiting techniques for quantum many-body systems, it is even possible to solve QUBO in time $\textrm{poly}(n)$, i.e. doubly exponentially faster than one expects for the worst-case instances. To the best of our knowledge, these families of instances were not identified before. 
\begin{remark}
The requirement that the eigenspace has dimension $1$ to ensure that $\operatorname{QUBO}(C)=\textrm{GW}(C)$ is necessary in general. Although we were not able to find examples where $\textrm{GW}(C)=\lambda_{\max}(C)>\operatorname{QUBO}(C)$ with local dimension $2$ (qubits), one can numerically check that taking the cyclic shift $X$ in $\C^{3}$, $X\ket{i}=\ket{{(}i+1{)}\bmod 3}$, we have that, for $C=-(X+X^2)\otimes I-I\otimes (X+X^2)\in\C^{9\times 9}$, the values of QUBO and the spectral bound do differ, although we have $\mathcal{D}_C=\{I\}$. As expected, the reason is that the eigenspace is highly degenerate.
\end{remark}

Before we turn to more heuristic methods to pick the relaxation $S$, let us finally bound how we need to pick the precision parameter $\epsilon$ to ensure a good solution to the original problem.

\begin{proposition}\label{prop:stability_sdp}
Let $C$ be such that $|\mathcal{D}_C|=2^k$. Then, provided we use the HU framework to solve the SDPs, we have that:
\begin{align}\label{prop:stability_sdp_eq}
|\textrm{GW}(C,2^{[n]}\backslash\{\varnothing\},0)-\textrm{GW}(C,\dot\cD_C,\epsilon)|=\mathcal{O}\left((2^k-1)^{1/6}\epsilon^{1/3}\|C\|\right)
\end{align}
\end{proposition}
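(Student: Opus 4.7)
The plan is a three-step reduction: (i) absorb the asymmetry between the two constraint sets via Proposition~\ref{prop:how_to_constrain}; (ii) use the algebra structure exploited in that proof to give an \emph{algebra-aware} trace-norm bound on the diagonal deviation of the HU optimizer from $I/2^n$; (iii) plug this bound into a generic GW stability estimate imported from~\cite{henze2025solvingquadraticbinaryoptimization}.

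Step (i) is immediate: by Proposition~\ref{prop:how_to_constrain} (applied at $\epsilon=0$), the LHS of the inequality rewrites as $|\GW(C,\dot{\cD}_C,0)-\GW(C,\dot{\cD}_C,\epsilon)|$, so the question is really the sensitivity of $\GW(C,\dot{\cD}_C,\cdot)$ in its slackness parameter. For step (ii), let $\sigma$ be the optimizer of $\GW(C,\dot{\cD}_C,\epsilon)$ produced by the HU iteration. The proof of Proposition~\ref{prop:how_to_constrain} already showed that every HU iterate, and hence $\sigma$ itself, lies in the real algebra $\mathcal{A}_C$ generated by the Paulis in the expansion of $C$. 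Orthogonality of Paulis then forces $\tr{Z_B\sigma}=0$ for every $Z_B\notin\dot{\cD}_C$, so that only $|\dot{\cD}_C|=2^k-1$ of the $2^n-1$ Pauli-$Z$ expectation values of $\sigma$ can be nonzero. Writing $\Delta:=D(\sigma)-I/2^n$ in the Pauli-$Z$ basis and using $\tr{Z_A Z_B}=2^n\delta_{A,B}$,
\[
\|\Delta\|_2^2=\frac{1}{2^n}\sum_{A\neq\varnothing}\bigl(\tr{Z_A\sigma}\bigr)^2=\frac{1}{2^n}\sum_{Z_A\in\dot{\cD}_C}\bigl(\tr{Z_A\sigma}\bigr)^2\leq\frac{(2^k-1)\epsilon^2}{2^n},
\]
and then Cauchy--Schwarz upgrades this into the trace-norm bound
\[
\|\Delta\|_1\leq\sqrt{2^n}\,\|\Delta\|_2\leq\sqrt{2^k-1}\,\epsilon,
\]
which is exponentially sharper than the $\sqrt{2^n-1}\,\epsilon$ one would obtain from generic diagonal constraints. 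This is precisely the place where the hypothesis $|\dot{\cD}_C|=2^k$ enters.

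For step (iii), I would invoke the GW stability lemma of~\cite{henze2025solvingquadraticbinaryoptimization}: any feasible point $\sigma$ of the $\epsilon$-relaxed SDP with $\|D(\sigma)-I/2^n\|_1\leq\delta$ can be modified to a feasible point of the exact SDP whose objective differs by at most $\mathcal{O}(\delta^{1/3}\|C\|)$. Inserting $\delta=\sqrt{2^k-1}\,\epsilon$ produces the claimed $\mathcal{O}\bigl((2^k-1)^{1/6}\epsilon^{1/3}\|C\|\bigr)$. The main technical hurdle is step (iii): checking that the cube-root-in-$\delta$ form of the stability estimate from~\cite{henze2025solvingquadraticbinaryoptimization} applies off-the-shelf. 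If it does not, I would re-derive it by convexly interpolating between the diagonally-corrected matrix $\sigma-\operatorname{diag}(\Delta)$ (which has the correct diagonal, but may lose positivity by at most $\|\Delta\|_\infty$ in its smallest eigenvalue) and the maximally mixed state $I/2^n$, optimizing the mixing weight so as to balance the $\|\Delta\|_\infty$ and $\|\Delta\|_1$ bounds on $\Delta$ and thereby extract the $\tfrac{1}{3}$ exponent.
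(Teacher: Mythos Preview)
Your proof is correct and matches the paper's argument essentially step for step: the paper also reduces via Proposition~\ref{prop:how_to_constrain}, uses $\sigma\in\mathcal{A}_C$ to kill all but $2^k-1$ Pauli-$Z$ expectations, bounds $\|M(\sigma)-I/2^n\|_2^2\le 2^{-n}(2^k-1)\epsilon^2$ and passes to trace norm, and then invokes \cite[Theorem~5]{henze2025solvingquadraticbinaryoptimization} to get the $\epsilon^{1/3}$ factor. Your fallback derivation for step~(iii) is not needed, since the cited stability result applies directly.
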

\begin{proof}
First, note that we have that $\textrm{GW}(C,2^{[n]}\backslash\{\varnothing\},0)=\textrm{GW}(C,\dot\cD_C,0)$ by~\Cref{prop:how_to_constrain}. In~\cite[Theorem 5]{henze2025solvingquadraticbinaryoptimization}, the authors show that if we have a $\rho$ such that:
\begin{align}
\left\|\sum_{i=0}^{2^n} \bra{i}\rho\ket{i}|i\rangle\langle i|-\frac{I}{2^n}\right\|_{tr}\leq \epsilon_1,
\end{align}
then there is a $\rho^*$ such that $\sum_{i=0}^{2^n-1} \bra{i}\rho\ket{i}|i\rangle\langle i|=I/2^n$ and $|\tr{C(\rho-\rho^*)}|=\mathcal{O}\left(\epsilon_1^{\tfrac{1}{3}}\|C\|\right)$. Thus, $\rho^*$ is a feasible point of $\textrm{GW}(C,2^{[n]}\backslash\{\varnothing\},0)$. This result implies that the statement follows if we can show that a solution $\sigma$ to $\textrm{GW}(C,\dot\cD_C,\epsilon)$ satisfies:
\begin{align}
    \left\|\sum_{i=0}^{2^n-1} \bra{i}\sigma\ket{i}|i\rangle\langle i|-\frac{I}{2^n}\right\|_{tr}\leq (2^k-1)^{1/2}\epsilon\,.
\end{align}
Let us show that this is indeed the case. To do so, we first recall the standard identity for any two quantum states $\rho,\sigma$:
\begin{align}\label{equ:comparisonnorm}
\|\rho-\sigma\|_{tr}\leq 2^{n/2}\|\rho-\sigma\|_2. 
\end{align}
As $\rho$ is a solution to $\textrm{GW}(C,\dot\cD_C,\epsilon)$ using HU,  we have that $\rho\in\mathcal{A}_C$. Thus, when letting $M$ be the measurement channel in the computational basis, we get by definition:
\begin{align}\label{equ:2normestimate}
\left\|M(\rho)-\frac{I}{2^n}\right\|_2^2=\sum\limits_{Z_A\in\dot\cD_C}2^{-n}\tr{Z_A\rho}^2\leq 2^{-n}(2^k-1)\epsilon^2.
\end{align}
The statement follows by combining~\Cref{equ:comparisonnorm} and~\Cref{equ:2normestimate}.
\end{proof}
Thus, with~\Cref{prop:stability_sdp}, we see that for the case that ${|}\mathcal{D}_C{|}$ is not too large (i.e. $k\sim \log(n)$), we can solve the relaxed problem to polynomial in $n$ precision and obtain a good solution to the original problem, as the required precision only scales with the number of relaxed constraints, not the dimension of the problem.

However, the scaling in~\Cref{prop:stability_sdp} is still undesirable when it comes to $\epsilon$. Note that~\cite{henze2025solvingquadraticbinaryoptimization} also numerically tests the tightness of their estimate on the continuity of the SDP w.r.t. the relaxation of the diagonal constraint. Although they improved the dependence to $\epsilon^{1/3}$ from $\epsilon^{1/4}$ from the original work of~\cite{GSLBrandao2022fasterquantum}. Although they improved the dependence $\epsilon^{1/4}$ from the original work of~\cite{GSLBrandao2022fasterquantum} to $\epsilon^{1/3}$, their numerics suggest that the correct scaling should be around $\epsilon^{0.79}$.
However, for a given $\cD_C$ it is possible to numerically evaluate a bound that can give a scaling of the error of the form $\mathcal{O}(\epsilon\|C\||\cD_C|)$, which is essentially the best one can hope for in scaling with $\epsilon$:
\begin{proposition}\label{prop:stability_epsilon}
For a $S\subset 2^{[n]}\backslash\{\varnothing\}$, $\epsilon>0$ and $C\in \mathbb{R}^{D\times D}$, let $\Xi(S,\epsilon)$ be the solution to the linear program:
\begin{align}
\sup_{\xi\in\R^{m}} &\|\xi\|_{\ell_1}\\
\textrm{subject to }&\sum_{i} \xi_iZ_{A_i}\geq -\GW(C,S,\epsilon)I
\end{align}
Then for all $\epsilon>0$:
\begin{align}
|\GW(C,S,\epsilon)-\GW(C,S,0)|=\epsilon \,\Xi(S,\epsilon)
\end{align}
\end{proposition}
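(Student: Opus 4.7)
The natural approach is via standard SDP duality, writing both $\mathrm{GW}(C,S,\epsilon)$ and $\mathrm{GW}(C,S,0)$ as values of dual optimization problems that differ only by the penalty $\epsilon \|\xi\|_{\ell_1}$, and then matching the relevant $\ell_1$-norm with the LP defining $\Xi(S,\epsilon)$.

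First I would derive the Lagrangian dual of the SDP in~\Cref{equ:relaxed_SDP}. Splitting each two-sided constraint $|\tr{Z_A\rho}|\leq\epsilon$ into $\pm\tr{Z_A\rho}\leq\epsilon$ with nonnegative multipliers $\lambda_A^+,\lambda_A^-$, and introducing $\mu\in\mathbb{R}$ for the trace constraint, the inner maximization over $\rho\succeq 0$ is finite iff $\mu I+\sum_i\xi_i Z_{A_i}\succeq C/\|C\|$, where $\xi_i:=\lambda_{A_i}^+-\lambda_{A_i}^-$. Because, given $\xi_i$, the minimum of $\lambda_{A_i}^++\lambda_{A_i}^-$ over $\lambda_{A_i}^\pm\geq 0$ is $|\xi_i|$, the dual reads
\[
\mathrm{GW}(C,S,\epsilon)=\min_{\mu,\,\xi}\Bigl\{\mu+\epsilon\|\xi\|_{\ell_1}\;:\;\mu I+\sum_i\xi_i Z_{A_i}\succeq C/\|C\|\Bigr\}.
\]
Strong duality holds since $\rho=I/2^n$ is a strict Slater point of the primal whenever $\epsilon>0$.

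Second, I would exploit the fact that the dual feasible region is \emph{independent} of $\epsilon$: only the objective varies, through the penalty $\epsilon\|\xi\|_{\ell_1}$. Letting $(\mu^*_\epsilon,\xi^*_\epsilon)$ be dual-optimal for the $\epsilon$-problem and $(\mu^*_0,\xi^*_0)$ for the $0$-problem, plugging each into the other problem immediately gives the two-sided bound
\[
\epsilon\,\|\xi^*_\epsilon\|_{\ell_1}\;\leq\;\mathrm{GW}(C,S,\epsilon)-\mathrm{GW}(C,S,0)\;\leq\;\epsilon\,\|\xi^*_0\|_{\ell_1},
\]
and the monotonicity $G_\epsilon\geq G_0$ (the primal feasible set grows in $\epsilon$) makes the absolute value in the statement redundant.

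Third, I would match these bounds to $\Xi(S,\epsilon)$. The key observation is that the constraint $\sum_i\xi_i Z_{A_i}\succeq-\mathrm{GW}(C,S,\epsilon)\,I$ in the LP is exactly what one obtains from dual feasibility $\mu I+\sum_i\xi_i Z_{A_i}\succeq C/\|C\|$ after fixing $\mu=\mathrm{GW}(C,S,\epsilon)$ and using a suitable shift (together with $\|C/\|C\|\|\leq 1$) to absorb the $C/\|C\|$ term. This yields the containment $\|\xi^*\|_{\ell_1}\leq \Xi(S,\epsilon)$ for any dual optimal $\xi^*$, hence the upper bound $\mathrm{GW}(C,S,\epsilon)-\mathrm{GW}(C,S,0)\leq\epsilon\,\Xi(S,\epsilon)$. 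For the matching lower bound, given any LP maximizer $\hat\xi$ I would complete it to a dual pair with $\hat\mu=\lambda_{\max}(C/\|C\|-\sum_i\hat\xi_i Z_{A_i})$ and show by complementary slackness against the primal optimum that it achieves the same objective as $(\mu^*_0,\xi^*_0)$ with the added penalty term $\epsilon\|\hat\xi\|_{\ell_1}$.

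The hard step is this last one: certifying that the sup in the LP is actually attained by a dual optimizer, not merely upper-bounding $\|\xi^*\|_{\ell_1}$. The natural route is a complementary-slackness argument leveraging the fact that the $Z_{A_i}$ are pairwise commuting and diagonal, so the operator inequality reduces to a finite system of linear inequalities on $\xi$ indexed by $x\in\{0,1\}^n$; the LP then becomes a genuine finite-dimensional LP whose optimality conditions can be matched, vertex by vertex, with the KKT conditions of the primal SDP. Care is also needed to handle degeneracy of the dual optimum.
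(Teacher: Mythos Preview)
Your dual approach is essentially the same route the paper takes: write the dual of $\GW(C,S,0)$, observe that any dual optimizer $\xi^\star$ is feasible for the LP defining $\Xi(S,\epsilon)$, and then apply H\"older to the KKT inequality $\tr{CX}\ge\lambda^\star-\langle\xi^\star,\mathcal{A}X-e_0\rangle$ with $X$ a primal optimizer of the $\epsilon$-relaxed problem. Two concrete points, however, differ from the paper and matter.

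First, the way you pass from dual feasibility to the LP constraint is not right. You propose to ``absorb the $C/\|C\|$ term'' via a shift and $\|C/\|C\|\|\le 1$; that would only yield $\sum_i\xi_i Z_{A_i}\succeq -(\GW(C,S,\epsilon)+1)I$, which is too weak. The paper's actual mechanism is that the QUBO cost matrix $C$ has \emph{zero diagonal}. Evaluating the operator inequality $\sum_i\xi_i^\star Z_{A_i}\succeq C/\|C\|-\GW(C,S,\epsilon)I$ on computational basis states kills the $C$ term entirely, and since all $Z_{A_i}$ are diagonal this is exactly the matrix inequality $\sum_i\xi_i^\star Z_{A_i}\succeq -\GW(C,S,\epsilon)I$. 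Hence $\xi^\star$ is LP-feasible and $\|\xi^\star\|_{\ell_1}\le\Xi(S,\epsilon)$.

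Second, your ``hard step''---showing the LP sup is attained by a dual optimizer so as to certify the matching lower bound---is not carried out in the paper. Despite the ``$=$'' in the displayed statement, the paper's proof (and the surrounding discussion, which calls it a ``bound'') only establishes
\[
\GW(C,S,\epsilon)-\GW(C,S,0)\ \le\ \epsilon\,\Xi(S,\epsilon),
\]
via H\"older applied to the KKT inequality. So your program of proving equality goes beyond what the paper does; the upper bound alone is the intended content.
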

We prove this statement in~\Cref{app:stability}, where we show that for some choices of $S$ this bound is significantly better than~\Cref{prop:stability_sdp}, as it scales as $ \Xi(S,\epsilon)=\GW(C,S,\epsilon)$, ensuring a relative precision. This will particularly be the case for the instances we perform numerics on, indicating they have greater stability than arbitrary instances.
\subsection{Krylov approach to adding constraints}\label{sec:approximate_constraint_set}
In this section, we  discuss a heuristic method to add constraints inspired by Krylov space approaches~\cite{Nandy2025}. In Krylov space methods for time evolutions, one tries to approximate the time evolution of a state $\ket{\psi}$ by a Hamiltonian $H$ for a time $t$ by truncating the evolution to the subspace spanned by $\{\ket{\psi},H\ket{\psi},H^2\ket{\psi},\ldots,H^k\ket{\psi}\}$. This way one can enforce that the Taylor expansion of $e^{-iHt}\ket{\psi}$ is correct up to order $k$. We follow a similar path in the following sense: recall that an $\epsilon$-optimal point of the SDP $\GW(C)$ can be represented by a Gibbs state of the form 
\begin{align}
\rho_{\lambda_0,C,D}\propto\exp\left(\lambda_0C+D\right)\in\mathbb{R}^{2^n\times 2^n}
\end{align}
for some diagonal matrix $D$ and a real number $\lambda_0$. As we are ultimately interested in optimizing $\tr{C\rho_{\lambda_0,C,D}}$, our goal in picking the set $S$ for $\GW(C,S,\epsilon)$ should be to make sure that the constraints in $S$ influence the value of $\tr{C\rho_{\lambda_0,C,D}}$ as much as possible. Thus, we now show how to pick $S$ to ensure that the Taylor expansion of $\tr{Ce^{\lambda_0C+D}}$ up to order $k$ is, for any choice of $D$, completely determined by the expectation values of $\tr{C^l\rho_{\lambda_0,C,D}}$ and $\tr{Z_{A_i}\rho_{\lambda_0,C,D}}$ for $l\leq k$ and $A_i\in S$. As such, adding additional constraints outside of $S$ only influences the value of the objective function at order $k+1$.
 We start with the following definition:
\begin{definition}[Diagonal constraints of order $k$]
Let $C\in\mathbb{R}^{2^n\times 2^n}$ with Pauli expansion
\begin{align}%
C=\sum_{(x,z)\in\{0,1\}^{2n}}c_{(x,z)}\,P_{(x,z)}
\end{align}
and $\textrm{Psupp}(C)=\{P_{(x,z)}|c_{(x,z)}\not=0\}$ be the Pauli support of $C$.
We define the diagonal constraints of order $k$, $\cD_{C,k}$, for some integer $k\geq2$ to be given by
\begin{align}
\mathcal{Z}_n\cap \{P_{(x_1,z_1)}\cdots P_{(x_l,z_l)}|P_{(x_i,z_i)}\in\textrm{Psupp}(C)\textrm{ and }l\leq k \},
\end{align}
that is, all diagonal Paulis we can form with products of at most $k$ different Paulis in the support of $C$.
\end{definition}
This is a relaxation of our definition of $\cD_{C}$, as $\lim_{k\to\infty}\cD_{C,k}=\cD_{C}$.

Let us now justify this definition as a relaxation of the diagonal group one with the following proposition:
\begin{proposition}
Let $C\in\mathbb{R}^{2^n\times 2^n}$ be traceless and define for some diagonal matrix $D$ and a real nuber $\lambda_0$ the Gibbs state $\rho_{\lambda_0,C,D}\propto\exp\left(\lambda_0C+D\right)\in\mathbb{R}^{2^n\times 2^n}$. Then the Taylor expansion of $\tr{C\rho_{\lambda_0,C,D}}$,
\begin{align}\label{equ:taylor_value}
\tr{C\rho_{\lambda_0,C,D}}=\sum\limits_{l=1}^\infty\frac{\tr{C\left(\lambda_0C+D\right)^l}}{l!Z_{\lambda_0,C,D}}
\end{align}
is determined up to order $k$ by $\tr{C^l\rho_{\lambda_0,C,D}}$ for $l\leq k+1$ and the values of $\tr{Z_{A_i}\rho_{\lambda_0,C,D}}$ for $Z_{A_i}\in\cD_{C,k}$.

\end{proposition}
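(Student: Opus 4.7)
The plan is to expand each numerator $\tr{C(\lambda_0 C + D)^l}$ for $l \le k$ non-commutatively and substitute the Pauli expansions $C = \sum_i c_i P_i$ with $P_i \in \textrm{Psupp}(C)$ and $D = \sum_{A \subseteq [n]} d_A Z_A$. For each $l$, I would write $(\lambda_0 C + D)^l = \sum_{b \in \{0,1\}^l} \lambda_0^{j(b)} M^{(b)}_1 \cdots M^{(b)}_l$, where $j(b)$ counts the positions with $b_i = 0$ (set to $M_i = C$) and the remaining positions carry $M_i = D$. Inserting the Pauli expansions turns the trace into a sum indexed by $b$ and by Pauli assignments for each factor, with coefficients built from $\lambda_0^{j(b)}$, the $c_i$'s, and the $d_A$'s.

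The central observation is that each resulting trace is of a product of Pauli strings; using the fact that Paulis pairwise commute or anticommute, I rearrange the product into $\pm \tr{P R}$, where $P$ is the product of the $j(b)+1$ Paulis coming from the $C$-factors (including the leading $C$) and $R \in \mathcal{Z}_n$ is the product of the $l - j(b)$ $Z_A$ factors coming from the $D$-factors. The trace is non-zero only when $PR = \pm I$, which forces $P$ itself to be a $Z$-string equal to $\pm R$. Hence $P \in \cD_{C, j(b)+1}$ by the definition of the diagonal constraints of order $j(b)+1$.

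I would then split the contributions into two classes. The \emph{pure-$C$} class, in which no interior factor is $D$, collects into $\lambda_0^l \tr{C^{l+1}}$ and, together with the partition-function normalization $Z_{\lambda_0,C,D}$, is captured by the listed traces $\tr{C^{l'} \rho_{\lambda_0,C,D}}$ for $l' \le k+1$. The \emph{mixed} class contains at least one $D$-factor, so $j(b) \le l - 1$ and $P$ is a product of at most $l \le k$ Paulis from $\textrm{Psupp}(C)$; the forced identification $P = \pm Z_A$ therefore lies in $\cD_{C, l} \subseteq \cD_{C, k}$. Reorganizing the mixed contributions accordingly expresses the $l$-th Taylor term as a polynomial in $\lambda_0$ and the $d_A$'s whose coefficients only involve the listed moments $\tr{Z_A \rho_{\lambda_0,C,D}}$ with $Z_A \in \cD_{C,k}$.

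The main obstacle I expect is the combinatorial bookkeeping: carefully tracking the signs that arise from commuting Paulis past one another, counting the orderings $b$ and Pauli assignments that produce a given $Z$-string $P$, and consistently handling the partition function $Z_{\lambda_0,C,D}$ so that the pure-$C$ and mixed contributions fit cleanly into the full Gibbs-state moments $\tr{C^{l'}\rho_{\lambda_0,C,D}}$ and $\tr{Z_A\rho_{\lambda_0,C,D}}$ stated in the proposition. Once this is done, the truncation terminates at order $k$ precisely because the mixed terms only see $\cD_{C,k}$ and the pure-$C$ terms only see $C$-moments up to the $(k+1)$-th power.
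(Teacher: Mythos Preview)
Your approach is essentially identical to the paper's: both expand $(\lambda_0 C + D)^l$ into words in $C$ and $D$, insert the Pauli decompositions, and use that a trace of a product of Paulis vanishes unless the product is $\pm I$; since the $D$-factors contribute only $Z$-strings, the $C$-factors must themselves multiply to a $Z$-string, which lands in $\cD_{C,k}$ when at most $k$ of them are present. Your binary-string indexing by $b$ and your anticipated bookkeeping obstacles (signs, ordering, partition function) are exactly the details the paper's proof also leaves at the level of a sketch, so there is no substantive difference.
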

\begin{proof}
Note that we can express all the terms in the Taylor expansion in Eq.~\eqref{equ:taylor_value} up to order $k+1$ linear combinations of traces of the form:
\begin{align}\label{equ:trace_expansion}
\tr{CC^{c_1}D^{d_1}\cdots C^{d_r}D^{d_r}}
\end{align}
with $r\leq k$ and $\sum_i d_i+c_i\leq k$, as can be seen by just expanding the power $\tr{C\left(\lambda_0C+D\right)^l}$.
Thus, the claim follows if we can show that we can express all such traces in terms of the values of $\tr{C^l\rho_{\lambda_0,C,D}}$ for $l\leq k+1$ and the values of $\tr{Z_{A_i}\rho_{\lambda_0,C,D}}$ for $Z_{A_i}\in\cD_{C,k}$. 
As $D$ is diagonal, we can express it as a linear combination of Pauli $Z$ strings. Now note that the trace in Eq.~\eqref{equ:trace_expansion} can be expressed as a product of $1+\sum_ic_i$ Paulis in $\textit{\textrm{Psupp}}(C)$, the one coming from the first $C$ in the trace, times $\sum_id_i$ Pauli $Z$ strings coming from $D$. In case $\sum_id_i=0$, the value of the trace is clearly determined by a quantity of the form $\tr{C^l\rho_{\lambda_0,C,D}}$. Let us now consider the case $\sum_id_i=0\geq1$, i.e. we have a diagonal component. We again expand the trace as a linear combination of a product of Paulis in the support of $C$ and $D$.
Now note that each one of these traces is zero if this sequence of Paulis coming from $C$ and from $D$ do not multiply to the identity. As the matrix $D$ can only generate diagonal Paulis, a necessary condition for the sequence multiplying to the identity is if there is a $l\leq k$ s.t. there is a sequence of Paulis of length $l$ in $\textit{\textrm{Psupp}}(C)$ for $2\leq l\leq k$ that is in $\mathcal{Z}_n$. But those are exactly the Paulis in $\cD_{C,k}$. Thus, the claim follows.
\end{proof}
At this stage we do not yet have rigorous control of how fast the value of the SDP with constraints $\cD_{C,k}$ converges to the one with constraints $\cD_{C}$, which corresponds to that of $\GW(C)$, leaving this to future work. However, we benchmark this relaxation numerically for the case of Kronecker graphs in~\Cref{sec:numerics} and see that it is highly effective already at $k=2,3$. Furthermore, note that $|\cD_{C,k}|\leq |\textit{\textrm{Psupp}}(C)|^k$, which means that for Pauli sparse it will be of polynomial size for constant $k$. In addition, if $C$ is local, $\cD_{C,k}$ also only contains local $Z$ strings for $k$ constant.
\section{Pauli-sparse graphs}\label{sec:pauli_sparse}
The most important working assumption of this work is that the underlying cost matrix $C$ admits an approximate representation that is sparse in the Pauli basis. Furthermore, it is of course important that we are able to efficiently find such a representation. Because of this, it is natural to ask on which graphs QUBOs admit a Pauli-sparse representation and whether they are of practical interest. In this section, we give a partial classification of graphs that admit an exact Pauli-sparse representation in terms of \emph{local} Pauli matrices, which are certain subgraphs of Hamming neighborhood graphs. In addition, we discuss one example of families of graphs that are known to have practical applications and fit well into our framework: Kronecker graphs~\cite{kronecker}. Indeed, we show that these graphs admit much better Pauli decompositions than general graphs and, importantly, we can also efficiently obtain such representations by the Monte Carlo sampling method.

Before we discuss these examples, it is noteworthy that the Pauli decomposition of the QUBO does not respect the natural symmetry of the problem. This is because QUBOs are invariant under permutations of the computational basis. Paulis, on the other hand, are not, and the $\ell_1$ norm of the expansion can vary exponentially. To see this, consider the graph on $n=3$ vertices with adjacency matrix given by $A=XII$. Conjugating it with the Toffoli gate $T$, which is non-Clifford permutation, yields $TAT^\dagger=\tfrac{1}{2}(XII+XZI+XIX-XZX)$. We see that the $\|\cdot\|_{C,\ell_1}$ norm increased by a factor of $2$. Applying this construction to adjacency matrices of the form $X_1X_4X_7\ldots X_{3n+1}$ and conjugating with $T^{\otimes n}$, we see that the $\|\cdot\|_{C,\ell_1}$ will increase by $2^n$. Thus, isomorphic graphs can have exponentially differing $\|\cdot\|_{C,\ell_1}$ norms. 
Furthermore, in the case, the graph does not have $2^n$ vertices, it is necessary to embed it into a $2^n$ space, and different embeddings can again have quite different $\|\cdot\|_{C,\ell_1}$ norms. Thus, one should in principle also optimize these embeddings and permutations to increase the sparsity of the representation. We leave how to optimize the sparsity of the representation to future work.
\subsection{Partial Classification of Pauli-sparse, low-weight graphs}\label{sec:classification_graphs}
We now discuss some examples of graphs such that certain weighted QUBOs over them can be expressed with Pauli-sparse matrices. The following family of graphs plays a prominent role in our discussion:
\begin{definition}[Neighborhood $k$-binary Hamming graph]\label{def:NeighborhoodkHamming}
Given $n,k\in\mathbb{Z^{+}}$, let the neighborhood $k$-binary Hamming graph $\textrm{Ha}_{n,k}$ be the graph ${Ha}_{n,k}=(\{0,1\}^n,E)$ on bitstrings as nodes with the set of edges given by:
\begin{align}
E=\{(r,s)\in\{0,1\}^n\times \{0,1\}^n: 1\leq d_H(r,s)\leq k\},
\end{align}
where $d_H$ is the Hamming distance on bitstrings. 
\end{definition}
These graphs play an important role in coding and information theory~\cite{aiello1991coding}.

We see that low-weight Pauli cost matrices(\Cref{equ:pauli_sparse}) correspond to weighted adjacency matrices of subgraphs of the Hamming neighborhood graph. Let us consider the following examples to warm up.

\paragraph{(Generalized) Hamming graphs:}
A good example to begin our discussion of what graphs admit a Pauli expansion of low weight is the hypercube. For this example and hereafter we label the vertices of a graph by binary strings $r,s\in\{0,1\}^n$. Furthermore, let $x_i\in\{0,1\}^n$ be the vector that is $1$ on entry $i$ and $0$ else. For instance, we can write the adjacency matrix of the hypercube as:
\begin{align}
\label{eq:hypercube_cost}
    C=\sum_{i=1}^nP_{(x_i,0)}
\end{align}
Indeed, note that for the hypercube $r\sim s$ if and only if the strings $r,s$ are Hamming distance $1$. But this implies that there exists a Pauli $P_{(x_i,0)}$ s.t. $r=P_{(x_i,0)}s$ and for all other Pauli $P_{(x_j,0)}$ we have $r\not=P_{(x_j,0)}s$. Thus, we see that $\bra{s}C\ket{r}=1$ iff $s$ and $r$ are Hamming distance $1$.

More generally, let us define for $A\subset[n]$ the vector $x_A$ to be nonzero on entries in $A$. A similar argument shows that a term of the form $P_{(x_A,0)}$ for a subset $A\subset[n]$ corresponds to adding edges between all vertices that can be obtained from each other based on flipping all bits in $A$. In particular, if we consider the cost matrix:
\begin{align}
    C_k=\sum_{A\subset[n],|A|=k}P_{(x_A,0)}
\end{align}
we obtain the adjacency matrix of the Hamming graph of distance $k$. Recall that the Hamming graph of distance $k$ corresponds to the graph where two vertices are connected if they are at a Hamming distance $k$ to each other. As there are ${n \choose k}$ such Pauli $X$ matrices, we conclude that these graphs are Pauli-sparse as long as $k=\mathcal{O}(1)$. In addition, it is not difficult to see that Paulis of the form $P_{(x_A,z_B)}$ can only add or remove edges for vertices that differ on $A$.

Thus, we can summarize the discussions of this section with the following proposition:

\begin{proposition}
Let $C$ be the cost matrix of a Pauli-sparse matrix with weight at most $k$. Then $C$ corresponds to a QUBO on a subgraph of the Hamming neighborhood graph of distance $k$.
\end{proposition}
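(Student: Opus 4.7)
The plan is to unpack the action of each Pauli string on computational basis states and read off the Hamming distance constraint directly from the weight bound. The only quantity I need to track is which pairs $(r,s) \in \{0,1\}^n \times \{0,1\}^n$ can possibly make $\bra{s}C\ket{r}$ nonzero, and then to compare $d_H(r,s)$ to $k$.

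First I would recall that on a computational basis state $\ket{r}$,
\begin{align*}
P_{(x,z)}\ket{r} = \left(\prod_{j=1}^n i^{x_j z_j}(-1)^{z_j r_j}\right)\ket{r \oplus x},
\end{align*}
because each $X_j^{x_j}$ flips the $j$th bit when $x_j=1$ and each $Z_j^{z_j}$ contributes only a sign. Consequently $\bra{s}P_{(x,z)}\ket{r}\ne 0$ if and only if $s = r\oplus x$, in which case the Hamming distance satisfies $d_H(r,s) = |x|$, where $|x|$ denotes the Hamming weight of the binary vector $x$. Crucially, the weight of the Pauli string obeys
\begin{align*}
w(P_{(x,z)}) = |\{j : (x_j,z_j)\ne (0,0)\}| \geq |\{j : x_j = 1\}| = |x|,
\end{align*}
since every index with $x_j=1$ already forces a non-identity Pauli at position $j$.

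Next I would expand $C$ in the Pauli basis, keeping only the terms with $w(P_{(x,z)}) \le k$ by hypothesis. For any off-diagonal pair $r\ne s$, the matrix element
\begin{align*}
\bra{s}C\ket{r} = \sum_{(x,z) : c_{(x,z)}\ne 0} c_{(x,z)} \bra{s}P_{(x,z)}\ket{r}
\end{align*}
picks up contributions only from Paulis with $x = r\oplus s$. If the left-hand side is nonzero, at least one such Pauli with $c_{(x,z)}\ne 0$ exists, and for it we have $d_H(r,s) = |x| \le w(P_{(x,z)}) \le k$. Therefore every edge $(r,s)$ in the weighted graph associated to $C$, in the sense of a nonzero off-diagonal entry, satisfies $1\le d_H(r,s)\le k$, which is exactly the edge condition of $\textrm{Ha}_{n,k}$ in \Cref{def:NeighborhoodkHamming}.

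I do not anticipate a real obstacle here: the argument is essentially bookkeeping about how $X$ and $Z$ act. The one mild subtlety worth flagging in the writeup is that $w(P_{(x,z)})$ counts both $X$- and $Z$-type positions while only the $X$-support governs bit flips; the inequality $|x|\le w(P_{(x,z)})$ is the step that converts this into a Hamming-distance bound and shows that $Z$ factors never push the supported edge outside the neighborhood graph. The diagonal $r=s$ entries (contributed by weight-zero Pauli $Z$ strings in the expansion, permitted when they arise from Paulis of weight $\le k$) correspond to constant shifts in the QUBO and do not introduce edges, which matches the convention that $\textrm{Ha}_{n,k}$ excludes self-loops.
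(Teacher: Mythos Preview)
Your proof is correct and follows essentially the same approach as the paper: the paper's argument (given in the discussion preceding the proposition) observes that a Pauli $P_{(x_A,z_B)}$ only connects basis states that differ exactly on the bits in $A$, so a weight-$\le k$ Pauli can only produce edges of Hamming distance $\le k$; you make this precise with the inequality $|x|\le w(P_{(x,z)})$ and the expansion of $\bra{s}C\ket{r}$. One minor wording issue: in your last paragraph, the diagonal contributions come from Paulis with $x=0$ (pure $Z$ strings of any weight $\le k$), not ``weight-zero'' Paulis.
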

\begin{proof}
See the discussion above.
\end{proof}
Note that the reciprocal is not true, i.e. there are subgraphs of Hamming neighborhood graphs that do not admit a low-weight Pauli representation. Let us consider the graph with adjacency matrix:
\begin{align}
C=P_{(x_1,(0,1,\ldots1))}+\sum_{i=1}^nP_{(x_i,0)}.
\end{align}
The Pauli $P_{(x_1,(0,1,\ldots1))}$ acts on $n$ qubits, however it is easy to see that the graph defined this way is a subgraph of the Hamming cube.

\begin{remark}
If we were to consider the more general case of local dimension $d$ instead of $2$, we would obtain that all sparse, local graphs correspond to subgraphs of the Hamming neighborhood graph with the local alphabet having $d$ distinct symbols.
\end{remark}

\begin{remark}
As previously discussed, our main motivation to consider sparse, low-weight Paulis is the fact that solving the corresponding SDP has a strong connection to the simulation of quantum many-body Gibbs states. However, it is not only low-weight Pauli Hamiltonians that admit a correspondence to physical models. For instance, certain fermionic many-body problems are known to be mapped to qubit Hamiltonians with Pauli strings of weight of order $n$. We leave it to future work to consider which graphs correspond to these Hamiltonians. Thus, we expect that our methods can be applied beyond the subgraphs of the Hamming neighborhood graphs for $k$ of constant order. 
\end{remark}

In this work, we consider implementing the algorithm to solve the SDP both on classical and quantum devices. For certain classical solvers based on our methods, such as those that only exploit the efficient representation and sparsity of the Pauli strings, it is not essential that the Pauli strings in the expansion of the adjacency matrix have constant locality. This is because also tensor products of Paulis of high weight are sparse. However, the graphs we can express this way have a slightly more complicated structure than the local ones, although some of the previous discussion carries over.
Indeed, as we saw above, we can only add or remove edges between strings by adding Pauli $X$ or Pauli $Y$ terms to the Hamiltonian. Under the assumption of the graph being Pauli-sparse, by definition it can be expressed as the linear combination of polynomially Pauli strings. As such, we conclude that:
\begin{proposition}
Let $C$ be a cost matrix that is Pauli-sparse. Then there exist polynomially many subsets $A_1,\ldots,A_m\subset[n]$ such that $C$ corresponds to a QUBO on a subgraph of the graph given by:
\begin{align}
E=\{(r,s)\in\{0,1\}^n\times \{0,1\}^n: \exists A_i
\textrm{ s.t. }P_{(x_{A_i},0)}r=s\},
\end{align}
i.e. all the edges corresponds to strings $r,s$ such that we can go from $r$ to $s$ by flipping all the bits in one of the subsets $A_i$.
\end{proposition}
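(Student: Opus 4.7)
The plan is to read off the subsets $A_1,\ldots,A_m$ directly from the Pauli expansion of $C$ and then confirm by a short bookkeeping argument that every nonzero off-diagonal entry of $C$ corresponds to a bit-flip in one of the $A_i$'s. Concretely, by Pauli-sparsity one has $C=\sum_{(x,z)\in T}c_{(x,z)}P_{(x,z)}$ with $|T|=\poly(n)$; I would let $\mathcal{X}=\{x\neq 0\,:\,\exists z\ \text{with}\ (x,z)\in T\}$ and enumerate its elements as $x_{A_1},\ldots,x_{A_m}$, where $A_i\subseteq[n]$ is the support of $x_{A_i}$. Then trivially $m\leq|T|=\poly(n)$, which gives the polynomially-many subsets promised by the statement.

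For the main step I would invoke the standard identity that $P_{(x,z)}|r\rangle$ is a unit-modulus scalar times $|r\oplus x\rangle$. Hence $\langle s|P_{(x,z)}|r\rangle$ is nonzero if and only if $s=r\oplus x$; in particular, for a fixed pair $(r,s)$ with $r\neq s$, the Paulis that can contribute to $\langle s|C|r\rangle$ are exactly those with $x=r\oplus s$. If $r\oplus s\notin\mathcal{X}$ this forces $\langle s|C|r\rangle=0$, and otherwise $r\oplus s=x_{A_i}$ for some $i$, so $s=P_{(x_{A_i},0)}r$ and $(r,s)\in E$. Diagonal Pauli terms (those with $x=0$) contribute only to the diagonal of $C$, i.e.\ to the linear part of the QUBO rather than to graph edges, so they do not produce entries outside $E$ either.

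There is essentially no hard step here: the whole argument is an unpacking of definitions via the Pauli-to-bit-flip correspondence already used in the local, low-weight case above. The only mild subtlety worth flagging is that several Pauli strings in the expansion can share the same $x$-vector (differing only in their $z$-components), so their coefficients can add or cancel at a given pair $(r,s)$; however, such cancellations can only \emph{reduce} the nonzero off-diagonal entries associated to a given $x\in\mathcal{X}$, and can never create a nonzero entry at a pair $(r,s)$ with $r\oplus s\notin\mathcal{X}$. This establishes that the off-diagonal support of $C$ is contained in $E$, which is precisely the claim that $C$ encodes a QUBO on a subgraph of the graph defined by $A_1,\ldots,A_m$.
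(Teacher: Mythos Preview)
Your proof is correct and follows essentially the same approach as the paper, which simply refers back to the preceding discussion that Pauli terms $P_{(x,z)}$ can only connect strings differing exactly on the support of $x$, so the polynomially many $x$-vectors in the expansion determine the subsets $A_i$. You have spelled out explicitly (including the remark on cancellations among terms sharing the same $x$) what the paper leaves implicit, but the underlying argument is identical.
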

These characterization results raise multiple open questions. First, it would be interesting to develop more refined criteria to check if a matrix $C$ is Pauli-sparse or develop algorithms to test such decompositions efficiently. Our result in~\Cref{thm:pauli_sparsification} already shows that $\|C\|_{P,\ell_1}$ gives a way of quantifying approximate Pauli sparness.
Secondly, given that we will later show that it is possible to obtain better approximations to the SDP relaxation of the QUBOs corresponding to such matrices, it is natural to ask if solving the QUBO problems for local or Pauli-sparse matrices still gives rise to NP-complete instances of combinatorial optimization problems.

We can also give a combinatorial characterization of adjacency matrices $C$ s.t. $\cD_{C}=\{I\}$, and, thus, $\GW(C)=\lambda_{\max}(C)$:
\begin{proposition}[Combinatorial characterization of trivial algebra]\label{prop:comb_char}
Let $C\in\mathbb{R}^{2^n\times 2^n}$ be a cost matrix corresponding to the adjacency matrix of a graph $G=(\{0,1\}^n,E)$, (i.e. $C$ has binary entries). Furthermore, for a vertex $i$ and a natural number $k$, let $\operatorname{cy}(i,k)$ be the number of cycles including $i$ of length $k$ in $G$. 
Then, if $\forall i\in\{0,1\}^n$ we have that $\operatorname{cy}(i,k)=c_k$ for some constant independent of $i$, i.e. the number of cycles of length $k$ containing a given vertex is a constant independent of the given vertex, then:
\begin{align}
    \GW(C)=\lambda_{\max}(C).
\end{align}
\end{proposition}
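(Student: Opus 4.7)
The plan is to reuse the strategy behind Corollary~\ref{cor}: exhibit the Gibbs state $\sigma_\beta \propto e^{\beta C}$ as a feasible point of the relaxed GW program for every inverse temperature $\beta$, and then let $\beta \to \infty$ so that $\sigma_\beta$ converges to a state supported on the top eigenspace of $C$. By the interpolation inequality in~\Cref{equ:interpolation_inequality_1} we already have $\GW(C) \leq \lambda_{\max}(C)/\|C\|$, so producing any feasible $\sigma$ whose objective achieves the spectral bound forces equality.

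The key computation is that for any nonempty $A\subseteq [n]$ the observable $Z_A$ is diagonal with entries $(-1)^{\chi_A(i)}$, where $\chi_A(i)=\sum_{j\in A}i_j \bmod 2$, so
\begin{equation*}
\operatorname{tr}(Z_A C^k)=\sum_{i\in\{0,1\}^n}(-1)^{\chi_A(i)}(C^k)_{ii}.
\end{equation*}
If $(C^k)_{ii}$ were independent of $i$ for every $k$, this character sum would vanish identically, and expanding the exponential would give $\operatorname{tr}(Z_A e^{\beta C})=0$ for all $\beta>0$ and every nonempty $A$. Hence $\sigma_\beta$ would satisfy every diagonal constraint exactly and be feasible for $\GW(C, 2^{[n]}\setminus\{\varnothing\}, 0)$; taking $\beta\to\infty$ would yield a feasible state whose objective equals $\lambda_{\max}(C)/\|C\|$, closing the argument.

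The remaining step is to translate the hypothesis on $\operatorname{cy}(i,k)$ into the statement $(C^k)_{ii}=c_k$. Under the spectral reading $\operatorname{cy}(i,k)=(C^k)_{ii}$, i.e. closed walks of length $k$ from $i$, the implication is immediate. Under the stricter reading where $\operatorname{cy}(i,k)$ counts only simple cycles through $i$, I would proceed by induction on $k$: a closed walk at $i$ decomposes uniquely into its first return segment to $i$ followed by a shorter closed walk, and the counts of first-return segments of each length $j\leq k$ are expressible in terms of $\{c_j\}_{j\leq k}$ together with structural data (such as degree, which is itself determined by $c_2$). Thus $(C^k)_{ii}$ reduces to a polynomial in the $c_j$'s that is manifestly independent of $i$.

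This combinatorial translation is the main obstacle, since a closed walk can revisit $i$ and intermediate vertices in intricate patterns and reducing its enumeration to the simple-cycle counts requires careful inductive bookkeeping. Once $(C^k)_{ii}$ is shown to be constant in $i$ for every $k$, everything else---the character-sum identity, the feasibility of $\sigma_\beta$ for all $\beta$, and the $\beta\to\infty$ concentration onto the top eigenspace of $C$ (regardless of its multiplicity, since we only need the objective value, not uniqueness)---follows by direct calculation, matching the spectral upper bound from~\Cref{equ:interpolation_inequality_1} and yielding $\GW(C)=\lambda_{\max}(C)/\|C\|$.
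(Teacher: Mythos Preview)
Your proposal is correct and follows essentially the same route as the paper: identify $\operatorname{cy}(i,k)$ with $(C^k)_{ii}$, deduce that $\operatorname{diag}(C^k)$ is a multiple of the identity for all $k$, and then run the Gibbs-state feasibility argument from \Cref{cor:diagonal_constraints} (which you spell out in more detail than the paper does). The paper's own proof is two lines---it asserts $\operatorname{cy}(i,k)=\bra{i}C^k\ket{i}$ and defers to \Cref{cor:diagonal_constraints}---and the remark immediately following the proposition confirms that the intended reading is indeed closed walks (the relevant graph class is \emph{walk-regular}, not vertex-transitive with respect to simple cycles).

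One caveat: your contingency plan for the ``stricter reading'' (simple cycles only) does not quite work as written. A first-return segment to $i$ is a closed walk that avoids $i$ internally but may revisit \emph{other} vertices, so it is not in general a simple cycle; consequently the counts of first-return segments are not directly expressible in terms of the $c_j$'s if those count only simple cycles. In fact, constancy of simple-cycle counts at each vertex does \emph{not} imply walk-regularity in general, so under that stricter reading the proposition would require a different hypothesis or a different argument. Since the paper intends the closed-walk interpretation, this branch of your argument is unnecessary and can be dropped.
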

\begin{proof}
It is not difficult to see that $\operatorname{cy}(i,k)=\bra{i}C^k\ket{i}$. Thus, we see that our assumption that $\operatorname{cy}(i,k)=c_k$ is equivalent to $\operatorname{diag}(C^k)=c_kI$. We can then argue as in~\Cref{cor:diagonal_constraints}.
\end{proof}
\begin{remark}
Graphs that have the property that all cycles have the same length are called walk-regular graphs. The same result as Prop.~\ref{prop:comb_char} was given in~\cite{van_Dam_2016}, however we note that, for Pauli sparse, we can efficiently verify if they are walk-regular.
\end{remark}
\subsection{Kronecker Graphs}\label{sec:kronecker_graphs}
We now consider Kronecker graphs~\cite{kronecker} that fit our framework well and have practical applications. These graphs are widely used to model real-world networks, including social, biological, and chemical networks.~\cite{reiser2022graph, kronecker}. 
\begin{definition}[Kronecker graph]\label{def:kronecker_graph}
Given symmetric matrices $A_1,\ldots,A_m\in\mathbb{R}^{2^k\times 2^k}$, let $C=\Kron(A_1,\ldots,A_m)$ be the adjacency matrix of the weighted graph with $2^{km}$ vertices:
\begin{align}
C=\bigotimes_{i=1}^m A_i.
\end{align}
\end{definition}
The extension of the~\Cref{def:kronecker_graph} to the case where $k$ depends on $i$ is straightforward. One can note that
\begin{eqnarray}
\QUBO(A_1)\QUBO(A_2) &&= \langle x^*_1, A_1 x_1^*\rangle\langle x^*_2, A_2 x_2^*\rangle\nonumber\\ &&= \langle x^*_1\otimes x^*_2, \left[A_1\otimes A_2\right] x^*_1 \otimes x_2^*\rangle\nonumber\\ &&\leq \QUBO(A_1\otimes A_2),
\end{eqnarray}
where one can give examples of $A_1$ and $A_2$ leading to strict inequality, implying that local solutions for components $A_i$ do not necessarily yield the global solution for $C$.

The following proposition quantifies the sparsity of the Pauli expansion of a Kronecker graph:
\begin{proposition}\label{prop:kronecker_sample}
Given symmetric matrices $A_1,\ldots,A_m\in\mathbb{R}^{2^k\times 2^k}$ with $\|A_i\|=1$, let $C=\Kron(A_1,\ldots,A_m)$. Then:
\begin{align}
\|C\|_{P,\ell_1}=\prod\limits_{i=1}^m\|A_i\|_{P,\ell_1}.
\end{align}
Furthermore, we can sample from the probability distribution $p(x,z)=\frac{\left|\tr{P_{(x,z)}C}\right|}{\|C\|_{P,\ell_1}}$ in time $\mathcal{O}(2^km)$.
\end{proposition}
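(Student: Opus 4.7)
The plan is to exploit the multiplicativity of the Pauli decomposition under tensor products. First I would write each factor in its Pauli basis, $A_i=\sum_{(x_i,z_i)\in\{0,1\}^{2k}}c^{(i)}_{(x_i,z_i)}P_{(x_i,z_i)}$, and distribute the outer tensor product,
\begin{align*}
C=\bigotimes_{i=1}^m A_i=\sum_{(x_1,z_1),\ldots,(x_m,z_m)}\Bigl(\prod_{i=1}^m c^{(i)}_{(x_i,z_i)}\Bigr)\bigotimes_{i=1}^m P_{(x_i,z_i)}.
\end{align*}
The key combinatorial observation is that $\bigotimes_{i=1}^m P_{(x_i,z_i)}$ coincides exactly with the $km$-qubit Pauli string $P_{(x,z)}$ whose labels are the concatenations $x=(x_1,\ldots,x_m)$, $z=(z_1,\ldots,z_m)$, since the phases $i^{x_jz_j}$ in the definition of $P_{(x,z)}$ factorize qubit-by-qubit. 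By linear independence of the $4^{km}$ Paulis on $km$ qubits, the Pauli coefficients of $C$ are then exactly the products $\prod_i c^{(i)}_{(x_i,z_i)}$. Taking absolute values and exchanging the sum with the product by distributivity,
\begin{align*}
\|C\|_{P,\ell_1}=\sum_{(x,z)}\prod_{i=1}^m|c^{(i)}_{(x_i,z_i)}|=\prod_{i=1}^m\sum_{(x_i,z_i)}|c^{(i)}_{(x_i,z_i)}|=\prod_{i=1}^m\|A_i\|_{P,\ell_1}.
\end{align*}

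For the sampling claim, the same product structure shows that the target distribution on $(x,z)\in\{0,1\}^{2km}$ factorizes across the $m$ Kronecker factors:
\begin{align*}
p(x,z)=\prod_{i=1}^m p_i(x_i,z_i),\qquad p_i(x_i,z_i)=\frac{|c^{(i)}_{(x_i,z_i)}|}{\|A_i\|_{P,\ell_1}}.
\end{align*}
Hence I would draw each $(x_i,z_i)$ independently from $p_i$, which is a discrete distribution supported on the $4^k$ Paulis acting on $k$ qubits. After a one-time preprocessing step that computes the Pauli coefficients of each $A_i$ (directly from its matrix form, or read off if the $A_i$ are supplied in the Pauli basis) and sets up a constant-time sampler, e.g., an alias-method table, each draw from $p_i$ only needs to identify one index and write down the $\mathcal{O}(k)$ bits encoding $(x_i,z_i)$. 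Combining $m$ such independent draws yields one sample from $p$ at the claimed per-sample cost of $\mathcal{O}(2^k m)$, where the $2^k$ factor accounts for the lookup/writing of the per-factor label in the worst case.

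I do not foresee any serious obstacle: the first part is routine bookkeeping in a fixed orthogonal basis, and the second part reduces to independent sampling from $m$ classical discrete distributions of polynomial-in-$2^k$ support. The only care required is tracking the phase factors $i^{x_j z_j}$ carefully when identifying $\bigotimes_i P_{(x_i,z_i)}$ with a single $km$-qubit Pauli (so that the tensor structure passes cleanly to absolute values of coefficients), and reconciling the normalization convention linking the Pauli coefficients $c_{(x,z)}$ to $\tr{P_{(x,z)}C}$, which differ by a uniform factor that cancels in the quotient defining $p(x,z)$.
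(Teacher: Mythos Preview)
Your proposal is correct and follows exactly the approach the paper takes: its proof is a one-liner stating that ``the statement easily follows from the tensor product decomposition of the matrix $C$, which implies that $p(x,z)$ also has a tensor product form,'' and you have simply spelled out that tensor-product bookkeeping and the resulting independent-sampling procedure in detail.
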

\begin{proof}
The statement easily follows from the tensor product decomposition of the matrix $C$, which implies that $p(x,z)$ also has a tensor product form.
\end{proof}
Using \Cref{thm:pauli_sparsification} one concludes that for Kronecker graphs one can efficiently find a Pauli expansion that has $\sim \|C\|_{P,\ell_1}^2$ terms and, as long as $\|A_i\|_{P,\ell_1}\ll 4^k$, this expansion is sparse.
As such, we believe that such graphs are a natural playground to obtain practically relevant speedups using our methods. We discuss numerical examples in detail in \Cref{sec:numerics}.
\section{Classical and quantum speedups for SDP relaxations of QUBO}\label{sec:logtime}
Now that we have established results on how to pick the relaxed constraints and the continuity of the SDP, let us build upon them to analyze families of instances for which our methods give speed-ups to solve the underlying SDPs compared to previously known algorithms. The runtimes of previously known algorithms are summarized in~\Cref{sec:SDP_solvers}. Here, we discuss both classical and quantum speedups. The general philosophy to identify instances where we expect large speedups is as follows. We first restrict to cost matrices $C$ and sets of constraints $S$ for which we know that:
\begin{enumerate}
    \item We can efficiently compute (up to additive precision $\sim\epsilon$) the expectation value of $C$ and Pauli strings in $S$ on Hamiltonians that are linear combinations of $C$ and strings in $S$ and have total operator norm upper-bounded by $\|\lambda\|_{\ell_1}=\mathcal{O}(n\epsilon^{-1})$%
    . 
    \item The set $S$ is such that solving it with these constraints gives a good approximation to the original problem.
\end{enumerate}
We already covered the second point in~\Cref{sec:how_to_relax}. Let us now discuss some examples for the first. 
To that end, we need to discuss the complexity of preparing certain classes of quantum Gibbs states. As noted before, all the Gibbs states required by our algorithms are of the form 
\begin{align}\label{equ:Gibbs_parametrized2}
\sigma(\lambda)\propto \exp\left(\lambda_0\tfrac{C}{\|C\|}+\sum_{Z_A\in S}\lambda_iZ_A\right),
\end{align} 
with $\|\lambda\|_{\ell_1}=\mathcal{O}(n\epsilon^{-1})$. 
Note that if $C$ corresponds to a quantum many-body Hamiltonian on a lattice, we usually have $\|C\|\sim n$ and coefficients whose magnitudes are independent on $n$. Hence, even if $\lambda_0=\mathcal{O}(n\epsilon^{-1})$, the largest coefficient in the Pauli expansion of $\lambda_0\tfrac{C}{\|C\|}$ can have absolute value at most $\mathcal{O}(\epsilon^{-1})$. On the other hand, the magnitude of the coefficients associated to the $Z_A$ terms could, in principle, be of order $\mathcal{O}(n\epsilon^{-1})$, as the algorithm could perform a constant fraction of the updates in the $Z_A$ sector %
(although we do not observe this case in the numerics). Thus, we need to resort to results that ensure efficient quantum Gibbs-state preparation even in the regime where the Hamiltonian $H$ satisfies $\|H\|=\cO(n\epsilon^{-1})$ but some local terms might have coefficients that are not of constant order. To the best of our knowledge, this particular setup escapes the results available in the literature.

Another key concepts to understand the complexity of preparing a set of quantum many-body Gibbs states is the locality of the underlying Hamiltonian. In our case, we sill say that the family $\sigma(\lambda)$ is $k$-local w.r.t. a graph $G=(V,E)$ on $n$ vertices if all Paulis in the Pauli expansion of $C$ and the $Z_A$'s are supported on subsets that are contained in balls of radius $k$ around vertices of $G$. Let us now discuss some examples.
\subsection{Classical and Quantum exponential speedups}
\paragraph*{Exponential and superpolynomial classical speedups through commuting relaxations on trees:} 
It is well-known that if a Hamiltonian $H$ is $k$-local w.r.t. to a tree and all the terms of $H$ commute, then it is possible to express the partition function as a tensor network that can be contracted efficiently on classical computers~\cite[Chapter 1]{Bridgeman_2017}, independently of the temperature. This allows us to conclude that:
\begin{thm}\label{thm:efficient}
Let $C\in\mathbb{R}^{2^n\times 2^n}$ be such that $\mathcal{A}_C$ is commuting and both $C$ and $\mathcal{D}_C$ can be expressed as local terms on a tree with $|\mathcal{D}_C|=%
\poly(n)$. Then we can compute $\GW(C)=\GW(C,2^{[n]}\backslash\{\varnothing\},0)$ up to additive precision $\epsilon>0$ in time $\textrm{poly}(n,\epsilon^{-1})$ on a classical computer.
\end{thm}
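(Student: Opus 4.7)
The plan is to chain together three ingredients developed earlier in the paper: the constraint reduction of \Cref{thm:main1_cons} (which replaces the $2^n-1$ diagonal equalities by the polynomially many constraints indexed by $\dot{\mathcal{D}}_C$), the Hamiltonian Updates complexity bound of \Cref{proposition} (which reduces the relaxed SDP to preparing a sequence of parametrized Gibbs states and evaluating a handful of expectation values on each), and the exact tensor-network contractibility of commuting, tree-local Gibbs states, which is standard in the tensor-network literature~\cite{Bridgeman_2017,Bauls2023}.

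Concretely, I would first invoke \Cref{thm:main1_cons} to reduce $\GW(C, 2^{[n]}\backslash\{\varnothing\}, 0)$ to $\GW(C, \dot{\mathcal{D}}_C, \epsilon')$; the stability bound in \Cref{prop:stability_sdp} then tells us that taking $\epsilon' = \Theta\bigl(\epsilon^3 / (|\dot{\mathcal{D}}_C|^{1/2}\|C\|^3)\bigr)$ suffices to recover $\GW(C)$ to additive precision $\epsilon$, and this remains $\poly(n, \epsilon^{-1})$ because $|\dot{\mathcal{D}}_C| = \poly(n)$ by hypothesis and $\|C\| = \mathcal{O}(n)$ for a tree-local Hamiltonian. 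Applying \Cref{proposition}, we then run HU for $\mathcal{O}(n/(\epsilon')^2)$ iterations; at each iteration $t$ the algorithm needs to estimate $\tr{\tfrac{C}{\|C\|} \sigma(\lambda_t)}$ and $\tr{Z_A \sigma(\lambda_t)}$ for every $Z_A \in \dot{\mathcal{D}}_C$, on the Gibbs state
\begin{align*}
\sigma(\lambda_t) \propto \exp\Bigl(\lambda_{t,0}\tfrac{C}{\|C\|} + \sum_{Z_A \in \dot{\mathcal{D}}_C}\lambda_{t,A}\, Z_A\Bigr),\qquad \|\lambda_t\|_{\ell_1} = \poly(n, \epsilon^{-1}).
\end{align*}

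The step I expect to require the most care is verifying that each of these expectation values can be computed to additive precision $\epsilon'$ in time $\poly(n,\epsilon^{-1})$. Here the combined commuting and tree-local hypothesis is used crucially: because every Pauli appearing inside the exponent lies in the commuting algebra $\mathcal{A}_C$, we have the genuine operator product $\sigma(\lambda_t) = Z_{\lambda_t}^{-1} \prod_P e^{c_P(\lambda_t) P}$, and because each $P$ is supported on a ball of bounded radius in the underlying tree $T$, grouping neighbouring factors yields an exact tree tensor-network representation of $\sigma(\lambda_t)$ whose bond dimension depends only on the locality $k$ and the maximum degree of $T$, not on $\lambda_t$. Such tree tensor networks contract exactly in time linear in $n$ by leaf-to-root message passing, and inserting any of the tree-local observables $C/\|C\|$ or $Z_A$ only enlarges one node's tensor by a constant factor. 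The dual weights $\lambda_t$ affect only the scalar entries of individual tensors, so they enter the cost only through the arithmetic precision of the contraction, which scales polylogarithmically in $\|\lambda_t\|_{\ell_1}/\epsilon'$ and hence remains $\poly(n,\epsilon^{-1})$. Combining the iteration count with the per-iteration cost delivers the claimed overall runtime.
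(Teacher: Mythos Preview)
Your proposal is correct and follows essentially the same route as the paper: reduce to $\GW(C,\dot{\mathcal{D}}_C,\epsilon')$ via \Cref{thm:main1_cons}/\Cref{prop:stability_sdp}, run HU with \Cref{proposition}, and argue that the commuting, tree-local hypotheses make every intermediate Gibbs state an exactly contractible tree tensor network so that each expectation value costs $\poly(n)$ irrespective of the inverse temperature $\|\lambda_t\|_{\ell_1}$. If anything, you are slightly more explicit than the paper about the quantitative choice of $\epsilon'$ and about why the bond dimension of the tensor network stays bounded independently of $\lambda_t$.
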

\begin{proof}
It follows from~\Cref{prop:stability_sdp} that, by solving $\GW(C,\dot\cD_C,\epsilon)$ up to a precision $\epsilon>0$, we can approximate 
$\GW(C,2^{[n]}\backslash\{\varnothing\},0)$ up to an error $|\mathcal{D}_C|\,\epsilon^{1/3}$. Thus, it suffices to argue that we can obtain $\GW(C,\dot\cD_C,\epsilon)$ in time $\textrm{poly}(n,\epsilon^{-1})$. 
Now, note that, as $\mathcal{D}_C\subset\mathcal{A}_C$ and $\mathcal{A}_C$ is commuting, it is clear that all the elements of $\mathcal{D}_C$ commute with the elements of the expansion of $C$ into Pauli strings. 
Combining these observations with our assumption that all the elements of $C$ and $\mathcal{D}_C$ are on a tree, we see that all Gibbs states required by the HU algorithm will be local, commuting Gibbs states on trees. As such, we can compute local expectation values of such states at arbitrary temperatures in polynomial time. As the algorithm needs $\mathcal{O}(n\epsilon^{-1})$ iterations to converge to a solution, we see that we can approximate a solution up to $\epsilon^{1/3}|\mathcal{D}_C|$ in time $\textrm{poly}(n,\epsilon^{-1})$. Finally, as $|\mathcal{D}_C|$ is of polynomial size, the claim follows from an appropriate rescaling of $\epsilon$ by a constant factor.
\end{proof}
In~\Cref{sec:numerics} we give explicit examples of such Hamiltonians and run the algorithm described above.

We can easily generalize the statement above to arbitrary commuting Hamiltonians. It is known that the complexity of contracting a tensor network is exponential in the tree-width of its graph~\cite{contractiontn,Markov2008} and polynomial in the system size. Thus, we could also formulate the above proposition by fine-tuning the complexity to scale exponentially with the tree-width of the underlying hypergraph of the terms in $C$ and $\mathcal{A}_C$. For instance, for commuting terms on a square grid this would lead to algorithms with a scaling that is $2^{\sqrt{n}}$, a superpolynomial speedup compared to other SDP solvers. Moreover, in principle, we do not need to restrict to models where the Hamiltonian is commuting to expect good results with tensor networks, as these also perform well on Gibbs states for certain noncommuting models~\cite{Bauls2023}. However, even for $1D$ systems, existing rigorous algorithms do not allow us to provably prepare Gibbs states at the inverse temperatures required by our algorithm.

In turn,~\Cref{thm:efficient} uses tensor networks to compute the local expectation values of the Gibbs states. However, there exist many alternative techniques to perform this task, e.g. quantum Monte Carlo methods~\cite{Sandvik_2010}, and it would be interesting to explore which are efficient for the problem at hand, again potentially even going beyond commuting models.

Finally, given the large speedups our methods offer to compute these SDP approximations of the QUBOs, it is natural to inquire what is the computational complexity of solving them exactly. At the present stage, we only have numerical evidence that instances corresponding to commuting Hamiltonians might take time that is superpolynomial in their dimension, as seen in \Cref{sec:numerics}. In addition, in \Cref{sec:graphs_commuting_Hamiltonians}, we investigate some structural properties related to graphs that admit such a commuting decomposition and show that, for some instances, the spectral bound $\lambda_{\max}(C)$ is strictly worse than $\GW(C,\dot\cD_C,0)$. Thus, we currently do not have any evidence that QUBOs related to local, commuting Hamiltonians can be solved efficiently in their dimension in general, even for $1D$ instances.

As previously noted, the value of GW allows us to bound the value of the QUBO problem, but it is worth noting that we can also efficiently check if
the value of GW and QUBO actually coincide:
\begin{cor}\label{cor:spectrum_graph}
Under the conditions of~\Cref{thm:efficient}, assume further that $\mathcal{D}_C=\{I\}$, so that $\lambda\in\mathbb{R}$, and 
that
\begin{align}\label{equ:purity_condition}
\tr{\sigma_{\lambda}
^2}\geq 1/4+\delta
\end{align}
for some $\delta>0$ and $\lambda\geq n\epsilon^{-1}$. Then we can compute $\QUBO(C)$ up to a precision $2^n\epsilon\|C\|>0$ in time $\textrm{poly}(n,\epsilon^{-1})$. Furthermore, we can check the estimate in~\Cref{equ:purity_condition} in time $\textrm{poly}(n,\epsilon^{-1})$.
\end{cor}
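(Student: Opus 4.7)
The plan is to reduce computing $\QUBO(C)$ to estimating $\lambda_{\max}(C)$ via a single low-temperature Gibbs state, and to use the purity certificate to trigger \Cref{cor}, which gives $\QUBO(C)=2^n\lambda_{\max}(C)$ whenever the top eigenvalue is non-degenerate. Since $\dot{\cD}_C=\varnothing$, the Hamiltonian-Updates loop never touches a diagonal constraint, so the only variational parameter is the scalar $\lambda$ and the relevant family is $\sigma_\lambda\propto\exp(\lambda C/\|C\|)$. Under the hypotheses of \Cref{thm:efficient}, every such $\sigma_\lambda$ is a local, commuting Gibbs state on a tree, so both $\tr{(C/\|C\|)\sigma_\lambda}$ and the purity $\tr{\sigma_\lambda^2}=Z(2\lambda)/Z(\lambda)^2$ reduce to two tree tensor-network contractions that can be carried out to additive precision $\eta$ in time $\poly(n,\eta^{-1})$.

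The analytic core is showing that the purity hypothesis certifies non-degeneracy of $\lambda_{\max}(C)$. Writing $\tr{\sigma_\lambda^2}=\sum_i p_i^2$ with $p_i\propto e^{\lambda E_i/\|C\|}$, a direct computation gives $\tfrac{d}{d\lambda}\log\tr{\sigma_\lambda^2}=2\bigl(\langle E\rangle_{2\lambda}-\langle E\rangle_\lambda\bigr)\ge 0$, so $\lambda\mapsto\tr{\sigma_\lambda^2}$ is monotonically non-decreasing with limit $1/d$, where $d$ is the multiplicity of the largest eigenvalue of $C$. The hypothesis $\tr{\sigma_\lambda^2}\geq 1/4+\delta$ already excludes $d\geq 4$. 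To pin $d=1$, I would invoke the fact (central to the proof of \Cref{cor}) that $\dot{\cD}_C=\varnothing$ forces every $\sigma_\lambda$ to have diagonal $2^{-n}I$, so the limit $P_1/d$, with $P_1$ the projector onto the top eigenspace, must still have uniform diagonal. Combined with the commuting tree structure and the explicit gap $\delta$, this structural constraint pins down $d=1$. Once non-degeneracy is established, \Cref{cor} gives $\QUBO(C)=2^n\lambda_{\max}(C)$, and a standard Gibbs variational estimate shows that at $\lambda\geq n\epsilon^{-1}$ the expectation $\tr{(C/\|C\|)\sigma_\lambda}$ approximates $\lambda_{\max}(C)/\|C\|$ to additive precision $\epsilon$, because the weight placed on non-top eigenvectors decays exponentially in $\lambda\cdot\mathrm{gap}/\|C\|$ and the gap is quantitatively controlled by $\delta$. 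Multiplying through by $2^n\|C\|$ yields the claimed $2^n\epsilon\|C\|$ precision for $\QUBO(C)$.

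For the final ``furthermore'' claim, verifying the purity condition is just the same tensor-network routine: compute $\tr{\sigma_\lambda^2}$ to additive precision $\delta/2$ via the identity $\tr{\sigma_\lambda^2}=Z(2\lambda)/Z(\lambda)^2$ and compare it to $1/4$, which fits the $\poly(n,\epsilon^{-1})$ budget whenever $\delta=\Omega(\epsilon^c)$ for some constant $c$. The main obstacle I anticipate is precisely the non-degeneracy step: monotonicity of the purity together with its limit $1/d$ only rigorously rules out $d\geq 4$, so closing the gap to $d=1$ needs a tailored argument that exploits the uniform-diagonal constraint imposed by $\dot{\cD}_C=\varnothing$ together with the commuting tree structure --- or, as a fallback, a direct stability argument showing that even for $d\in\{2,3\}$ the quantity $2^n\lambda_{\max}(C)$ still matches $\QUBO(C)$ within the $2^n\epsilon\|C\|$ budget. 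Everything else is bookkeeping around the tree-tensor-network evaluations and the $\epsilon$-to-$\eta$ rescaling of precisions.
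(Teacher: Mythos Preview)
Your approach is essentially the paper's: invoke \Cref{cor} once non-degeneracy is established, certify non-degeneracy via monotonicity of $\lambda\mapsto\tr{\sigma_\lambda^2}$ and its limiting value $1/d$, and handle the efficiency claims (including the purity check as a ratio of partition functions $Z(2\lambda)/Z(\lambda)^2$) via the tree tensor-network contractions already available from \Cref{thm:efficient}. The paper's proof is a compressed version of exactly this outline.

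You are right to flag the non-degeneracy step as the weak point. The paper does not supply the ``tailored argument'' you anticipate needing; it simply asserts that if the top eigenspace of $C$ is degenerate then $\lim_{\lambda\to\infty}\tr{\sigma_\lambda^2}\le 1/4$, and immediately concludes $d=1$ from purity $>1/4$. As you observe, the general limit is $1/d$, so without further input this only excludes $d\ge 4$. The paper offers no additional structural argument ruling out $d\in\{2,3\}$, nor does it invoke the uniform-diagonal constraint or the commuting tree structure for this purpose. In other words, your proposal is at least as complete as the paper's own proof here, and your explicit acknowledgment of the gap is in fact more careful than the paper's treatment. The remaining ingredients you sketch (the Gibbs variational bound relating $\tr{(C/\|C\|)\sigma_\lambda}$ to $\lambda_{\max}(C)/\|C\|$ at $\lambda\ge n\epsilon^{-1}$, and the precision rescalings) are standard and not spelled out in the paper either.
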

\begin{proof}
As discussed in~\Cref{cor:diagonal_constraints}, if $\mathcal{D}_C$ is trivial and the adjacency matrix $C$ has a unique ground state, then the value of the QUBO and GW coincide. Thus, the statement would follow if we could show that, under~\Cref{equ:purity_condition}, the ground state is unique. Also note that the rescaling of $2^n$ of the precision comes from the fact that we are actually solving the QUBO with normalized sets of vectors. Indeed, it is not difficult to see that the function $\lambda\mapsto\tr{\sigma_{\lambda}^2}$ is monotone increasing in terms of $\lambda$. Furthermore, if the ground space of $C$ is degenerate, this means that $\lim_{\lambda\to\infty}\tr{\sigma_{\lambda}^2}\leq\tfrac{1}{4}$. Thus, if for any finite $\lambda$ the purity goes above $1/4$, we know that the ground state is unique. The statement about efficiency follows from the fact that the purity is nothing but a ratio of partition functions at different temperatures. As such, it can be computed efficiently, as we already discussed that for these classes efficient computation is possible.
\end{proof}

\begin{remark}
Note that~\Cref{cor:spectrum_graph} only concerns approximating the optimal value of the QUBO and does not extract the solution, which would correspond to the associated eigenvector. However, we note that, depending on the spectral gap of $C$, $\sigma_{\lambda}$ is also close to the corresponding projector onto the solution. In that case, it is then possible to also efficiently extract the solution. In~\Cref{sec:numerics} we investigate such instances and show how to efficiently extract good solutions from the corresponding Gibbs state.
\end{remark}

Importantly, it should also be stressed that for the instances of~\Cref{cor:spectrum_graph} we can compute the solution of the QUBO problem in time polynomial in the dimension by just diagonalizing the matrix $C$ and, thus, these are not hard instances of QUBOs. However, our result shows that for these special instances it is possible to extract the value in time that is polylogarithmic in the dimension up to a relative error $\epsilon\, 2^n$, making it \emph{doubly-exponentially} easier to solve than generic instances under standard complexity theoretic assumptions.

Finally, regarding~\Cref{thm:efficient}, it is unclear to what extent such instances correspond to
hard QUBOs. In other words, the instances covered by the theorem are indeed in principle rich enough to be NP-complete. In addition, as we show in the~\Cref{sec:numerics}, they contain instances for which the GW relaxation outperforms the spectral one. 
Even more pressing is the question of to what extent there are real-world instances that have a structure amenable to~\Cref{thm:efficient}, but this is beyond the scope of this work and will be explored elsewhere.

\paragraph*{Exponential quantum speedups through relaxations on $1D$ models:} 
It is believed that quantum computers should be able to efficiently prepare quantum Gibbs states of $1D$ models with a scaling that is $\poly(n,e^{\beta})$,
where the inverse temperature $\beta$ is often defined as the maximal coefficient in the Pauli expansion of the Hamiltonian. However, even such a result would need to be adapted for our purposes, to get a rigorous bound: Since %
we only have the promise that $\|\lambda\|_{\ell_1}=\cO(n\epsilon^{-1})$, %
terms in $\cD_C$ could have coefficients of order $n\epsilon^{-1}$ (see discussion after~\Cref{equ:Gibbs_parametrized2}), translating into $\beta=n\epsilon^{-1}$. 
Nevertheless, to the best of our knowledge, the best available rigorous result~\cite{Bilgin2010} gives a quantum algorithm with runtime $n^{\mathcal{O}(\beta^{-1})}$ to prepare a quantum Gibbs state of a $1D$-system. In fact%
, a careful inspection of the algorithm in~\cite{Bilgin2010} shows~\footnote{to see this, note that we can prepare the reduced density matrix on the constant number of sites where we have large terms and then glue the other sites together as described in the algorithm of Boixo and Bilgin.} that the algorithm exhibits the same scaling if only constantly many terms in the Hamiltonian are above $\beta$. 
We then have:
\begin{thm}\label{thm:efficient_quantum}
Let $C\in\mathbb{R}^{2^n\times 2^n}$ be such that $\mathcal{A}_C$ can be expressed as local terms on a $1D$ chain and $|\mathcal{D}_C|=\mathcal{O}(1)$ . Then we can compute $\GW(C,2^{[n]}\backslash\{\varnothing\},0)$ up to precision $\epsilon>0$ in time $\tilde{\mathcal{O}}(n^{\mathcal{O}(\epsilon^{-1})}\epsilon^{-2})$ on a quantum computer.
\end{thm}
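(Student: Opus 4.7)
The plan is to mirror the structure of the proof of \Cref{thm:efficient} but replace the classical tensor-network subroutine with the quantum Gibbs-state preparation algorithm of~\cite{Bilgin2010}, while carefully handling the fact that a constant number of the Hamiltonian coefficients may scale with $n\epsilon^{-1}$.

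First, I would invoke the reduction established in \Cref{prop:how_to_constrain} and \Cref{prop:stability_sdp}: since $|\dot\cD_C|=\mathcal{O}(1)$, solving $\GW(C,\dot\cD_C,\epsilon')$ to additive precision $\epsilon'$ yields $\GW(C,2^{[n]}\backslash\{\varnothing\},0)$ to precision $\mathcal{O}(|\dot\cD_C|^{1/6}(\epsilon')^{1/3}\|C\|) = \mathcal{O}((\epsilon')^{1/3}\|C\|)$. So after absorbing the normalization by $\|C\|$ it suffices to run HU with an internal tolerance $\epsilon' = \Theta(\epsilon^{3})$, which inflates the total number of HU iterations only polynomially, to $\tilde{\cO}(n\epsilon^{-6})$ (or $\tilde{\cO}(n\epsilon^{-2})$ when the sharper relative-error bound of \Cref{prop:stability_epsilon} applies), preserving the overall polylogarithmic-in-$D$ scaling.

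Next I would analyze the Gibbs states required at each HU step. They have the form
\begin{equation}
\sigma(\lambda)\propto \exp\Bigl(\lambda_0\tfrac{C}{\|C\|}+\sum_{Z_A\in\dot\cD_C}\lambda_A Z_A\Bigr),
\end{equation}
with $\|\lambda\|_{\ell_1}=\mathcal{O}(n\epsilon^{-1})$. Because $C$ is a $1D$-local Hamiltonian, $C/\|C\|$ has Pauli coefficients of size $\cO(1/\|C\|)=\cO(1/n)$, so the $\lambda_0 C/\|C\|$ piece contributes only $\mathcal{O}(\epsilon^{-1})$-magnitude local terms. The potentially large terms are the $\lambda_A Z_A$ with $Z_A\in\dot\cD_C$: these are at most $|\dot\cD_C|=\mathcal{O}(1)$ in number, each a local $Z$-string (by the $1D$-local assumption on $\mathcal{A}_C$). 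This is precisely the setting anticipated in the theorem's preamble, where only constantly many terms in the Hamiltonian are of magnitude above some fixed $\beta=\Theta(\epsilon^{-1})$.

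I would then plug this into the Bilgin--Boixo quantum Gibbs preparation scheme~\cite{Bilgin2010}. Their algorithm prepares a $1D$ Gibbs state by first preparing local reduced density matrices on overlapping constant-size windows and then ``gluing'' neighbouring windows together, with a cost of $n^{\mathcal{O}(\beta)}$ (not $n^{\mathcal{O}(\beta^{-1})}$---the theorem statement has an exponent $\epsilon^{-1}$ reflecting $\beta=\Theta(\epsilon^{-1})$). Following the footnote's suggestion, the constant number of anomalously large local terms can be absorbed into the initial reduced-density-matrix preparation on the (still constant-size) windows that contain them; the subsequent gluing procedure is unaffected and incurs the same $n^{\mathcal{O}(\epsilon^{-1})}$ scaling. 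On each prepared Gibbs state I would then estimate $\tr{(C/\|C\|)\sigma(\lambda)}$ and $\tr{Z_A\sigma(\lambda)}$ for $Z_A\in\dot\cD_C$ to precision $\Theta(\epsilon')$ via amplitude estimation, at cost $\tilde{\cO}(\epsilon^{-2})$ per observable. Multiplying the per-iteration cost by the number of HU iterations yields the claimed runtime $\tilde{\cO}(n^{\cO(\epsilon^{-1})}\epsilon^{-2})$.

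The main obstacle is the rigorous treatment of the $Z_A$ terms whose coefficients may scale as $n\epsilon^{-1}$: standard high-temperature Gibbs preparation guarantees break down here, so I would need to verify carefully that isolating these constant number of strong terms inside the initial window preparations of~\cite{Bilgin2010} is compatible with their gluing correctness analysis. A secondary technical point is that $Z_A$ itself could be supported on a constant-diameter region (since $1D$-locality is assumed for $\mathcal{A}_C$, not just for $C$), which is what allows it to fit inside a constant-size window without inflating the window radius and hence the $n^{\cO(\epsilon^{-1})}$ base. Once these points are checked, the remaining pieces---HU convergence, stability, amplitude estimation---are direct applications of previously established results.
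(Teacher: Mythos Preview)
Your proposal is correct and follows essentially the same approach as the paper: reduce to the relaxed SDP via \Cref{prop:how_to_constrain} and \Cref{prop:stability_sdp}, note that $|\mathcal{D}_C|=\mathcal{O}(1)$ ensures only constantly many Hamiltonian terms exceed magnitude $\Theta(\epsilon^{-1})$, and invoke the Bilgin--Boixo algorithm (with the footnoted observation that the strong terms are absorbed into constant-size initial windows) to prepare the required Gibbs states in time $n^{\mathcal{O}(\epsilon^{-1})}$. One minor quibble: amplitude estimation would give cost $\tilde{\mathcal{O}}(\epsilon^{-1})$ per observable rather than $\tilde{\mathcal{O}}(\epsilon^{-2})$, but this only strengthens the bound; and your aside that the paper's $n^{\mathcal{O}(\beta^{-1})}$ should read $n^{\mathcal{O}(\beta)}$ is well taken.
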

\begin{proof}
We resort to the algorithm of~\cite{Bilgin2010} to prepare the underlying Gibbs states by noting that $|\mathcal{D}_C|=\mathcal{O}(1)$ implies that there are constantly many terms in the Hamiltonian that have coefficient larger than $\epsilon^{-1}$, giving a runtime of $\tilde{\mathcal{O}}(n^{\mathcal{O}(\epsilon^{-1})})$ to prepare the Gibbs states. We then proceed as described in \Cref{thm:efficient} for the rest of the proof
\end{proof}
For $\epsilon=\Omega(1)$, we obtain an exponential quantum speedup over the best-known methods for such instances. As mentioned before, we believe that recently introduced dissipative preparation schemes for quantum Gibbs states~\cite{chen_generators} will allow for the preparation of even more classes of quantum Gibbs states at low temperatures efficiently. Such results would immediately amplify the classes of Gibbs states to which~\Cref{thm:efficient_quantum} applies to.

However, we now argue that it is possible to still achieve quadratic polynomial quantum speedups for solving these instances without imposing much structure beyond the Pauli sparsity.

\subsection{Classical and Quantum polynomial speed-ups}\label{subsec:poly_speedups}
Let us now discuss how to obtain more modest speed-ups in more general settings. 
\paragraph{Classical algorithm:} 
For a generic Pauli-sparse matrix $C$ and a subset of considered constraints $S\subseteq\dot\cD_C$, we can obtain a polynomial speed-up in $D$ %
using the sampling method and simple sparse matrix vector multiplication. More precisely, noting that the sparsity of $C$ is upper bounded by its number of Pauli strings, $s=\poly(n)$, we can classically compute $\GW(C,S,0)$ up to additive precision $\epsilon>0$ in time 
\begin{equation}
\label{eq:scaling_sparse}
\tilde{\mathcal{O}}\big(n^2\epsilon^{-5}\log(1/W)%
\,|S|\,s\,D\big),
\end{equation}
where the ``O tilde"  notation here omits factors of order $\mathcal{O}(\operatorname{polylog}(n,1/\epsilon))$ 
and 
$W=\tr{\sigma_{\lambda}}$ is the partition function of the last state $\sigma_{\lambda}$ in our HU sequence, with $\lambda=\mathcal{O}(n\epsilon^{-1})$. Note that one can at worst have $\log%
(1/W)=\mathcal{O}(n\,\epsilon^{-1})$%
.
Hence, for $S=\dot\cD_C$ and $|\dot\cD_C|=\poly(n,\epsilon^{-1})$, 
the obtained run-time corresponds to (at least) a quadratic speed-up in $D$ over the classical algorithms from~\cite{henze2025solvingquadraticbinaryoptimization,Arora2005}, which have a runtime $\tilde{\mathcal{O}}\left(\min \{D^{2.5}s\epsilon^{-2.5},D^{2.5}s^{0.5}\epsilon^{-3.5},D^2s\epsilon^{-9}\}\right)$, as mentioned in~\Cref{sec:SDP_solvers}.

To prove Eq.~\eqref{eq:scaling_sparse}, we approximate each expectation value $\text{tr}[B_i\,\sigma(\lambda)]$ by the ratio between estimates for $V_i=\text{tr}\big[B_i%
e^{-H}\big]$ and $W=\text{tr}\big[e^{-H}%
\big]$, where the short-hand notation $H=\lambda_0\tfrac{C}{\|C\|}+\sum_{Z_A\in S}\lambda_iZ_A$ is used. 
To estimate $V_i$ and $W$, we approximate the traces by averages over expectation values on randomly sampled a state from the Haar measure and the exponential operator $e^{-H}$ by a polynomial approximation, for instance a fragmented Taylor expansion~\cite{berry2015simulating}.
We estimate both $V_i$ and $W$ up to additive error $W\,\epsilon/2$. This, since $|V|/W\leq 1$, implies an additive approximation error $\mathcal{O}(\epsilon)$ for the ratio $V_i/W=\text{tr}[B_i\,\sigma(\lambda)]$, which is what we need. 
The required precision (for both $V_i$ and $W$) can be guaranteed with a number of samples per trace $N=\mathcal{O}\big(\frac{\|B_i\,e^{-H}%
\|^2}{\epsilon^2 W^2}\big)%
$ %
and $r= \lceil\|H\|\rceil%
$ fragments each one of truncation degree $\kappa=\mathcal{O}\big(\log\big(\frac{\|H\|}{\epsilon\,W}\big)\big)%
$. Each of the $N$ expectation values can be computed via $r\times\kappa$ multiplications of an $s$-sparse matrix on a $D$-dimensional vector, with run-time $\mathcal{O}(s\,D)$ and memory $\mathcal{O}(D)$. In turn, we need to repeat the estimation for each $B_i$ ($|S|$ times) and for each $\sigma(\lambda)$ in the HU sequence (of length at most $T=\mathcal{O}(n\,\epsilon^{-2})$). Hence, the total time complexity is $\mathcal{O}\big(N\, r\,\kappa\, |S|\, T\, s\, D\big)$. This, using that $i$) $\mathcal{O}\big(\frac{\|B_i\,e^{-H}\|^2}{\epsilon^2 W^2}\big)
\leq \mathcal{O}(1/\epsilon^2)$ (because $\|B_i\|\leq 1$ and $\|e^{-H}\|/W\leq 1$), $ii$) $\|H\|\leq \|\lambda\|_{l_1}=\mathcal{O}(n\, \epsilon^{-1})$, and $iii$) $\mathcal{O}\big(\log\big(\frac{\|H\|}{\epsilon\,W}\big)\big)=\tilde{\mathcal{O}}(1/W)$ gives the promised scaling.

\paragraph{Quantum algorithm:} Let us now discuss how to obtain quantum algorithms to solve the relaxed SDP in the regime where $C$ is Pauli-sparse but we do not necessarily have efficient quantum Gibbs samplers. To this end, we can simply invoke generic %
quantum SDP solvers, like the one from~\cite{vanApeldoorn2020}. There, the authors show that it is possible to solve the SDP $\textrm{GW}(C,S%
,\epsilon)$ in time $\tilde{\mathcal{O}}(|S
{|^{0.5} D^{0.5}}%
\,\textrm{poly}(\epsilon^{-1}))$, where %
``O tilde" notation now refers to omitted factors polynomial in $n$. 
Hence, for $S=\dot\cD_C$ and assuming again $|\mathcal{D}_C|=\poly(n)$ (for which it suffices to take the precision $1/\epsilon=\poly(n)$ too), the latter scaling corresponds to a more than quartic speed-up in $D$ over the classical algorithms from~\cite{henze2025solvingquadraticbinaryoptimization,Arora2005}.

Moreover, under the same assumptions, the obtained scaling also corresponds to a cubic speed-up in $D$ over the best known MMW-based quantum solver~\cite{GSLBrandao2022fasterquantum}, with runtime $\tilde{\mathcal{O}}\left(D^{1.5}s^{0.5+o(1)}\epsilon^{-15+o(1)}\right)$. We emphasize that the reason why the previous MMW-based quantum solver fails to attain a scaling $\mathcal{O}(D^{0.5})$ with the dimension is because the standard GW relaxation requires itself a number of constraints $\mathcal{O}(D)$, as already mentioned. In contrast, our relaxation does not suffer from this limitation, allowing us to unlock even a quadratic quantum speed-up in $D$ with respect to the SketchyCGAL algorithm from~\cite{Yurtsever2021}  (see~\Cref{sec:SDP_solvers}) or the classical algorithm described in the previous two paragraphs.

\section{Randomized rounding on classical computers}\label{sec:randomized_rounding}
One of the most attractive features of the GW-SDP relaxation is that its solution can be rounded to a feasible point of the original QUBO which provably performs well. 
The standard randomized rounding procedure works as follows: we take a solution $\rho$ of GW and draw a random vector $\ket{\psi}$ with i.i.d. Gaussian entries of unit variance. Letting $\ket{\phi}=\sqrt{\rho}\ket{\psi}$, we obtain a feasible point of the QUBO by setting $\ket{x}=\operatorname{sign}(\ket{\phi})$, where we apply the sign function component-wise.
However, performing this rounding procedure is not efficient in $n$, as it requires computing the product of the matrix $\rho$ with the dense vector $\ket{\phi}$, with complexity growing polynomially in $2^n$. %
Conveniently, we find that for our quantum-inspired approaches it is still possible to efficiently (i.e. in time $\textrm{poly}(n)$) estimate the energy density achieved by a heuristic randomized rounding procedure based on Monte Carlo methods. 
A rigorous analytical analysis of the performance of our simplified rounding method as well as its generalization to a quantum algorithm are beyond the scope of the current work. However, we do numerically demonstrate in~\Cref{sec:numerics} that our simplified rounding is able to certify that we produce good approximate QUBO solutions for very large instances. 
More precisely, we show there that our heuristic rounding gives a feasible solution to the original QUBO problem whose value closely matches that of its SDP relaxation. %

Our first step is to simplify the usual rounding procedure via (nearly) random vectors with efficient descriptions. That is, instead of considering a dense, random Gaussian vector (or, equivalently, a uniformly random unit vector), we consider families of (real) vectors generated by random quantum circuits.
We consider random vectors $\ket{\phi}=U\ket{0}\in\mathbb{R}^{2^n}$, where $U$ is drawn from an ensemble of random orthogonal matrices. If we pick $U$ from the Haar measure, we again get a random vector, but this would require exponential time. Instead, we consider $U$ coming from ensembles of random circuits. We know that, in the limit of infinite depth, the distribution converges to the Haar measure and recover the original rounding scheme. However, we can instead consider rounding schemes with approximate $t$-designs, which can be generated more efficiently~\cite{Ambainis2007}. Again, a rigorous analysis of the rounding procedure is left to future work, but we test it numerically in~\Cref{sec:numerics}.

We then consider the randomized rounding procedure with random vectors of the form $\operatorname{sign}(\sqrt{\rho}\,U\ket{0})$, where $U$ is a sufficiently shallow random circuit s.t. we can still compute $\bra{i}\sqrt{\rho}\,U\ket{0}$ efficiently for a fixed $i$. However, even then it would still take time $\Omega(2^n)$ to output the full solution. Nevertheless, we can efficiently estimate the (relative) expectation value of the randomized rounding procedure through Monte Carlo. %
That is, we can still efficiently estimate the energy density achieved by the rounded solution:%

\begin{proposition}\label{prop:rando_rounding_mc}
Let $C\in\mathbb{R}^{2^n\times 2^n}$ be a Pauli-sparse matrix with
\begin{align}
C=\sum\limits_{j=1}^m\alpha_jP_{(x_j,z_j)}
\end{align}
and $\rho$ a quantum state on $n$ qubits. Given an ensemble of random orthogonal matrices $U$ distributed according to a probability measure $\mu$, let $E(\rho)$ be the worst-case cost to compute $\bra{i}\sqrt{\rho}\,U\ket{0}$ up to constant relative precision for any computational basis state $\ket{i}$. Moreover, let $\ket{x}=\operatorname{sign}(\sqrt{\rho}\,U\ket{0})$. Then we can estimate the energy density $2^{-n}\bra{x}C\ket{x}$ of the solution up to additive error $\epsilon$ with probability of success $2/3$ in time 
\begin{align}\label{equ:complexity_bound_density}
\mathcal{O}(\epsilon^{-2}mE(\rho)\|\alpha\|_{\ell_1}^2).
\end{align}
\end{proposition}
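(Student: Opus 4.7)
\emph{Proof proposal.} The plan is to expand $2^{-n}\bra{x}C\ket{x}$ into a sum of products of bits and to estimate this sum by a light Monte Carlo procedure whose per-sample work is dominated by two queries to the sign oracle for $\sqrt{\rho}\,U\ket{0}$.

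First, I would rewrite the target using the Pauli decomposition of $C$. Each term acts on a computational-basis state as $P_{(x_j,z_j)}\ket{i}=\varepsilon_{i,j}\ket{i\oplus x_j}$, where $\varepsilon_{i,j}\in\{\pm 1\}$ is computable from $(x_j,z_j,i)$ in time $\mathcal{O}(n)$; the restriction to $\pm 1$ rather than $\pm i$ follows because $C\in\mathbb{R}^{2^n\times 2^n}$ forces $c_{(x,z)}=0$ whenever $P_{(x,z)}$ is imaginary (i.e.\ when $|x\wedge z|$ is odd). Consequently,
\begin{align}
2^{-n}\bra{x}C\ket{x}=\sum_{j=1}^{m}\alpha_{j}\,2^{-n}\sum_{i\in\{0,1\}^{n}}\varepsilon_{i,j}\,x_{i}\,x_{i\oplus x_{j}}=\|\alpha\|_{\ell_{1}}\,\mathbb{E}\!\left[\operatorname{sign}(\alpha_{j})\,\varepsilon_{i,j}\,x_{i}\,x_{i\oplus x_{j}}\right],
\end{align}
where $j$ is drawn with probability $|\alpha_{j}|/\|\alpha\|_{\ell_{1}}$ and $i$ is drawn uniformly from $\{0,1\}^{n}$.

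Next I would define the estimator $\xi:=\|\alpha\|_{\ell_{1}}\,\operatorname{sign}(\alpha_{j})\,\varepsilon_{i,j}\,x_{i}\,x_{i\oplus x_{j}}$. Since $x_{i},\varepsilon_{i,j}\in\{\pm 1\}$, one has $|\xi|\leq\|\alpha\|_{\ell_{1}}$ deterministically, so Hoeffding's inequality (or Chebyshev's, for constant success probability) guarantees that the empirical mean of $N=\mathcal{O}(\|\alpha\|_{\ell_{1}}^{2}\epsilon^{-2})$ independent copies lies within additive error $\epsilon$ of $2^{-n}\bra{x}C\ket{x}$ with probability at least $2/3$. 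Finally, I would tally the per-sample cost: building an alias table for $(|\alpha_{j}|/\|\alpha\|_{\ell_{1}})_{j=1}^{m}$ costs $\mathcal{O}(m)$ as a one-off preprocessing step, after which each index $j$ is sampled in $\mathcal{O}(1)$; drawing $i$ uniformly and computing $\varepsilon_{i,j}$ and $i\oplus x_{j}$ cost $\mathcal{O}(n)$; and the two sign queries $x_{i}=\operatorname{sign}(\bra{i}\sqrt{\rho}U\ket{0})$ and $x_{i\oplus x_{j}}$ cost $E(\rho)$ each by hypothesis, since constant relative precision on the amplitude already fixes its sign. Multiplying these factors yields the claimed $\mathcal{O}(\epsilon^{-2}\,m\,E(\rho)\,\|\alpha\|_{\ell_{1}}^{2})$ runtime.

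The main obstacle I anticipate is comparatively mild and concerns the oracle assumption on $E(\rho)$: it handles bitstrings $i$ for which $\bra{i}\sqrt{\rho}\,U\ket{0}$ is exponentially close to zero, where ``constant relative precision'' is insufficient to disambiguate the sign and could force the amplitude oracle to work at much finer precision, inflating $E(\rho)$. For generic $U$ drawn from a sufficiently expressive random-circuit ensemble such amplitudes form a set of negligible measure, so these entries of $\ket{x}$ can be assigned a default sign ($+1$, say) and the associated error absorbed into the $1/3$ failure probability; verifying that this is consistent with the precise way $E(\rho)$ is defined in the tensor-network implementation is the one place where the proof requires genuine care rather than bookkeeping.
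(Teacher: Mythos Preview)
Your argument is correct and in the same Monte Carlo spirit as the paper, but you make one extra move that the paper does not: you importance-sample the Pauli index $j$ with weights $|\alpha_j|/\|\alpha\|_{\ell_1}$ in addition to drawing $i$ uniformly. The paper instead samples only $i$ uniformly and, for each sample, evaluates the full row sum $X=x_i\sum_{j=1}^m \alpha_j\,\bra{k_{j,i}}P_{(x_j,z_j)}\ket{i}\,x_{k_{j,i}}$, which still satisfies $|X|\le\|\alpha\|_{\ell_1}$ but costs $\mathcal{O}(m)$ sign queries per sample rather than your two. Consequently your per-sample cost is $\mathcal{O}(n+E(\rho))$ and the total is really $\mathcal{O}\big(m+\epsilon^{-2}E(\rho)\|\alpha\|_{\ell_1}^2\big)$, which is a factor of $m$ tighter than the bound you claim to recover; your final sentence ``multiplying these factors'' overcounts, though of course the stated $\mathcal{O}(\epsilon^{-2}mE(\rho)\|\alpha\|_{\ell_1}^2)$ remains a valid upper bound. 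Your remark that realness of $C$ forces $\varepsilon_{i,j}\in\{\pm1\}$ is correct and is left implicit in the paper, and your closing caveat about amplitudes exponentially close to zero is a legitimate subtlety that the paper's proof does not address.
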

\begin{proof}
Consider the random variable $X$ given by sampling $i$ uniformly at random from $[2^n]$ with
\begin{align}\label{equ:def_X}
X=x_i\sum\limits_{k=1}^{2^n}C_{i,k}x_k\,.
\end{align}
where $x_i=\braket{i\vert x}$.
Clearly, we have that
\begin{align}
\mathbb{E}(X)=2^{-n}\sum\limits_{i,k}C_{i,k}x_ix_k=2^{-n}\bra{x}C\ket{x}\,.
\end{align}
Let us now argue that $|X|\leq\|\alpha\|_{\ell_1}$. Note that, given $\ket{i}$, for each $P_{(x_j,z_j)}$ there is precisely one other computational basis state $\ket{k_{j,i}}$ s.t. $\bra{k_{j,i}}P_{(x_j,z_j)}\ket{i}\not=0$ and we can compute $\ket{k_{j,i}}$ efficiently. Namely, up to a phase, $\ket{k_{j,i}}=P_{(x_j,z_j)}\ket{i}$. As such, there are at most $m$ nonzero terms in the sum of~\Cref{equ:def_X} and we can find them efficiently. This gives that
\begin{align}\label{equ:efficient_rep_X}
X=x_i\sum\limits_{j=1}^{m}\alpha_j\bra{k_{j,i}}P_{(x_j,z_j)}\ket{i}x_{k_{j,i}}.
\end{align}
From the formula above and the fact that $x_i,x_{k_{j,i}}=\pm 1$, we obtain that $|X|\leq \|\alpha\|_{\ell_1}$. 
Thus, it follows from Hoeffding's inequality that taking the empirical average over $\mathcal{O}(\epsilon^{-2}\|\alpha\|_{\ell_1}^2)$ samples from $X$ suffices to estimate the expectation value of $X$ up to an error $\epsilon$ with probability of success $2/3$. To finish our argument and obtain the bound in~\Cref{equ:complexity_bound_density}, note that it immediately follows from the representation in~\Cref{equ:efficient_rep_X} that one sample of $X$ can be %
obtained via computing the sign %
of $m$ matrix entries of the form $\bra{i}\sqrt{\rho}\,U\ket{0}$. Each such sign can in turn be obtained via computing the value of the entry up to constant relative precision and costs %
at most time $E(\rho)$. As we need to repeat this $\mathcal{O}(\epsilon^{-2}\|\alpha\|_{\ell_1}^2)$ times, we obtain our claim.
\end{proof}
For tensor network approaches, we can efficiently approximate random $U$ via a circuit of shallow depth---see~\Cref{app:details_numerics} for details. In this case, we have $E(\rho)=\mathcal{O}(\textrm{poly}(n))$ and we can find an approximation to the energy density achieved by $x$ efficiently.
Note that~\Cref{prop:rando_rounding_mc} applies more generally to any sparse matrix for which %
we can efficiently compute the location of nonzero entries. 
That is, our method can be expanded to compute other average properties of the obtained solution string besides its energy density.
Thus, we can efficiently extract relevant information from the rounded solution even in the regime where just writing it down would take exponential time.

\section{Numerical examples}\label{sec:numerics}
We next benchmark our method on two families of instances: commuting $1D$ Hamiltonians with a small %
diagonal group $\cD_C$ and adjacency matrices of Kronecker graphs. 

\subsection{Commuting 1D Hamiltonians with small diagonal algebras}\label{subsec:1d_hamiltonians}
We start by an objective matrix $C$ given by commuting $1D$ Hamiltonians with a small diagonal subgroup $\cD_C$. 
For these instances, we have an almost complete picture, in the sense that we have analytical guarantees on the quality of our %
solution and that we can solve the SDP in polylogarithmic time in the dimension $D$. In particular, we showcase here the scalability of our method by solving the GW SDP as well as performing the simplified rounding from \Cref{sec:randomized_rounding} for gigantic instances (of up to dimension $2^{50}\simeq 10^{15}$).
Although we do not have results on the classical hardness of solving the underlying QUBOs, we do provide numerical evidence that solving the underlying QUBOs requires time that is superpolynomial in $D$. 
We do this not only by benchmarking against various SDP solvers but also by computing several parametrized complexity parameters of the underlying graphs.

\begin{figure}[ht!]
	\centering
	\includegraphics{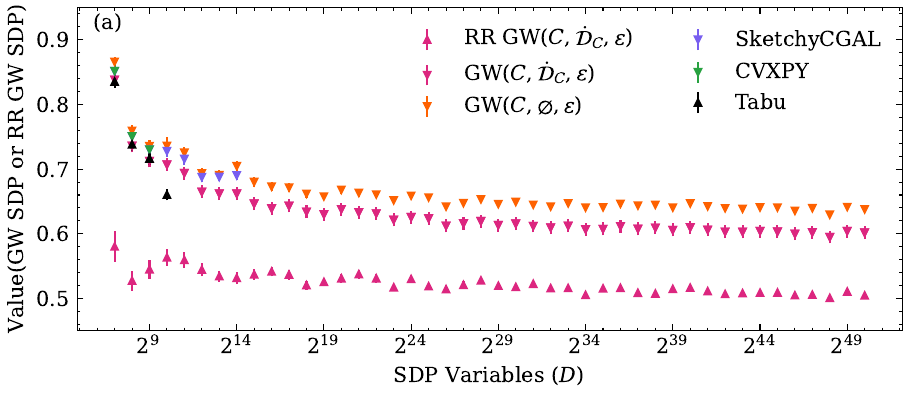}
    \\[0.3cm] 
    \includegraphics{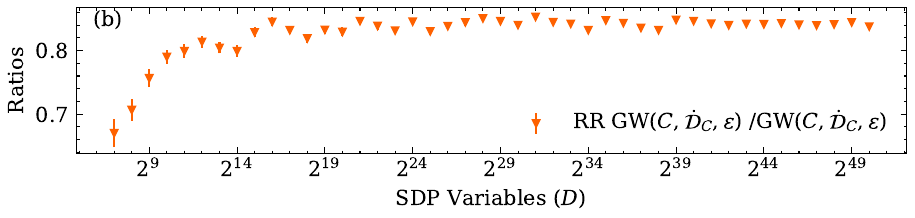}
	\caption[round sdp 1d]{{\bf Performance of %
    different solvers on $C$ given by a non-trivial 1D commuting Hamiltonian [see Eq. \eqref{eq:1d_hamiltonian}]: (a) GW SDP and rounded solutions (see~\Cref{app:details_numerics_tensor} for technique details of calculating the SDP and performing the randomized rounding with tensor network approach). (b) Ratio between the rounded and SDP solutions from HU solver.} $\GW(C,\dot{\cD}_C,\epsilon)$ denotes the SDP solution obtained from the HU solver. RR $\GW(C,%
    \dot{\cD}_C,\epsilon)$ refers to the mean objective value over rounded strings sampled from the $\GW(C,%
    \dot{\cD}_C,\epsilon)$, corresponding to a feasible, approximate solution of the associated $\QUBO$ problem. 
    The objective value $\QUBO(C)$, see~\Cref{eq:GW_relaxation_normalized}] of the exact solution is unknown, but it is guaranteed to lie within RR $\GW(C,%
    \dot{\cD}_C,\epsilon)$ and $\GW(C,%
    \dot{\cD}_C,\epsilon)$, up to additive error $\epsilon$. 
    The Monte Carlo sampling error for the randomized rounding is $\epsilon=0.5$---see \Cref{equ:complexity_bound_density} see for details.
    In turn, $\GW(C,\varnothing%
    ,\epsilon)$ represents the %
    solution to the SDP spectral relaxation. %
    SketchyCGAL and CVXPY denote the $\GW(C)$ SDP solutions obtained via SketchyCGAL~\cite{Yurtsever2021} and the standard CVXPY SDP Python library~\cite{diamond2016cvxpy}. The maximum problem sizes they can handle are $D=2^{14}$ and $D=2^9$, respectively, due to memory limitations.  
    Additionally, we plot the objective values $\QUBO(C)$ obtained using the standard Tabu search Python library (\texttt{dwave-tabu}) with 50 samples stably up to $D=2^{10}$. In the numerics, we observe that Tabu starts crashing frequently starting at $D=2^{11}$. Results are averaged over 100 randomly generated problem instances, as we observe a high degree of concentration for the value of the random instances. The error bars, except for those in $\GW(C,\dot{\cD}_C,\epsilon)$, represent the standard error of the mean across all instances. The error bar in $\GW(C,\dot{\cD}_C,\epsilon)$ denotes the error tolerance introduced in HU, which dominates the numerical errors. All numerical experiments run on a single, standard desktop computer with specifications in~\Cref{app:details_numerics}. 
    }\label{fig:sdp_1d_cluster} 
\end{figure}

We consider the family of instances given by objective matrices associated to $n$-qubit Hamiltonians of the form
\begin{align}
\label{eq:1d_hamiltonian}
\nonumber
C&=a^{\scriptscriptstyle(1)}_1 X_1Z_2  + \left(\sum_{i=2}^{n-5} a^{\scriptscriptstyle(2)}_iZ_{i-1} X_i Z_{i+1}\right) +\left(\sum_{i=2}^{n-6} a^{\scriptscriptstyle(3)}_i Z_{i-1} Y_iY_{i+1} Z_{i+2}\right)+ a^{\scriptscriptstyle(4)}_1Z_{n-6}Y_{n-5}Y_{n-4}X_{n-3}X_{n-2}X_{n-1}  \\
 &+a^{\scriptscriptstyle(4)}_2X_{n}+a^{\scriptscriptstyle(4)}_3 \frac{1}{6}(X_{n-3}X_{n-2}+Y_{n-3}Y_{n-2}+X_{n-2}X_{n-1}+Y_{n-2}Y_{n-1}+X_{n-3}X_{n-1}+Y_{n-3}Y_{n-1}),%
\end{align}
where $a^{\scriptscriptstyle(1)}_1, a^{\scriptscriptstyle(2)}_i, a^{\scriptscriptstyle(3)}_i, a^{\scriptscriptstyle(4)}_1,a^{\scriptscriptstyle(4)}_2$, and $a^{\scriptscriptstyle(4)}_3$ are randomly generated coefficients between $(0,1]$, and then renormalized to the factor given by $a^{\scriptscriptstyle(1)}_1+\sum a^{\scriptscriptstyle(2)}_i+\sum a^{\scriptscriptstyle(3)}_i+a^{\scriptscriptstyle(4)}_1+a^{\scriptscriptstyle(4)}_2+a^{\scriptscriptstyle(4)}_3$. %
The model is a modified 1D cluster Hamiltonian~\cite{Suzuki1971,Raussendorf2001} with additional $Z YY Z$ terms and boundary terms to make the generated diagonal subgroup %
nontrivial. Indeed, one can show that, for any $n\geq7$, the diagonal subgroup %
is $\cD_C=\{I,Z_{n-3}Z_{n-2},Z_{n-2}Z_{n-1},Z_{n-3}Z_{n-1}\}$.
We note that not all Pauli terms in this model commute with each other. This is important because, if that was the case, the graph would be disconnected and therefore computationally easy, as we show in Prop.~\ref{prop:commuting_disconnected} in \Cref{sec:commuting_regrouping}. However, the fact that the last 6 Pauli terms in Eq.~\eqref{eq:1d_hamiltonian} are grouped together with the same coefficient ($a^{\scriptscriptstyle(4)}_3$) makes the model commuting without computationally trivializing it (see~\Cref{sec:commuting_regrouping} for more details).

In Fig.~\ref{fig:sdp_1d_cluster} we compare the performance of our approach with other SDP and QUBO solvers on this class of instances.
A first important observation from Fig.~\ref{fig:sdp_1d_cluster} is that the spectral bound $\GW(C,\varnothing,\epsilon)$ and the $\GW$ bound $\GW(C,\dot{\cD}_C,\epsilon)$ do not coincide, the former being on average less tight than the later.
Second, we see that, for small instance sizes with $D\leq 2^9$, Tabu search outperforms our randomized rounding solutions. However, for $D\geq 2^{10}$ both these approaches give solutions of essentially equivalent quality on average. Moreover, while Tabu search cannot handle problem sizes larger than $D= 2^{10}$, with our approach we managed to reach $D\leq 2^{50}$ on a desktop, remarkably. In fact, as shown, something similar happens with the standard SDP solver CVXPY or the randomized SDP solver SketchyCGAL, which are able to track problem sizes only up to $D= 2^{9}$ and $D= 2^{15}$, respectively. For all the aforementioned algorithms, the main bottleneck to achieve larger system sizes was memory, as they required more memory than available to run larger instances. In contrast, our methods can leverage tensor networks to obtain highly memory efficient representations and we could, in principle, even consider larger system sizes than $50$ in the same machine.

Importantly, as mentioned, we currently lack rigorous understanding of the complexity of the QUBO problems associated to the instances studied in Fig.~\ref{fig:sdp_1d_cluster}. This means that, in principle, we run the risk of dealing with QUBO instances that are actually computationally easy, i.e. that can be solved in a time polynomial (instead of exponential) in the dimension $D=2^n$. However, as a sanity check, we provide numerical evidence against this possibility via lower bounds to three standard complexity parameters of the graphs underlying our instances:  For graphs defined by the nonzero entries of $C$ as their adjacency matrix, we compute the \emph{tree-width} (i.e., how close the graph is to a tree), the \emph{rank-width} (i.e., the maximum binary cut-rank across branch decompositions, closely related to clique-width), and the \emph{orientable genus} (i.e., the minimum number of handles needed to embed the graph on an orientable surface without edge crossings). 
We refer the interested reader to~\cite{Cygan2015} for a review on these parameters. 
The crucial point for our purposes is many exact and approximate algorithms for QUBO/Max-Cut run in time exponential in these parameters times a polynomial in the number of variables/vertices $D$. Thus, if any of these parameters scales as $\cO(n)$, the instances can be solved in time polynomial in $D$. Fig.~\ref{fig:complexity_parameter} shows a plot of two complexity parameters, where we can see that all three parameters grow superlinearly with $n$. 
This suggests that the studied instances are not easy for methods able to exploit structures quantified by the complexity parameters in question: with the parameters scaling faster than $\cO(n)$, those algorithms are expected to have a run-time at least superpolynomial in $D$ for our class of instances.

\begin{figure}[h]
	\centering
	\includegraphics[width=0.5\textwidth]{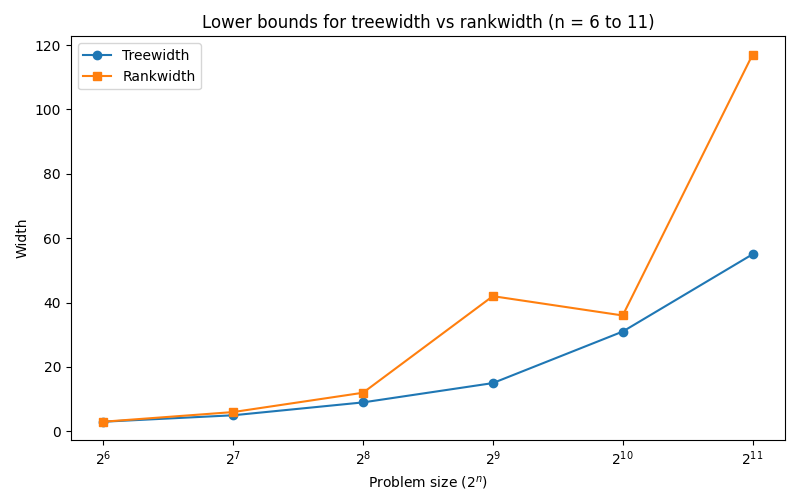}
	\caption[round sdp 1d]{Lower bounds to the tree- and rank-widths of the instances considered for Fig. \ref{fig:sdp_1d_cluster} %
    computed through convex relaxations~\cite{Cygan2015} Both lower bounds feature of a super-linear growth in the number of vertices (dimension $D$ of $C$). In addition, the genus of the graphs (not shown) is observed to grow even significantly faster.
    This implies that the known algorithms able to exploit structures associated to these complexity parameters cannot solve the corresponding QUBO instances in time less than super-polynomial in $D$ (see main text). Hence, this sanity check is consistent with the instances defined by Eq.~\eqref{eq:1d_hamiltonian} being computationally non-trivial. Note that these parameters only depend on the graph on which these instances are defined on, not on the particular choice of the weights $a_i$.
    }\label{fig:complexity_parameter} 
\end{figure}

\subsection{Kronecker graphs}\label{subsec:kronecker_graphs}
Kronecker graphs provide a highly efficient way to model the adjacency matrix of large networks--see~\cite{kronecker} and~\Cref{def:kronecker_graph} for their definition and details. Given their inherent tensor product structure, they also provide a natural testbed for our methods. 
In the numerics, we choose Kronecker graphs generated from the initiator graph $A\in \mathbb{R}^{4\times4}$ with adjacency matrix:

\begin{align}
A= \begin{pmatrix}
2 & 1 & 1 & 1 \\
1 & 2 & 1 & 0 \\
1 & 1 & 2 & 0 \\
1 & 0 & 0 & 2
\end{pmatrix}.
\end{align}
The Pauli matrices expansion of $A$ is:
\begin{equation}\label{eq:kronecker_initiator}
\begin{aligned}
A &= 2I\otimes I + 0.5I\otimes X + 0.5X\otimes I + X\otimes X + 0.5X\otimes Z +0.5Z\otimes X.
\end{aligned}
\end{equation}
The Kronecker graph model is generated by recursive Kronecker products of the chosen initiator graph as
\begin{align}\label{eq:kronecker_graph}
C = \left(\frac{A}{\|A\|}\right)^{\otimes k}
\end{align}
The recursive Kronecker graph model (see \Cref{def:kronecker_graph}) is then Pauli-sparsified using the procedures described in~\Cref{thm:pauli_sparsification} and the number of samples is given by:
\begin{equation}\label{eq:exact_sample_kronecker}
    \lfloor 2k\|C\|_{P,\ell_1}^{2}\epsilon^{-2}/1.5\rfloor
\end{equation}
where we set $\epsilon = 0.5$ in the numerics. The generated sparse approximation $\tilde{C}$ is renormalized as $\tilde{C}/\|\tilde{C}\|_{P,l_1}$, and, without loss of generality, we denote the renormalized matrix again by $\tilde{C}$.
The reason for this choice of initiator graph is based on preliminary numerical benchmarks, which indicate that these specific graphs yield a large gap between the spectral bound and $\GW(C,2^{[n]}\backslash\{\varnothing\},0)$.

After sparsifying, we perform the Krylov expansion described in~\Cref{sec:approximate_constraint_set} up to second and third order of $\tilde{C}$ to approximate the value of the semidefinite program. As can be observed in~\Cref{fig:kronecker}(c), the number of constraints for order $2$ is several order of magnitudes less than the full constraints, and for $3$ it is one order. Nevertheless, as can be seen in~\Cref{fig:kronecker}(a), the quality of the solution of our method and SketchyCGAL is essentially indistinguishable, indicating that the third order already captures most relevant constraints. 
In addition, we observe that the ratio between the rounded and SDP solutions from our method improves as the order of the constraints increases, reaching above $0.74$ for the third-order constraints in~\Cref{fig:kronecker}(b). This indicates that our method is effective at extracting high-quality QUBO solutions.

Interestingly, we only computed solutions for both SketchyCGAL and our method up to $n=12$, but for different reasons. Whereas we ran out of memory for $n=14$ for SketchyCGAL, we still had enough memory to run the third order for our method. However, the runtime would be prohibitively long and we refrained from completing the computation for this manuscript. But, importantly,~\Cref{fig:kronecker} shows that our method is competitive against state-of-the-art solvers for Kronecker graphs and the power of the Krylov heuristic discussed in~\Cref{sec:approximate_constraint_set} to obtain good solutions with significantly less constraints.

\begin{figure}[h!]
	\centering
    \begin{minipage}[b]{0.45\textwidth}
        \centering
        \includegraphics[width=0.868\textwidth]{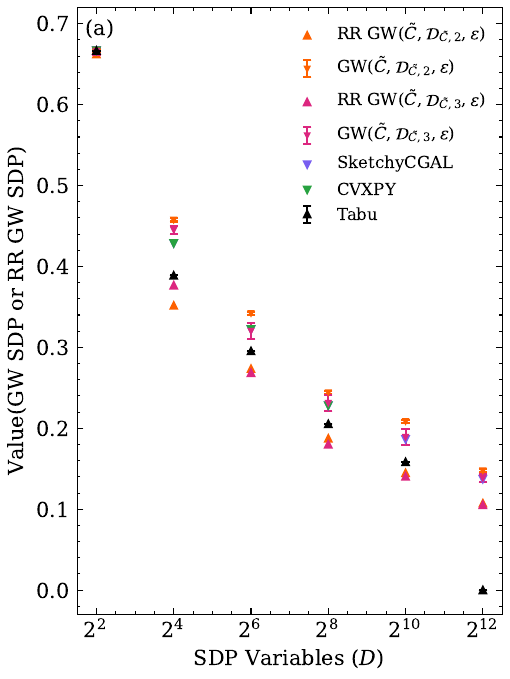}
    \end{minipage}%
    \hspace{0.01\textwidth}
    \begin{minipage}[b]{0.45\textwidth}
        \centering
        \includegraphics[width=0.9\textwidth]{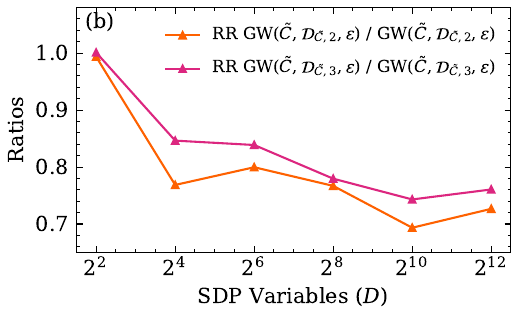}\\[0.4cm]
        \includegraphics[width=0.9\textwidth]{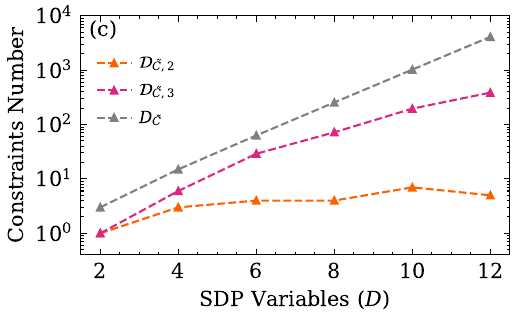}
    \end{minipage}
	\caption[Kronecker]{{\bf Performance of %
    different solvers on $\tilde{C}$ sampled from a Kronecker graph $A^{\otimes k}$ for $k=1,2,3,4,5,6$ [\Cref{eq:kronecker_graph}]: (a) GW SDP and rounded solutions(see~\Cref{app:details_numerics_sparse} for technique details of calculating the SDP and performing the randomized rounding(label with prefix RR) with sparse matrix-vector approach). (b) Ratio between the rounded and SDP solutions from HU solver. (c) Number of constraints used in the HU solver.}  
    The Pauli-sparse approximation $\tilde{C}$ is generated using the sampling method in~\Cref{thm:pauli_sparsification} with the number of samples determined by~\Cref{eq:exact_sample_kronecker}, and is then renormalized as $\tilde{C}/\|\tilde{C}\|_{P,l_1}$. The trace and the rounding are estimated using $50$ random vectors with the sparse matrix-vector multiplication method described in~\Cref{subsec:poly_speedups}.  
    $\GW(C,\cD_{\tilde{C},2},\epsilon)$ and $\GW(C,\cD_{\tilde{C},3},\epsilon)$ denote the SDP solutions obtained by the HU solver with constraint sets derived from the Krylov heuristics method~(\Cref{sec:approximate_constraint_set}) up to second and third order, respectively, and the corresponding error bars in (a) indicate the maximum error introduced within the HU framework.  
    RR $\GW(C,\cD_{\tilde{C},2},\epsilon)$ and RR $\GW(C,\cD_{\tilde{C},3},\epsilon)$ refer to the mean objective value over rounded strings sampled from $\GW(C,\cD_{\tilde{C},2},\epsilon)$ and $\GW(C,\cD_{\tilde{C},3},\epsilon)$, respectively, corresponding to feasible approximate solutions of the associated $\QUBO$ problem.  
    SketchyCGAL and CVXPY denote the $\GW(C)$ SDP solutions obtained via SketchyCGAL~\cite{Yurtsever2021} and the standard CVXPY SDP Python library~\cite{diamond2016cvxpy}. The maximum problem sizes they can handle are $D=2^{12}$ and $D=2^8$, respectively, due to memory limitations.  
    Additionally, we plot the objective values $\QUBO(C)$ obtained using the standard Tabu search Python library (\texttt{dwave-tabu}) with 300 samples. The results show a clear drop in performance at $D=2^{12}$, with error bars representing sampling variability in the Tabu procedure.  All numerical experiments run on a single, standard desktop computer with specifications in~\Cref{app:details_numerics}. 
    }\label{fig:kronecker} 
\end{figure}

\section{Acknowledgements}
DSF acknowledges funding from the European Union under Grant Agreement 101080142, the project EQUALITY and from the Novo Nordisk Foundation (Grant No. NNF20OC0059939 Quantum for Life). DSF acknowleges the support and hospitality of the Technology Innovation Institute, where parts of this work were performed.
Haomu Yuan acknowledges the leave to work away research fund from the University of Cambridge, the LabEx MILyon fund from the Universit\'e de Lyon, and the travel fund from the Cambridge Philosophical Society for supporting this project.

\bibliography{references}
\bibliographystyle{alpha}
\appendix
\newpage

\section{Table of relaxations of QUBO and relations}\label{app:table_qubo_relations}
The table below summarizes the various problems and relaxations considered in this work.
\begin{table}[!ht]
    \centering
    \setlength{\tabcolsep}{6pt}
    \begin{tabular}{p{0.1\textwidth} p{0.28\textwidth} p{0.58\textwidth}}
        \toprule
        \bf Problem & \bf Definition & \bf Relation to other problems \\
        \midrule
        \addlinespace
        QUBO\((C)\) &
        $\displaystyle\max_{x\in\{\pm1\}^D}x^\top C\,x.$ \ \ \  ~[\Cref{eq:original_problem}]&
        $\alpha_R\,\uGW(C)\le\QUBO(C)\le\uGW(C).$~[\Cref{equ:approximation}]\\
        \midrule
        \addlinespace
        $\uGW(C)$ &
        $\displaystyle
        \begin{aligned}
        \max_{Y}\;&\tr{C\,Y}, \quad\quad ~\text{[\Cref{eq:GW_relaxation}]}\\
        \text{s.t. }&Y\geq0,\\
        & Y_{ii}=1\;(i=1,\dots,D).
        \end{aligned}
        $ &
        \parbox[t]{\linewidth}{%
        \vspace{-1.0\baselineskip}%
        \raggedright
        1. $\uGW(C)\ge\QUBO(C)$,~[\Cref{eq:ugw_bound_1}]\\[0.7ex]
        2. $\GW(C)=\uGW(C)/(2^n\|C\|)\in[-1,1].$ 
        } \\
        \addlinespace
        \midrule
        \addlinespace
        $\GW(C)$ &
        $\displaystyle
        \begin{aligned}
        \max_{\rho}\;&\tr{\tfrac{C}{\|C\|}\,\rho},\quad ~\text{[\Cref{eq:GW_relaxation_normalized}]}\\
        \text{s.t. }&\rho\geq0,\;\tr\rho=1, \\
        &\langle i|\rho|i\rangle=2^{-n}\quad(i=1,\dots,2^n).
        \end{aligned}
        $ &
        \parbox[t]{\linewidth}{%
        \vspace{-1.0\baselineskip}%
        \raggedright
        1. $\GW(C) = \GW(C,2^{[n]}\backslash\{\varnothing\},0)$,\\[0.7ex]
        2. $\GW(C)=\uGW(C)/(2^n\|C\|)\in[-1,1].$ 
        } \\
        \addlinespace
        \midrule
        \addlinespace
        $\GW(C,S,\epsilon)$&
        $\displaystyle
        \begin{aligned}
        \max_{\rho}\;&\tr{\tfrac{C}{\|C\|}\,\rho}, \ \ \ ~\text{[\Cref{equ:relaxed_SDP}]}\\
        \text{s.t. }&|\tr{Z_A\,\rho}|\le\epsilon\quad(\forall A\in S),\\
        &\rho\geq0,\;\tr\rho=1.
        \end{aligned}
        $ &
        $\GW(C,\varnothing,0) \ge \GW(C,S,0)\ge \GW(C,2^{[n]}\backslash\{\varnothing\},0)\ge\frac{\QUBO(C)}{2^n\|C\|}.$~[\Cref{equ:interpolation_inequality_1}]\\
        \addlinespace
        \midrule
        \addlinespace
        $\GW(C,\varnothing,0)$&
        $\displaystyle
        \begin{aligned}
        \max_{\rho}\;&\tr{\tfrac{C}{\|C\|}\,\rho}, \\
        \text{s.t. }&\rho\geq0,\;\tr\rho=1.
        \end{aligned}
        $ &
    \parbox[t]{\linewidth}{%
    \vspace{-1.2\baselineskip}%
    \raggedright
    1. $\GW(C,\varnothing,0)=\lambda_{\max}(C)/\|C\|$,\\[0.7ex]
    2. $\frac{\lambda_{\max}(C)}{\|C\|}\ge \GW(C,S,0)$.~[\Cref{equ:interpolation_inequality_1}]
    } \\
        \bottomrule
    \end{tabular}
        \caption{Various relaxations for the QUBO cost \(\langle x,Cx\rangle\) considered in this work, their definitions and relation to other problems. QUBO\((C)\) denotes Quadratic unconstrained binary optimization; \(\uGW(C)\) denotes unnormalized Goemans-Williamson relaxation; \(\GW(C)\) denotes normalized Goemans-Williamson relaxation; \(\GW(C,S,\epsilon)\) denotes relaxed Goemans-Williamson relaxation with respect to a set \(S\subset2^{[n]}\) and error tolerance \(\epsilon>0\).}
    \label{tab:GW_hierarchy}
\end{table}

\section{Sparsifying the cost matrix with importance sampling}\label{sec:pauli_sparsification}
One pressing question posed by our work is to identify which cost matrices admit a Pauli-sparse representation. We now argue that the $\ell_1$ norm of the expansion in Pauli allows us to quantify how well we can approximate a given cost matrix in terms of Paulis. More precisely:
\begin{definition}[Pauli $\ell_1$ norm]\label{defi:Pauli_norm_appendix}
Let $C\in\mathbb{R}^{2^n\times 2^n}$ with Pauli expansion
\begin{align}\label{equ:Pauli_expansion_appendix}
C=\sum_{(x,z)\in\{0,1\}^{2n}}c_{(x,z)}\,P_{(x,z)}.
\end{align}
We define its Pauli $\ell_1$ norm, $\|C\|_{P,\ell_1}$, as 
\begin{align}
\|C\|_{P,\ell_1}=\sum_{(x,z)\in\{0,1\}^{2n}}|c_{(x,z)}|.
\end{align}
\end{definition}
This norm is a standard tool used e.g. for Hamiltonian simulation~\cite{Campbell2019}. Like in the QDRIFT approach to Hamiltonian simulation, it is possible to use standard matrix concentration results to conclude that it quantifies how well we can approximate a given Hamiltonian in operator norm by a random Pauli-sparse Hamiltonian:
\begin{thm}\label{thm:pauli_sparsification}
Let $C\in\mathbb{R}^{2^n\times 2^n}$ with Pauli expansion
\begin{align}%
C=\sum_{(x,z)\in\{0,1\}^{2n}}c_{(x,z)}\,P_{(x,z)}.
\end{align}
Then for every $\epsilon>0$, there exists a matrix $\tilde{C}$ that is $\mathcal{O}(n\|C\|_{P,\ell_1}^2\epsilon^{-2})$ Pauli-sparse and such that:
\begin{align}
\|C-\tilde{C}\|\leq \epsilon.
\end{align}
Furthermore, we can obtain such a $\tilde{C}$ with probability of success at least $2/3$ given $\mathcal{O}(n\|C\|_{P,\ell_1}^2\epsilon^{-2})$ samples from the distribution $p$ on Paulis with density  $p(x,y)=\frac{|c_{x,y}|}{\|C\|_{P,\ell_1}}$.
\end{thm}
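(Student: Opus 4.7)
The plan is to prove this by the standard importance-sampling/qDRIFT construction combined with the matrix Bernstein inequality, treating the Pauli expansion of $C$ as a signed mixture and approximating it by an empirical average.

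First, I would build the random matrix. Sample indices $(x_i,z_i)$ i.i.d.\ from $p(x,z)=|c_{(x,z)}|/\|C\|_{P,\ell_1}$ and define the random matrix
\[
X_i := \|C\|_{P,\ell_1}\,\operatorname{sign}(c_{(x_i,z_i)})\,P_{(x_i,z_i)}.
\]
A one-line calculation shows $\mathbb{E}[X_i]=\sum_{(x,z)} c_{(x,z)}P_{(x,z)}=C$, so the estimator $\tilde C := \frac{1}{N}\sum_{i=1}^N X_i$ is unbiased. By construction, $\tilde C$ involves at most $N$ distinct Pauli strings, giving the claimed sparsity as a function of $N$.

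Next, I would control the operator norm of $\tilde C-C$ with matrix Bernstein~\cite{Tropp2011}. Let $Y_i:=X_i-C$; then $\|Y_i\|\le \|X_i\|+\|C\|\le 2\|C\|_{P,\ell_1}$, using that each Pauli string has operator norm $1$ and $\|C\|\le\|C\|_{P,\ell_1}$. Because every Pauli string squares to the identity, $X_i^2=\|C\|_{P,\ell_1}^2\, I$ deterministically, so
\[
\mathbb{E}[Y_i^2]=\|C\|_{P,\ell_1}^2\,I-C^2\preceq\|C\|_{P,\ell_1}^2\,I,
\]
and therefore the matrix variance of $\sum_i Y_i$ satisfies $\bigl\|\sum_i\mathbb{E}[Y_i^2]\bigr\|\le N\|C\|_{P,\ell_1}^2$. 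Matrix Bernstein then yields
\[
\Pr\bigl[\|\tilde C-C\|\ge\epsilon\bigr]\le 2\cdot 2^n\,\exp\!\left(-\frac{N\epsilon^2/2}{\|C\|_{P,\ell_1}^2+\tfrac{2}{3}\|C\|_{P,\ell_1}\epsilon}\right).
\]
Choosing $N=\mathcal{O}\bigl(n\,\|C\|_{P,\ell_1}^2\,\epsilon^{-2}\bigr)$ makes the right-hand side at most $1/3$, simultaneously establishing the existence statement and the sampling statement.

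The argument is essentially a plug-and-play application of a standard concentration inequality, so there is no real obstacle: the only point requiring a little care is making sure the Bernstein bound is applied in its dimension-dependent form (the prefactor $2^{n+1}$) so that the $n$ in the final sample complexity comes out correctly from the logarithm. I would also note in passing that the same construction works verbatim if $C$ is only given implicitly, provided one can sample from $p$; this is exactly what makes the theorem usable for Kronecker graphs via Proposition~\ref{prop:kronecker_sample}, where the tensor-product structure lets one sample from $p$ coordinatewise in time $\mathcal{O}(2^k m)$.
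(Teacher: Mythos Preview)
Your proposal is correct and follows essentially the same approach as the paper: both construct the identical importance-sampling estimator $X_i=\|C\|_{P,\ell_1}\operatorname{sign}(c_{(x_i,z_i)})P_{(x_i,z_i)}$ with $\mathbb{E}[X_i]=C$ and $X_i^2=\|C\|_{P,\ell_1}^2 I$, and then apply a matrix concentration inequality from~\cite{Tropp2011}. The only cosmetic difference is that the paper invokes the matrix Hoeffding inequality (Theorem~3.1 there) while you use matrix Bernstein; since $X_i^2$ is deterministically a multiple of the identity, both inequalities yield the same $\mathcal{O}(n\|C\|_{P,\ell_1}^2\epsilon^{-2})$ sample complexity.
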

\begin{proof}
As mentioned, the claim follows from standard matrix concentration inequalities. Indeed, consider the random matrix $X$ that is given by $\|C\|_{P,\ell_1}\operatorname{sign}(c_{(x,y)})P_{(x,y)}$ with probability $p(x,y)$. It is then not difficult to see that $\mathbb{E}(X)=C$. Furthermore, we have that:
\begin{align}
\|X\|\leq \|C\|_{P,\ell_1}\operatorname{ a.s.},\quad X^2\leq \|C\|_{P,\ell_1}^2 I\operatorname{ a.s.}.
\end{align}
It then follows from the Matrix Hoeffding inequality~\cite[Theorem 3.1]{Tropp2011} that for $X_1,\ldots,X_m$ i.i.d. samples from $X$ we have that:
\begin{align}
\mathbb{P}\left(\|m^{-1}\sum\limits_{k=1}^mX_i-C\|\geq \epsilon\right)\leq 2^n\textrm{exp}\left(-\frac{\epsilon^2m}{8\|C\|_{P,\ell_1}^2}\right),
\end{align}
from which it follows that $m=\mathcal{O}(n\|C\|_{P,\ell_1}^2\epsilon^{-2})$ suffices to ensure that $\|m^{-1}\sum\limits_{k=1}^mX_i-C\|\leq\epsilon$ with probability of success at least $2/3$. Furthermore, it is also clear that $\tilde{C}=m^{-1}\sum\limits_{k=1}^mX_i$ is $m$ sparse in the Pauli basis by construction and that it suffices to generate samples from $p$ to find it.
\end{proof}
The reason we prove such a concentration bound is the fact that a good approximation in operator norm ensures a good approximation for the values of $\QUBO$ or $\uGW$.
For each feasible point for $\QUBO(\tilde{C})$ or $\uGW(\tilde{C})$ for $\tilde{C}$ s.t. $C$ $\|C-\tilde{C}\|\leq \epsilon$ has the same value for $\QUBO(\tilde{C})$ or $\uGW(\tilde{C})$ up to an error of $\epsilon\, 2^n$. This follows from H\"older's inequality, as we have that:
\begin{align}
\left|\tr{(\tilde{C}-C)Y}\right|\leq \|C-\tilde{C}\|\|Y\|_{\ell_1}=\epsilon\, 2^n
\end{align}
for $Y$ a feasible point of either $\QUBO(\tilde{C})$ (in the sense of $Y=\ket{x}\bra{x}$) or $\uGW(\tilde{C})$. This shows that $\|C\|_{P,\ell_1}$ gives an operationally justified measure of how well we can approximate $C$ by Pauli-sparse matrices for the problem at hand. 

Let us show that there simple examples of graphs that are \emph{not} Pauli-sparse and not even sparse in the traditional matrix basis, but that admit a Pauli-sparse approximation. For instance, consider the complete graph. It is not difficult to see that its (normalized) adjacency matrix is given by:
\begin{align}
C=\ketbra{+}^{\otimes n}-I/2^n=\frac{1}{2^n}\left(\sum_{x\in\{0,1\}^n}P_{(x,0)}\right).
\end{align}
We see that $\|C\|_{P,\ell_1}=1$, but it has $2^n-1$ nonzero terms in its expansion. Thus, $C$ admits an efficient sparsification (by just picking $\sim n$ subsets of $[n]$ uniformly at random).

One pressing question is how to sample from $p$ above. 
For the case of Kronecker graphs considered in~\Cref{sec:kronecker_graphs}, this can be easily achieved, as the distributions are in product form and this can be achieved in time polynomial in $n$.

In full generality, it is certainly a difficult task to sample from this distribution, particularly computing the normalization $\|C\|_{P,\ell_1}$. However, we could e.g. use Monte Carlo Markov Chain algorithms to try to generate approximate samples if we can compute the ratios $c_{(x_1,z_1)}/c_{(x_2,z_2)}$. %

Finally, it is once again worth mentioning that the Pauli decomposition does not necessarily preserve the natural symmetries of $\QUBO$, in the sense that different permutations or embeddings of the graph could change the $\ell_1$ norm of the Pauli coefficients. In addition, embedding the graph into qudits instead of qubits could also be advantageous in the sense of generating a smaller $\ell_1$ norm. We leave investigating these aspects in more detail to future work.

\section{Algorithm to update the state}\label{sec:HU_description}
In this section, we present the algorithm to update the state $\rho_t$ with the binary search subroutines to converge to the candidate $\GW(C, S, \epsilon)$ value~$\mu$ to the $\epsilon$-approximation of the optimum, at an exponential rate.
The state update algorithm is based on the HU framework first introduced in~\cite{GSLBrandao2022fasterquantum}. Given a cost matrix $C$, a set of constraints $S \subseteq 2^{[n]}$, and an error threshold $\epsilon > 0$, the algorithm to solve the SDP proceeds as follows. Initially, we set a candidate value $\mu$ for the solution of $\GW(C, S, \epsilon)$, and define $T = 16 \epsilon^{-2} n$ and $\rho_0 = \frac{1}{2^{n}}$.
The algorithm iterates to update a $\rho$'s Hamiltonian along the $C$ and $S$ as long as $T > 0$ by performing the following steps is given at~\Cref{algorithm:HamiltonianUpdate}, and the iteration complexity is $\mathcal{O}\big(\log(n)\epsilon^{-2}\big)$~\cite{GSLBrandao2022fasterquantum}.

\begin{center}
\begin{minipage}{0.8\textwidth}
\begin{algorithm}[H]
    \caption{$\epsilon$-Hamiltonian Update (HU)}\label{algorithm:HamiltonianUpdate}
    \KwData{$\epsilon > 0, \mu > 0, T =16 \epsilon^{-2} n,  \rho_0 = \frac{1}{2^{n}}$ }
    \KwResult{Feasible, $\rho_t$; Infeasible.}
    \While{$T\geq 0$}{
        Compute $\mu_c = \tr{C \rho_t}$ and $\mu_A=\tr{Z_A \rho_t}, \forall A \in S$.

        \If{$\mu_c \geq \mu- \epsilon$ and $\forall A \in S, |\mu_A| \leq \epsilon$}{ 
            \Return Feasible, $\rho_t$.}
            \eIf{$\mu_c \leq \mu- \epsilon$}{
            Update $\rho_t$ using:
            \begin{align*}
                \rho_{t+1} &\leftarrow \exp\left(\log\left(\rho_t\right) - y_t C\right), \\
                t          &\leftarrow t+1,
            \end{align*}
            where $y_t = \mu - \mu_c$.

            Update $T \leftarrow T - y_t^2$.
            }{\If{there exists an $A_i$ such that $|\mu_{A_i}|> \epsilon$}{
                Update $\rho_t$ using:
                \begin{align*}
                        \rho_{t+1} & \leftarrow \exp\left(\log\left(\rho_t\right) - y_t z_{A_i}\right),\\
                         t         & \leftarrow t+1,
                    \end{align*}
                where $y_t = \mu - \mu_A$.

                Update $T \leftarrow T - (\mu - \mu_A/4)^2$.
                    }
            }
        }
        \If{$T< 0$}{\Return Infeasible.}
\end{algorithm}
\end{minipage}
\end{center}

The binary search begins with initial lower bound $\mu_l$ and upper bounds $\mu_u$ for~$\mu$, typically chosen as $-1$ and $1$, since $\GW(C, S, \epsilon) \in [-1,1]$. At each iteration, the interval is halved: if the feasibility condition obtained from the HU is satisfied, the lower bound is updated; otherwise, the upper bound is updated. Thus, the time complexity of the binary search method to converge to the chosen error $\epsilon$ is $\mathcal{O}\!\left(\log(\epsilon^{-1})\right)$.
A detailed description of the algorithm is provided in~\Cref{algorithm:binarysearch}.

\begin{center}
\begin{minipage}{0.8\textwidth}
\begin{algorithm}[H]
    \caption{$\epsilon$-Binary Search}
    \label{algorithm:binarysearch}
    \KwData{$\epsilon > 0, \mu_l, \mu_u$.}
    \KwResult{The $\epsilon$-approximation of the optimum $\mu$.}
    $\mu \leftarrow (\mu_l+\mu_u)/2$.\\
    \While{$\mu_u-\mu_l\geq \epsilon$}{
        Run HU with $\mu = \mu_u$.\\
        \eIf{Output feasible}{
            $\mu_l \leftarrow \mu$}{
            $\mu_u \leftarrow \mu$.
            }
        $\mu \leftarrow (\mu_l+\mu_u)/2$
    }
    \Return $\mu$.
\end{algorithm}
\end{minipage}
\end{center}

\section{Description of the numerical procedures}\label{app:details_numerics}
In this section, we introduce an efficient way of calculating the trace value and randomized rounding occurring in the HU estimating $\GW(C,S,\epsilon)$ to 1D-Hamiltonian [\Cref{subsec:1d_hamiltonians}] and the Kronecker graph [\Cref{subsec:kronecker_graphs}].
Our numerical experiments are performed on the following machine:\\
\indent\textbf{Operating System:} Ubuntu 22.04.4 LTS\\
\indent\textbf{Processor:} Intel Core i9-13900K\\
\indent\textbf{Memory:} 128GB DDR5\\
\indent\textbf{Storage:} 2TB NVMe SSD\\
\indent\textbf{Standardized Software Packages:} Python 3.10.14, cvxpy 1.4.2, cvxopt 1.3.2, dwave-tabu 0.5.0, scipy 1.11.4.

\subsection{The tensor network approach}\label{app:details_numerics_tensor}
As in the 1D Hamiltonian model, the Hamiltonian elements in updated state during the HU are mutually commuted. Then, we can construct a tensor network to efficiently estimate $\tr{\frac{C}{\|C\|}\sigma(\lambda)}$ and $\tr{Z_A\sigma(\lambda)}$, where $\sigma(\lambda)$ is the state generated during the HU. Consider a parameterized Hamiltonian generated in the HU as in~\Cref{lemma:structure_SDP},
\begin{equation}\label{eq:hamiltonian_update_Hamiltonian}
H(\lambda,C,S)=-\lambda_{C} \frac{C}{\|C\|} - \sum_{A \in S} \lambda_{A} Z_{A}.
\end{equation}
The the corresponding update state is given by the Gibbs state:
\begin{equation}\label{eq:1d_hamiltonian_state}
\sigma(\lambda)= \frac{\exp \left(H(\lambda,C,S)\right)}{\operatorname{tr}\!\left[\exp\left(H(\lambda,C,S)\right)\right]},
\end{equation}
where $\lambda=\{\lambda_C,\lambda_A\}$ are the parameters to be determined in the HU.
It is sufficient to estimate $\tr{\sigma(\lambda)}$ in order to compute $\tr{\frac{C}{\|C\|}\sigma(\lambda)}$ and $\tr{Z_A\sigma(\lambda)}$ via the following partition function relations:
\begin{equation}\label{eq:partition_function}
\frac{\partial}{\partial \lambda_C}\big(-\log Z(\lambda)\big) = \tr{\frac{C}{\|C\|}\sigma(\lambda)}, 
\qquad 
\frac{\partial}{\partial \lambda_A}\big(-\log Z(\lambda)\big) = \tr{Z_A \sigma(\lambda)},
\end{equation}
where the partition function $Z$ is defined as
\begin{equation}\label{eq:partition_function_main}
Z(\lambda) = \operatorname{tr}\!\left[\exp\!\left(H(\lambda,C,S)\right)\right].
\end{equation}
Thus, to estimate $\tr{\frac{C}{\|C\|}\sigma(\lambda)}$ and $\tr{Z_A\sigma(\lambda)}$ up to the error $\epsilon_p$, we can compute the partition function $Z(\lambda)$ under small perturbations $\delta_{\lambda_C}$ and $\delta_{\lambda_A}$ under error tolerance of $\mathcal{O}(\sqrt{\epsilon}_p)$, respectively.

Further, by commuting relations, the exponential of $H(\lambda,C,S)$ to 1D-Hamiltonian in~\Cref{eq:1d_hamiltonian} can be factorized as:
\begin{equation}\label{eq:hamiltonian_exponential}
\begin{aligned}
    & \exp\!\left(H(\lambda,C,S)\right)\\
 = & \exp\left(-\lambda_{C} \frac{a^{\scriptscriptstyle(1)}_1}{\|C\|}X_1Z_2\right)\left(\prod_{i=2}^{n-5} \exp\left(-\lambda_{C} \frac{a^{\scriptscriptstyle(2)}_i}{\|C\|}Z_{i-1} X_i Z_{i+1}\right)\right)\left(\prod_{i=2}^{n-6}\exp\left(-\lambda_{C} \frac{a^{\scriptscriptstyle(3)}_i }{\|C\|}Z_{i-1} Y_iY_{i+1} Z_{i+2}\right)\right) \\
    &\quad  \exp\left(-\lambda_{C} \frac{a^{\scriptscriptstyle(4)}_1}{\|C\|}Z_{n-6}Y_{n-5}Y_{n-4}X_{n-3}X_{n-2}X_{n-1}\right)\exp\left(-\lambda_{C} \frac{a^{\scriptscriptstyle(4)}_2}{\|C\|}X_{n}\right)\exp\left(-\lambda_{C} \frac{a^{\scriptscriptstyle(4)}_3}{\|C\|}A'\right) \\
    &\quad \quad\quad  \left(\prod_{A \in S} \exp\left(-\lambda_{A} Z_A\right)\right),
\end{aligned}
\end{equation}
where $A'$ is the Pauli string acting on the last three qubits, $A' =\frac{1}{6}(X_{n-3}X_{n-2}+Y_{n-3}Y_{n-2}+X_{n-2}X_{n-1}+Y_{n-2}Y_{n-1}+X_{n-3}X_{n-1}+Y_{n-3}Y_{n-1})$.
As for a given unitary $U_p =\exp(P)\in \mathbb{C}^{2^t \times 2^t}$ where $P$ is the Pauli strings up to a factor and $t$ is the number of qubits that $U$ acts on, we can reshape it into a rank-$2t$ tensor with 2-dimensional indices:
\[
U_p' \in \mathbb{C}^{\underbrace{2 \times 2 \times \cdots \times 2}_{2t \ \mathrm{times}}},
\]
with the outer indicies as ${I_{q_1}^{\mathrm{i}} I_{q_2}^{\mathrm{i}} \dots I_{q_t}^{\mathrm{i}}, I_{q_1}^{\mathrm{o}} I_{q_2}^{\mathrm{o}} \dots I_{q_t}^{\mathrm{o}}}$, where the superscript $q_i\in \{1,\dots, n\}$ denotes the qubit index, and the subscript $\mathrm{i}$ and $\mathrm{o}$ correspond to the input and output of the state, respectively. Thus, every $\exp$ element in~\Cref{eq:hamiltonian_exponential} can be represented with the rank-$2t$ tensor, where $t$ means the corresponding Pauli strings act nontrivially on---see~\Cref{fig:tensor_network}(a-c,f-i). 

Then, we show the contraction method for constructing the tensor network representing $\exp\!\left(H(\lambda,C,S)\right)$. For example, for the two tensors $\exp(-\lambda_{C} \frac{a^{\scriptscriptstyle(2)}_i}{\|C\|} Z_{i-1} X_i Z_{i+1})$ and $\exp(-\lambda_{C} \frac{a^{\scriptscriptstyle(2)}_{i+1}}{\|C\|} Z_{i} X_{i+1} Z_{i+2})$, we can connect them along the output indices of the first tensor, $I_{i}^{\mathrm{o}} I_{i+1}^{\mathrm{o}}$, to the input indices of the second tensor, $I_{i}^{\mathrm{i}} I_{i+1}^{\mathrm{i}}$, sequentially. In this way, we can generate a larger tensor representing $\prod_{i=2}^{n-5} \exp(-\lambda_{C} \frac{a^{\scriptscriptstyle(2)}_i}{\|C\|} Z_{i-1} X_i Z_{i+1})$, with outer indices $I_{1}^{\mathrm{i}} I_{2}^{\mathrm{i}} \dots I_{n-4}^{\mathrm{i}}, I_{1}^{\mathrm{o}} I_{2}^{\mathrm{o}} \dots I_{n-4}^{\mathrm{o}}$---see~\Cref{fig:tensor_network}(j). The same technique can be used to generate $\prod_{i=2}^{n-6} \exp(-\lambda_{C} \frac{a^{\scriptscriptstyle(3)}_i}{\|C\|} Z_{i-1} Y_i Y_{i+1} Z_{i+2})$ [\Cref{fig:tensor_network}(k)]. Hence, by connecting the input indices, $I_{q_i}^{\mathrm{i}}$, to the output indices, $I_{q_i}^{\mathrm{o}}$, of the tensors corresponding to the product elements from left to right in~\Cref{eq:hamiltonian_exponential} iteratively, we generate the tensor network representing $\exp\!\left(H(\lambda,C,S)\right)$, with $n$ input indices $I_1^{\mathrm{i}} I_2^{\mathrm{i}} I_3^{\mathrm{i}} \dots I_n^{\mathrm{i}}$ and $n$ output indices $I_0^\mathrm{o} I_1^\mathrm{o} I_2^\mathrm{o} \dots I_n^\mathrm{o}$---see \Cref{fig:tensor_network_large}. Finally, the trace value is obtained by contracting all input indices $I_{i}^{\mathrm{i}}$ with the output indices $I_{i}^{\mathrm{o}}$ of the network corresponding to the same qubits.

\begin{figure}[htbp]
    \centering
    \includegraphics[width=0.9\textwidth]{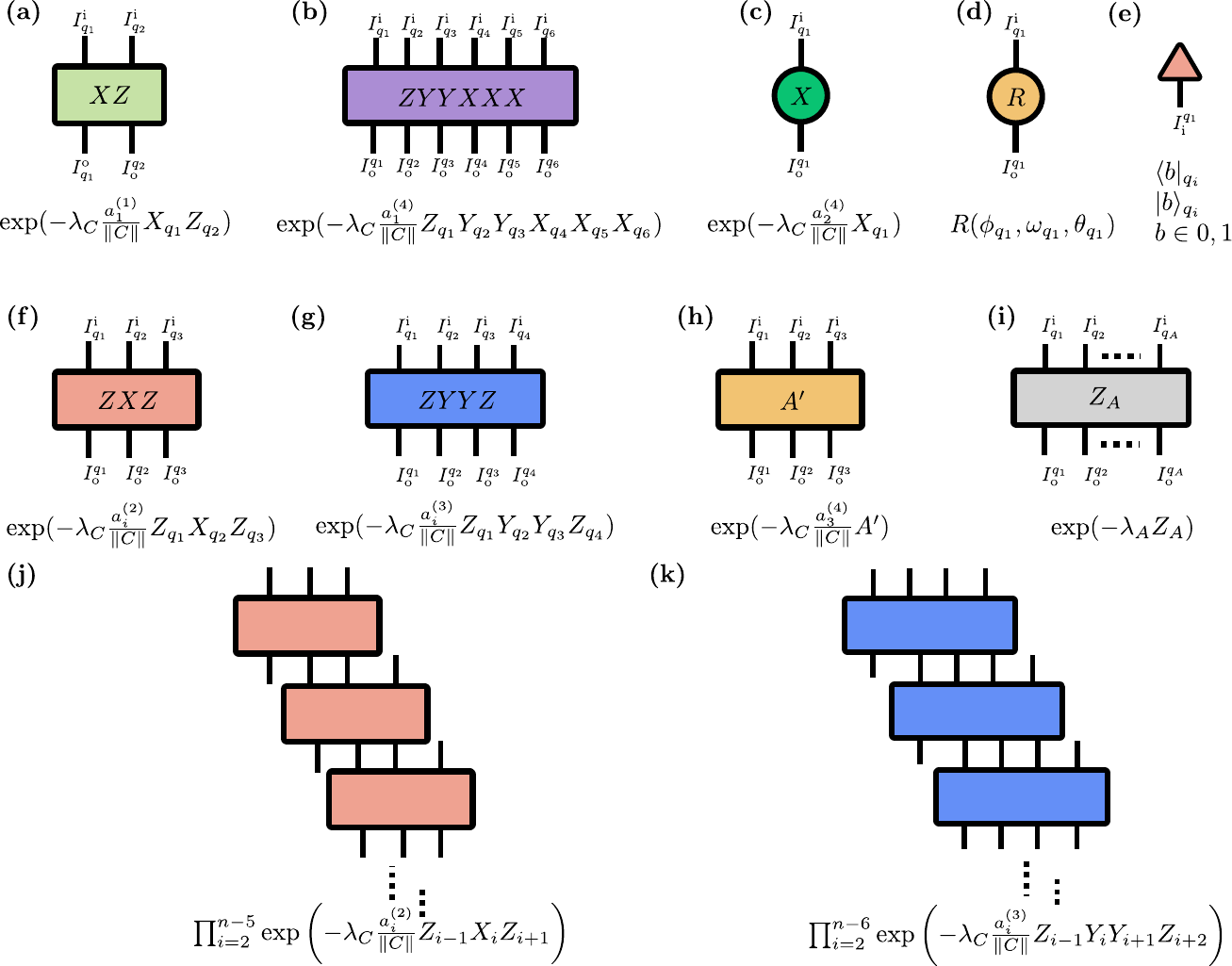}
    \caption{Parameterized tensors associated with the partition function of the 1D Hamiltonians $\GW(C,S,\epsilon)$[\Cref{eq:partition_function_main}], together with the additional tensor nodes used for randomized rounding. The explicit mathematical expressions corresponding to each tensor are displayed beneath the respective nodes. Square nodes denote tensors acting on multiple qubits, triangular nodes represent single-qubit vector tensors, and circular nodes correspond to single-qubit unitaries. Colors and text labels are used to differentiate tensors beyond shape alone. The index $I^\mathrm{i}_{q_i}$ indicates the input leg associated with the $q_i$-th qubit, while $I^\mathrm{o}_{q_i}$ denotes the corresponding output leg.
    }
    \label{fig:tensor_network}
\end{figure}

\begin{figure}[htbp]
    \centering
    \includegraphics[width=0.8\textwidth]{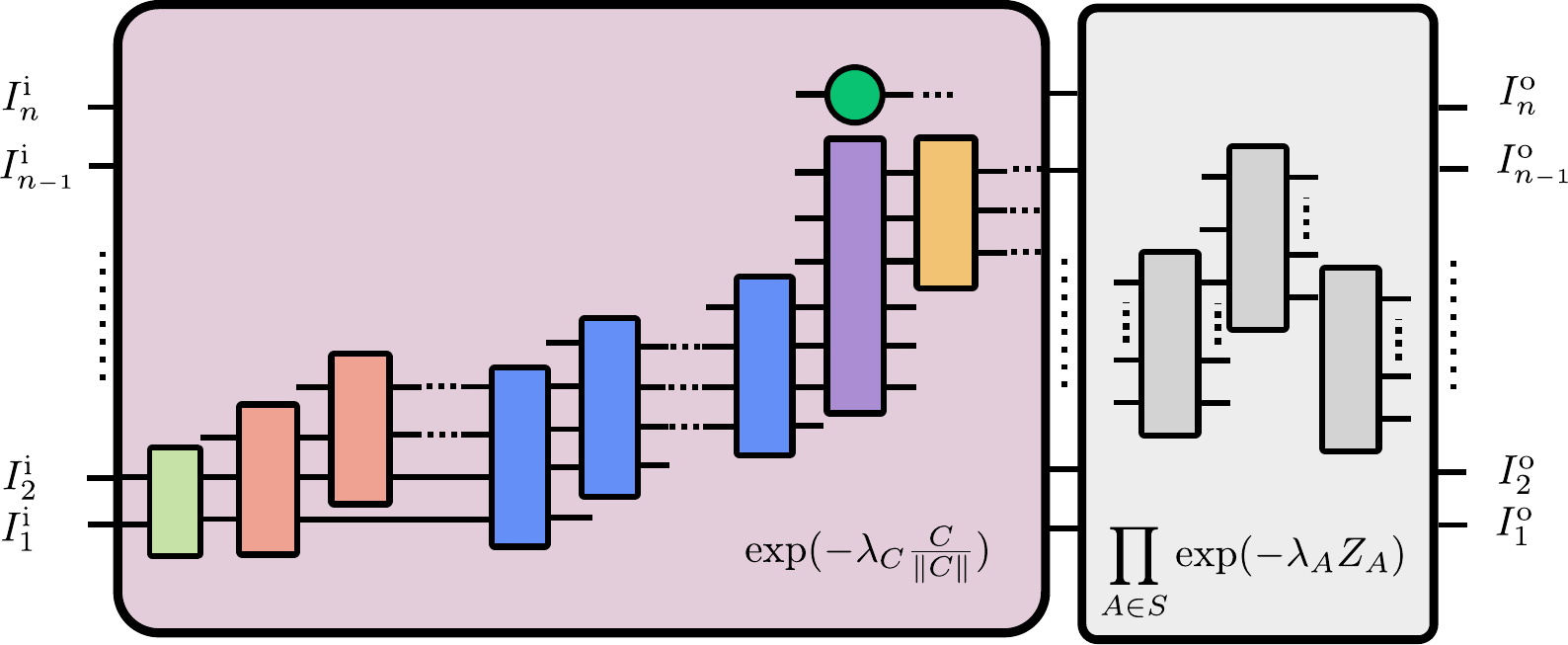}
    \caption{Tensor network representation of 
            $\exp\!\left(H(\lambda,C,S)\right)$. 
            The network is built by sequentially contracting the tensors associated with the terms in 
            $\exp(-\lambda_{C} \frac{C}{\|C\|})$ along their shared indices (large pink square), followed by contraction with the tensors corresponding to the constraint Hamiltonians (large grey square). 
            External legs $I_1^{\mathrm{i}}, I_2^{\mathrm{i}}, \dots, I_n^{\mathrm{i}}$ and 
            $I_1^{\mathrm{o}}, I_2^{\mathrm{o}}, \dots, I_n^{\mathrm{o}}$ denote input and output indices for each qubit. 
            The trace of the operator is obtained by contracting each input index $I_i^{\mathrm{i}}$ with the output index $I_i^{\mathrm{o}}$ iteratively.}
    \label{fig:tensor_network_large}
\end{figure}

Further, to perform the randomized rounding method, we need to evaluate $\operatorname{sign}{\bra{i}\sqrt{\sigma(\lambda)}\,U\ket{0}}$ in~\Cref{prop:rando_rounding_mc} where the $U$ is randomly generated from an ensemble of random orthogonal matrices---see details in~\Cref{prop:rando_rounding_mc}.
In the numerics, we construct such an ensemble $U$ by considering the real part of the tensor product of parameterized universal rotations on each qubit:
\begin{equation}
U = \bigotimes_{i=1}^n R(\phi_i,\omega_i,\theta_i)
\end{equation}
where
\begin{equation}\label{eq:universal_rotation}
    R(\phi_i,\omega_i,\theta_i) = 
    \begin{bmatrix}
    e^{-i(\phi_i+\omega_i)/2} \cos\frac{\theta_i}{2} & - e^{i(\phi_i-\omega_i)/2} \sin\frac{\theta_i}{2} \\
    e^{-i(\phi_i-\omega_i)/2} \sin\frac{\theta_i}{2} & e^{i(\phi_i+\omega_i)/2} \cos\frac{\theta_i}{2}
    \end{bmatrix}.
\end{equation}
If we consider $R(\phi_i,\omega_i,\theta_i) = \Re\left(R(\phi_i,\omega_i,\theta_i)\right)+i\Im\left(R(\phi_i,\omega_i,\theta_i)\right)$, the elements induced by the tensor product of $R(\phi_i,\omega_i,\theta_i)$ are a linear combination of $2^n$ terms:
\begin{equation}
\bigotimes_{j\in \{n\}} \Re\left(R(\phi_j,\omega_j,\theta_j)\right)\bigotimes_{k\in \{n\}}i\Im\left(R(\phi_k,\omega_k,\theta_k)\right),
\end{equation}
where the one with the even number of imaginary parts is real. As the $\Re\left(R(\phi_i,\omega_i,\theta_i)\right)$ is diagonal and 
$\Im\left(R(\phi_i,\omega_i,\theta_i)\right)$ has the matrix form:
\begin{equation}
    \begin{bmatrix}
     A& B\\
    B & -A
    \end{bmatrix},
\end{equation}
and we know
\begin{equation}
    \begin{bmatrix}
     A_1& B_1\\
    B_1 & -A_1
    \end{bmatrix} \begin{bmatrix}
     A_2& B_2\\
    B_2 & -A_2
    \end{bmatrix} = (A_1A_2+B_1B_2)I 
\end{equation}
is also diagonal. Thus, we construct an ensemble of random orthogonal matrices for the randomized rounding method. In the tensor network method, to evaluate $\bra{i}\sqrt{\sigma(\lambda)}\,U\ket{0}$, we first add $U$ tensor using $n$ $R(\phi_i, \omega_i, \theta_i)$ tensors[\Cref{fig:tensor_network}(d)] with random sampled parameters, and input index $I_i^{\mathrm{i}}$ and output index $I_i^{\mathrm{o}}$, respectively. The sampled $\bra{i}$ can be generated with $n$ vector tensors[\Cref{fig:tensor_network}(e)] containing indices $I_0^{\mathrm{i}},I_1^{\mathrm{i}},\dots,I_n^{\mathrm{i}}$, respectively, and so as the $\ket{0}$ containing indices $I_0^{\mathrm{o}},I_1^{\mathrm{o}},\dots,I_n^{\mathrm{o}}$, respectively. Thus, by contracting the the vector tensor of $\bra{i}$, the tensor network of $\sigma(\lambda/2)$, the random tensor $U$, and the vector tensor of $\ket{0}$ sequentially along the corresponding indices, we can obtain $\bra{i}\sqrt{\sigma(\lambda)}\,U\ket{0}$---see \Cref{fig:tensor_network_rounding} for details.
Here, we ignore the normalized factor to the state $\sqrt{\sigma(\lambda)}$ as this won't change the sign of outputs.
The method can also be extended to the quantum computer easily as $R(\phi_i,\omega_i,\theta_i)$ is often considered as an essential quantum gate. As the tensor network is a structured 1D network for partition function and randomized rounding, the time complexity of the contraction is linear to the number of elements in the tensor network---the number of elements in the cost Hamiltonian $C$ and the size of constraint Hamiltonian set $S$---that is $\mathcal{O}(n+|S|)$. 
The space complexity to store the tensor network depends on the largest tensor nodes, that can come from the cost Hamiltonian $C$ or the constraint Hamiltonians $\{Z_A\}$, that is $\mathcal{O}(\max\{1, 2^{k}\})$, where $k$ is the largest number of qubits that the constraint Hamiltonian $Z_A\in S$ acts nontrivially on.

\begin{figure}[htbp]
    \centering
    \includegraphics[width=0.8\textwidth]{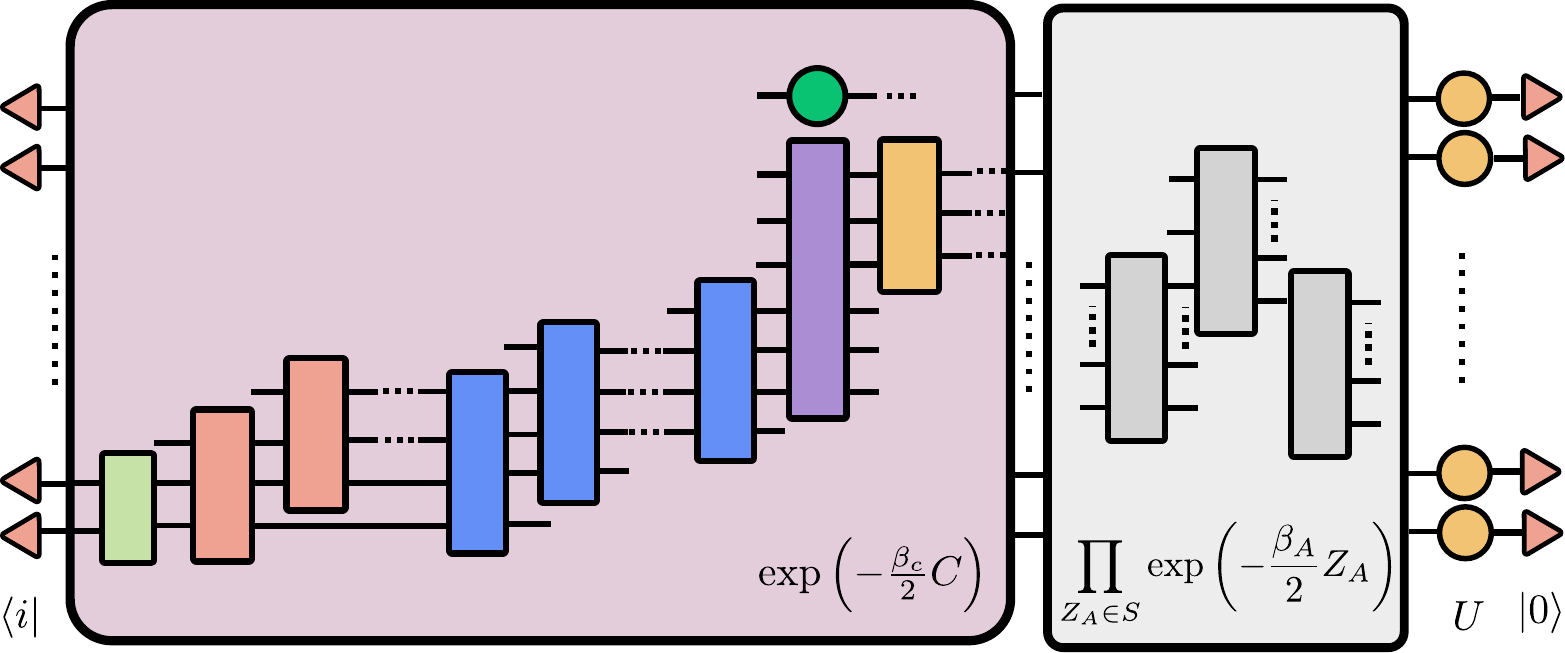}
    \caption[randomized rounding tensor network]{
        Randomized rounding applied to the tensor network representation of 
        $\bra{i}\sqrt{\sigma(\lambda)}\,U\ket{0}$[\Cref{prop:rando_rounding_mc}]. 
        The left and right columns of vector nodes correspond to the sampled bra state $\bra{i}$ and the reference ket state $\ket{0}$, respectively, with each tensor associated with a distinct qubit. 
        The middle large pink square and grey square denote the tensor network of $\exp\!\left(H(\lambda/2,C,S)\right)$. The left column of circle nodes denote the random tensor $U$ generated by the tensor product of $n$ single-qubit universal rotation $R(\phi_i,\omega_i,\theta_i)$[\Cref{eq:universal_rotation}] with random sampled parameters. Contraction of the full network is performed sequentially along the corresponding indices.
    }
    \label{fig:tensor_network_rounding}
\end{figure}

\subsection{The sparse matrix-vector approach}\label{app:details_numerics_sparse}
For the Kronecker graph, since the sampled Hamiltonian (see~\Cref{prop:kronecker_sample}) is generally non-commutative, the tensor network method is not efficient. Instead, we employ a random sampling approach to estimate the trace. The basic idea is to sample random states $\ket{\psi_i}$ from the Haar measure. Then, the trace of a selected Pauli strings $H_k$ in $\{H(\lambda, C, S)\}$ is estimated stochastically as:
\begin{equation}
\tr{H_k \exp\left(H(\lambda,C,S)\right)} \approx \sum_i\frac{1}{L} \bra{\psi_i} \exp\left(H(\lambda/2,C,S)\right) H_k \exp\left(H(\lambda/2,C,S)\right)\ket{\psi_i}
\end{equation}
where $L$ is the number of samples for the trace estimation. Through Chebyshev or Hoeffding inequalities, the stochastic error scales as $\mathcal{O}(1/\sqrt{L})$. Similarly, one obtains the normalization factor $\tr{\exp\left(H(\lambda,C,S)\right)}$. 

For randomized rounding, we again start by generating a Haar-random state $\ket{\psi_i}$. Each amplitude in the computational basis is then rounded to $\pm 1$, depending on its sign, with an overall normalization factor $1/\sqrt{2^n}$. This yields a normalized rounded state $\ket{\tilde{\psi}_i}$. The randomized rounding is evaluated by:
\begin{equation}
     \frac{1}{L'} \sum_i \bra{\tilde{\psi}_i}\exp\left(H(\lambda/2,C,S)\right) C \exp\left(H(\lambda/2,C,S)\right) \ket{\tilde{\psi}_i}
\end{equation}
where $L'$ is the number of samples for the randomized rounding procedure.

\section{On the structure of graphs corresponding to commuting Hamiltonians}\label{sec:graphs_commuting_Hamiltonians}
As one of our main numerical examples to demonstrate our methods corresponds a commuting Hamiltonian, it is worthwhile to discuss the structure of such instances. First of all, it is important to distinguish the case of what can be called fully commuting Hamiltonian and those that are commuting after a suitable regrouping.

\subsection{Fully commuting Hamiltonians}
We call Hamiltonians to be fully commuting if we have that \emph{all the Pauli strings in its decomposition commute with each other.} As we will observe later, this is a strictly stronger requirement that having a commuting Hamiltonian after a suitable regrouping and not all models that can be efficiently simulated with $1D$ tensor networks need to satisfy this. 
Such models turn out to have significantly simpler structure, as we explain now, and for local dimension $d=2$ it is possible to reduce solving $\GW$ for them to computing some eigenvalues. As we will see, the reason for that is that fully commuting Hamiltonians cannot be adjacency matrices of connected graphs unless the diagonal group is trivial.
\begin{proposition}\label{prop:commuting_disconnected}
Assume that $C=\sum_j\alpha_jP_{(x_j,y_j)}$ corresponds to a fully commuting Hamiltonian, i.e. for all $(x_i,z_i),(x_j,z_j)$ we have: 
\begin{align}
[P_{(x_i,z_i)},P_{(x_j,z_j)}]=0.
\end{align}
If $\mathcal{D}_C\not=\{\pm I\}$, then the underlying graph is disconnected.
\end{proposition}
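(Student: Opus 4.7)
The plan is to exploit the symplectic description of the Pauli group to turn the assumption $\mathcal{D}_C \neq \{\pm I\}$ into a $\mathbb{Z}_2$-valued invariant that is preserved by edges of the graph, and hence produces a bipartition of the vertex set into non-communicating components.

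First I would pick any non-trivial element $Z_A \in \mathcal{D}_C$, i.e.\ some $A \subseteq [n]$ with $A \neq \varnothing$ such that $Z_A \in \mathcal{G}_C$. Since by assumption all Pauli strings $P_{(x_j, z_j)}$ in the support of $C$ pairwise commute, the group $\mathcal{G}_C$ they generate is abelian, so $Z_A$ commutes with every generator $P_{(x_j,z_j)}$. Writing out the symplectic commutation rule $P_{(x_1,z_1)} P_{(x_2,z_2)} = (-1)^{x_1\cdot z_2 + x_2 \cdot z_1} P_{(x_2,z_2)} P_{(x_1,z_1)}$, the commutator with $Z_A = \pm P_{(0, z_A)}$ vanishes precisely when
\begin{equation}
x_j \cdot z_A \equiv 0 \pmod 2 \qquad \text{for every } j.
\end{equation}
So all the ``$X$-parts'' $x_j$ of the Pauli terms appearing in $C$ have even overlap with $A$.

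Next I would translate this into a graph-theoretic statement. Label vertices by $r \in \{0,1\}^n$ and recall that $P_{(x_j,z_j)}\ket{r} = \pm\ket{r\oplus x_j}$, so the off-diagonal entries of $C$ only create edges between strings $r$ and $r\oplus x_j$. Define the parity function $f\colon \{0,1\}^n \to \{0,1\}$ by $f(r) = r\cdot z_A \bmod 2$. Along any edge, $f$ changes by $x_j \cdot z_A \bmod 2 = 0$, so $f$ is constant on each connected component of the graph.

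Finally, I would conclude by observing that since $A \neq \varnothing$, both fibres $f^{-1}(0)$ and $f^{-1}(1)$ are non-empty (they each have cardinality $2^{n-1}$). Hence the graph splits into (at least) two disjoint sets of vertices with no edge between them, proving that it is disconnected. The only subtle point, which I would verify carefully, is handling the sign conventions and making sure that the $Z_A$ furnished by $\mathcal{D}_C \neq \{\pm I\}$ really corresponds to a non-empty set $A$; but this is immediate from the definition of $\mathcal{D}_C$ as the intersection of $\mathcal{G}_C$ with the Pauli-$Z$ strings (modulo signs). I don't expect any serious obstacle: the argument is essentially bookkeeping in the symplectic representation combined with the elementary observation that a commuting $Z_A$ forbids the $X$-type moves of $C$ from crossing a linear hyperplane in $\mathbb{F}_2^n$.
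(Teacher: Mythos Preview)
Your proof is correct and rests on the same key observation as the paper's: a non-trivial $Z_A\in\mathcal{D}_C$ commutes with every Pauli in the support of $C$, which forces $x_j\cdot z_A\equiv 0\pmod 2$ for all $j$ and hence obstructs connectivity. The paper's execution is slightly different: instead of your global parity invariant $f(r)=r\cdot z_A$, it fixes a single site $i\in A$ and argues that any path between two strings differing only at bit $i$ would yield a product of Paulis in $\mathcal{G}_C$ with $X$-part equal to $e_i$, which anticommutes with $Z_A$---a contradiction. Your formulation via the conserved $\mathbb{Z}_2$-parity is more direct and avoids the intermediate step of reasoning about products along paths; the paper's version is the special case where one projects your invariant onto a single coordinate of $z_A$.
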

\begin{proof}
Note that as $\mathcal{D}_C\not=\{\pm I\}$, we have at least one nontrivial $Z$ string in the diagonal group, say $Z_A$. Now let $i\in A$ and assume w.l.o.g. that $i=1$. We now claim that $X_1Z_B$ for $Z_{B}$ any (potentially empty) subset that does not contain $1$ cannot be a term in the algebra generated by the Hamiltonian. Indeed, it is not difficult to see that $[X_iZ_B,Z_A]\not=0$, contradicting the commutativity of the Hamiltonian (and therefore of the algebra the terms generate). However, the only Pauli terms that map bitstrings to each other that only differ at site $1$ are precisely of the form $X_1Z_B$. Now assume there is a path of length $k$ between two strings of the form $y_1=0x_2\ldots x_n$ and $y_2=1x_2\ldots x_n$. This implies that $\bra{y_1}A^k\ket{y_2}\not=0$, from which we infer that for at least one string of the form $X_1Z_B$ we have that $\tr{X_1Z_BA^k}\not=0$, a contradiction with such strings not being in the algebra. We conclude that no such path can exist and the vertices $y_1$ and $y_2$ are not connected.
\end{proof}
\begin{remark}
    In the case where the algebra is trivial this is not true, just consider $C=\sum_iX_i$.
\end{remark}
Thus, in the fully commuting case the graph decomposes into various components and we can solve the QUBO or GW-SDP on each component separately. But as Paulis commute with each other, we can find a basis that diagonalizes each block simultaneously. In addition, again because of commutativity, any Pauli string that contains $Z$ terms must either contain an even number of them or have a $Z$ term on which no other string acts nontrivially. As Pauli strings that have an even number of $Z$ terms admit an eigendecomposition with phase states (those with $\pm1$ amplitudes), we conclude that the QUBO on each block can be solved by diagonalizing each block separately. Thus, we conclude that solving QUBO on fully commuting Hamiltonians can be done in polynomial time by first finding the various blocks of the matrix and then solving the problem individually. Thus, it is possible to solve QUBO for such instances by considering spectral data alone.

\subsection{Hamiltonians that become commuting after regrouping}\label{sec:commuting_regrouping}
In the previous section we showed that so-called fully commuting instances (all Pauli strings in the expansion commute) are not good candidates for getting nontrivial instances for MAXCUT where our methods offer a large advantage, as they can be solved by spectral methods. However, we now argue that Hamiltonians that only becoming commuting after a suitable regrouping of the Pauli strings are not affected by such arguments. Such instances seem to have a richer structure and we were not able to identify arguments why they should be easy to solve in general.

To illustrate what we mean by a Hamiltonian that becomes commuting after regrouping, let us consider the following example on $4$ qubits:
\begin{align}\label{equ:commuting_example}
C=XXII+YYII+IXXI+IYYI+XIXI+YIYI+XXXX+IIIX.
\end{align}
It is not the case that all the individual Pauli strings commute, as e.g. $[XXII,IYYI]\not=0$. In addition, it is easy to see that $\dot\cD_C=\pm\{Z_1Z_2,Z_2Z_3,Z_1Z_3\}$ and one can numerically check that $\lambda_{\max}(C)>\GW(C)$. Furthermore, the graph is connected. All of these points indicate that the graph defined in~\Cref{equ:commuting_example} behaves qualitatively differently than the ones discussed in the previous section, that were fully commuting. But this example can still be made commuting by regrouping the terms as $A=XXII+YYII+IXXI+IYYI+XIXI+YIYI$ and writing $C=A+XXXX+IIIX$. One can then readily check that all terms in the Hamiltonian above commute with each other. We can then extend the Hamiltonian above to an arbitrarily large number of qubits without losing the $1D$, commuting or small diagonal algebra structure by just appending a commuting $1D$ Hamiltonian with trivial subalgebra starting at qubit $4$, as done in Eq.~\eqref{eq:1d_hamiltonian} of the main text.

\paragraph{Complexity of commuting instances that are not fully commuting:} as discussed before, for instances like~\Cref{equ:commuting_example}, spectral data on its own is not sufficient to determine the value of QUBO and the SDP provides a strictly better approximation. That being said, it remains unclear to what extent these instances are truly hard from a complexity point of view, as we are currently unable to show a reduction to instances that are better understood.

Nevertheless, we performed various sanity checks and numerically computed a variety of parameters of the underlying graphs that are related to more efficient algorithms to solve the QUBO. Simply put, if one of these parameters of the graph, say tree-width, scaled linearly with the the number of qubits (logarithmically in dimension), we would have algorithms to solve $\QUBO(C)$ in time polynomial in $2^n$, as there are algorithms whose scaling is only exponential in these parameters times the dimension~\cite{Cygan2015}.

These all seem to scale superlinearly fast (see Plot~\ref{fig:complexity_parameter}), which would rule out the possibility of computing the exact value of QUBO for such instances in time that is $\textrm{poly}(2^n)$. That being said, these are only numerical tests, so we cannot rule out that these numbers plateau at higher system sizes, or that there are other algorithms that are tailored to solve QUBO on such instances.

Nevertheless, our methods can sometimes deliver approximations to $\QUBO$ in time $\polylog(n)$, which would be exponentially faster than the parametrized complexity algorithms outlined above that require access to the whole graph. Unfortunately, we are unaware of other algorithms that could profit from our concise input model to run further benchmarks. 

\section{Improved stability bounds}\label{app:stability}
We now show we can obtain better stability bounds for the solution of the SDP $\GW(C,S,0)$ for certain sets $S$ by solving a linear program. We start with a general result on the continuity of the SDP: 
\begin{lemma}\label{lem:obj.at.X.ineq.aff}
Let $C\in \R^{2^n\times 2^n}$, and let $\mathcal A:\R^{2^n\times 2^n}\to \R^{|S|+1}$ be a linear operator defined by 
\begin{equation}
\mathcal AY=(\tr{ Y},\tr{ Z_{A_1} Y},\dots,\tr{ Z_{A_{|S|}}Y})\,,
\end{equation}
with adjoint $\R^{|S|}\to\R^{2^n\times 2^n}$
\begin{equation}
    \mathcal A^*(\xi)=\sum_{A_i\in S}Z_{A_i}\xi_i.
\end{equation}
Consider the dual problem to $GW(C,S,0)$:
\begin{equation}\label{eq:dual.no.tr.cons.ineq.aff}
\begin{array}{rl}
\inf\limits_{\xi\in \R^{|S|+1}}& \xi_0\\
\text{s.t.  } &\mathcal A^\top \xi \geq C\,,
\end{array}
\end{equation}
Assume both problems admit solutions $Y^\star$  and $\xi^\star$, respectively, with value $\lambda^*$.
Then, for all $X\geq 0$, the following holds:
\begin{equation}\label{equ:continuity_estimate}
\tr{ C X} \ge\lambda^\star - \langle\xi^*|\mathcal{A}X-e_0\rangle.
\end{equation}
\end{lemma}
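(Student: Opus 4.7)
The plan is to derive the displayed inequality as a direct consequence of SDP weak duality, with the positive semi-definite certificate $\mathcal{A}^\top(\xi^\star) - C \succeq 0$ from dual feasibility as the single analytic input, applied against the PSD test point $X$ via the trace inner product.

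First, I would combine $\mathcal{A}^\top(\xi^\star) - C \succeq 0$ with $X \succeq 0$ to obtain the trace inequality $\tr{(\mathcal{A}^\top(\xi^\star) - C)X} \ge 0$, then rewrite the adjoint side via $\tr{\mathcal{A}^\top(\xi^\star)\,X} = \langle \xi^\star, \mathcal{A}X\rangle$ to produce a linear comparison between $\tr{CX}$ and $\langle \xi^\star, \mathcal{A}X\rangle$. Next, I would invoke strong duality, which holds by the hypothesis that both the primal and dual optima are attained: this yields $\lambda^\star = \xi^\star_0 = \langle \xi^\star, e_0\rangle$, so the pairing splits as $\langle \xi^\star, \mathcal{A}X\rangle = \lambda^\star + \langle \xi^\star, \mathcal{A}X - e_0\rangle$. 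Substituting this decomposition into the trace inequality and rearranging isolates $\tr{CX}$ against $\lambda^\star$ and the feasibility residual $\langle \xi^\star, \mathcal{A}X - e_0\rangle$, which is the content of the displayed bound.

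The main delicate step is tracking the sign of the residual term. The adjoint map written in the lemma, $\mathcal{A}^*(\xi) = \sum_{A_i \in S} Z_{A_i}\xi_i$, is written without an explicit $\xi_0 I$ term associated to the trace constraint, which effectively encodes a specific sign convention on the dual variable: absorbing this convention (equivalently, identifying the component of $\xi^\star$ paired to the trace slack $(\mathcal{A}X)_0 - 1$) is what fixes whether the residual enters the displayed inequality with a $+$ or $-$ sign. I expect that reconciling this bookkeeping with the direction $\tr{CX} \ge \lambda^\star - \langle \xi^\star, \mathcal{A}X - e_0\rangle$ written in the lemma will be the only non-trivial step of the argument; the rest is a two-line rearrangement of the weak-duality identity, and there is no need for a separate continuity or compactness argument because the bound holds pointwise in $X \succeq 0$.
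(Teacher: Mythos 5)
Your argument is sound and in fact slightly more economical than the paper's: you pair the dual feasibility certificate $\mathcal{A}^\top\xi^\star - C \succeq 0$ directly against $X \succeq 0$ and invoke strong duality only to identify $\xi_0^\star = \lambda^\star$, whereas the paper routes the same computation through the primal optimizer $Y^\star$, writing $\tr{CX}-\lambda^\star = \tr{C(X-Y^\star)}$ and invoking the KKT stationarity and complementary slackness conditions $\tr{Z^\star Y^\star}=0$ and $\langle\xi^\star,\mathcal{A}Y^\star - e_0\rangle = 0$. The two derivations are algebraically equivalent, but yours needs neither primal attainment of $Y^\star$ nor complementary slackness, only dual feasibility and the identification of the dual value; both hinge on the same single analytic input, namely $\tr{Z^\star X}\ge 0$ for the slack matrix $Z^\star$.

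The one point you flag but do not resolve --- the sign of the residual --- cannot actually be reconciled with the inequality as displayed, because the displayed inequality is the wrong way around. Your derivation yields $\tr{CX} \le \lambda^\star + \langle\xi^\star,\mathcal{A}X - e_0\rangle$, an \emph{upper} bound; this is the correct statement. A sanity check: for any feasible $X$ (so $\mathcal{A}X = e_0$) the lemma as written would assert $\tr{CX}\ge\lambda^\star$, which contradicts $\lambda^\star$ being the maximum. The paper's own proof suffers from the same sign confusion (it sets $Z^\star = C-\mathcal{A}^\top\xi^\star$ and claims $Z^\star\ge 0$, which contradicts dual feasibility $\mathcal{A}^\top\xi^\star\ge C$, and its final line delivers $\tr{CX}\ge\lambda^\star+\langle\xi^\star,\mathcal{A}X-e_0\rangle$, which also does not match the stated inequality). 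The form you obtain is the one actually needed downstream in Cor.~\ref{equ:cor_stability}, where only $|\langle\xi^\star,\mathcal{A}X-e_0\rangle|\le\epsilon\,\Xi(S,\epsilon)$ via H\"older enters, together with $\tr{CX}\ge\lambda^\star$ from monotonicity of the relaxation. So rather than trying to massage the bookkeeping to reproduce the displayed direction, you should state and prove the upper-bound version; with that correction your two-line weak-duality argument is complete.
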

\begin{proof}
Note that strong duality holds for $\GW$, as the identity is always a feasible point.
Set $Z^\star=C-\mathcal A ^\top \xi^\star$.
Then $Z^\star\geq 0$.
The Lagrangian has the form
\begin{equation}
L(Y,Z,\xi)=\tr{ CY} -\tr{ Z Y}-\langle\xi|\mathcal{A}X-e_0\rangle,.
\end{equation}
By the Karush–Kuhn–Tucker (KKT) conditions, we have
\begin{equation}\label{eq:KKT1.ineq.aff}
0=\frac{\partial L}{\partial Y}(Y^\star,Z^\star,\xi^\star)=C-Z^\star-\mathcal A^* \xi^\star\,.
\end{equation}
\begin{equation}\label{eq:KKT2.ineq.aff}
\tr{ Z^\star Y^\star}=0\,,
\end{equation}
\begin{equation}\label{eq:KKT3.ineq.aff}
\langle\xi^*|\mathcal{A}Y^*-e_0\rangle=0.
\end{equation}
Let $X\geq0$.
By~\Cref{eq:KKT1.ineq.aff}, $C=Z^\star+\mathcal{A}^\top \xi^\star$.
Then
\begin{equation}\label{eq:bound4.ineq.aff}
\begin{array}{rl}
\tr{ C X}-\lambda^\star=&\tr{ C( X-Y^\star)}\\
=&\tr{( Z^\star+\mathcal{A}^\top \xi^\star)( X-Y^\star)}  \\
=& \tr{ Z^\star X} + \tr{ \mathcal{A}^* \xi^\star (X-Y^\star)}\\
=& \tr{ Z^\star X} +   \langle\xi^*|\mathcal{A}(X-Y^\star )\rangle\\
=& \tr{ Z^\star X} +   \langle\xi^*|\mathcal{A}X-e_0\rangle\\
\ge &   \langle\xi^*|\mathcal{A}X-e_0\rangle\,. 
\end{array}
\end{equation}
The first equality follows from $\tr{C Y^\star} =\lambda^\star$, the third equality is due to \Cref{eq:KKT2.ineq.aff}, the last equality uses \Cref{eq:KKT3.ineq.aff}, and the last inequality is based on the positive semidefiniteness of $Z^\star$ and $X$.
Hence, the result follows.
\end{proof}

From this we can immediately obtain better continuity estimates for certain constraint sets:
\begin{cor}\label{equ:cor_stability}
For a $S\subseteq 2^{[n]}\backslash\{\varnothing\}$, $\epsilon>0$ and $C\in \mathbb{R}^{D\times D}$, let $\Xi(S,\epsilon)$ be the solution to the linear program:
\begin{align}
\sup_{\xi\in\R^{m}} &\|\xi\|_{\ell_1}\\
&\sum_{i} \xi_iZ_{A_i}\geq -\GW(C,S,\epsilon)I
\end{align}
Then for all $\epsilon>0$:
\begin{align}
|\GW(C,S,\epsilon)-\GW(C,S,0)|=\epsilon \Xi(S,\epsilon)
\end{align}
\end{cor}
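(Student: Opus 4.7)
The plan is to prove the identity via SDP duality for $\GW(C,S,\epsilon)$ together with LP duality for the program defining $\Xi(S,\epsilon)$. The first step is to compute the Lagrangian dual of $\GW(C,S,\epsilon)$: introducing a scalar multiplier $\mu\in\R$ for the trace constraint and $\xi_A=\lambda_A^+-\lambda_A^-\in\R$ for the two-sided bound $\pm\tr{Z_A\rho}\le\epsilon$, strong duality (which holds via strict feasibility of $\rho=I/2^n$) yields
\[
\GW(C,S,\epsilon)=\min_{\mu,\xi}\Bigl\{\mu+\epsilon\|\xi\|_{\ell_1}\ :\ \mu I+\textstyle\sum_{A\in S}\xi_A Z_A\geq C/\|C\|\Bigr\}.
\]
The dual feasible set is independent of $\epsilon$, so setting $\epsilon=0$ recovers the dual of $\GW(C,S,0)$ by simply dropping the $\ell_1$ penalty.

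For the upper bound $\GW(C,S,\epsilon)-\GW(C,S,0)\le\epsilon\,\Xi(S,\epsilon)$, I would apply \Cref{lem:obj.at.X.ineq.aff} with $X$ the primal optimum $\rho^\star_\epsilon$ of the $\epsilon$-problem and $(\mu^\star,\xi^\star)$ dual optimal for the $0$-problem, giving $\GW(C,S,\epsilon)-\GW(C,S,0)\le\sum_A\xi^\star_A\tr{Z_A\rho^\star_\epsilon}$. The refined step is to bound this not crudely by $\epsilon\|\xi^\star\|_{\ell_1}$ but by $\epsilon\,\Xi(S,\epsilon)$, by exhibiting a dual optimum $\xi^\star$ of the $0$-problem that is also feasible for the LP, i.e.\ $\sum_A\xi^\star_A Z_A\geq-\GW(C,S,\epsilon)I$; such a $\xi^\star$ I would extract from the KKT system at $\rho^\star_0$, using $\tr{Z_A\rho^\star_0}=0$ and the monotonicity $\GW(C,S,\epsilon)\geq\GW(C,S,0)$ to place $\xi^\star$ in the LP's feasible set (possibly after choosing carefully within the face of dual optima). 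The matching reverse inequality then comes from taking an LP maximizer $\xi^\sharp$ and constructing a primal witness $\rho^\sharp$ for the $\epsilon$-problem supported on the kernel of $\sum_A\xi^\sharp_A Z_A+\GW(C,S,\epsilon)I$ (complementary slackness for the LP) with $\tr{Z_A\rho^\sharp}=\epsilon\,\mathrm{sgn}(\xi^\sharp_A)$ on active indices, so that $\rho^\sharp$ is automatically primal feasible for the $\epsilon$-problem and, paired with $(\mu^\star,\xi^\star)$, achieves objective value exactly $\GW(C,S,0)+\epsilon\,\Xi(S,\epsilon)$.

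The main obstacle is the self-referential character of the LP: its PSD constraint depends on $\GW(C,S,\epsilon)$ itself, so producing a primal $\rho^\sharp$ that simultaneously certifies the LP's optimum and is optimal for the $\epsilon$-SDP is a coupled fixed-point condition in $\epsilon$. I expect the cleanest route is an envelope-theorem argument --- showing that $\epsilon\mapsto\GW(C,S,\epsilon)$ is concave and piecewise affine with right derivative equal to $\Xi(S,\epsilon)$ on each linear piece, after which the claimed equality follows by integration from $0$ to $\epsilon$. The delicate steps are (i) ruling out the pathological case in which no dual $\xi^\star$ of the $0$-problem sits in the LP's feasible set, forcing a rescaling argument, and (ii) handling the breakpoints of the value function, where both the LP optimizer and the active SDP constraints change discontinuously and the identification of slope with $\Xi(S,\epsilon)$ requires the one-sided derivative to be read off carefully.
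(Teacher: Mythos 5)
Your overall strategy for the upper bound --- dualize the $\epsilon$-relaxed SDP, pair the $\epsilon$-optimal primal point with a dual optimum $\xi^\star$ of the $0$-problem via \Cref{lem:obj.at.X.ineq.aff}, and control $\|\xi^\star\|_{\ell_1}$ by showing that $\xi^\star$ is feasible for the LP defining $\Xi(S,\epsilon)$ --- is the same as the paper's. However, you leave the decisive step unproven: why does a dual optimum of the $0$-problem satisfy $\sum_A\xi^\star_A Z_A\geq -\GW(C,S,\epsilon)I$? Your plan to extract this ``from the KKT system at $\rho^\star_0$, using $\tr{Z_A\rho^\star_0}=0$\ldots possibly after choosing carefully within the face of dual optima'' does not go through: complementary slackness gives no lower bound on the diagonal operator $\sum_A\xi^\star_A Z_A$, and no careful selection within the optimal face is needed. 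The paper's mechanism is more elementary: dual feasibility reads $\xi^\star_0 I+\sum_A\xi^\star_A Z_A\geq C$ with $\xi^\star_0=\GW(C,S,0)\leq\GW(C,S,\epsilon)$ by monotonicity; evaluating this operator inequality on each computational basis state $\ket{x}$, using that $C$ has zero diagonal and that $\sum_A\xi^\star_A Z_A$ is itself diagonal (so its diagonal entries determine the operator inequality), yields $\sum_A\xi^\star_A Z_A\geq-\GW(C,S,\epsilon)I$ directly. Hence $\|\xi^\star\|_{\ell_1}\leq\Xi(S,\epsilon)$, and H\"older applied to \Cref{equ:continuity_estimate} with $\|\mathcal{A}X-e_0\|_{\ell_\infty}\leq\epsilon$ finishes the bound. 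This zero-diagonal observation is the key idea missing from your write-up.

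The second issue is that you devote most of your effort to a matching lower bound (a primal witness supported on a kernel, an envelope-theorem argument, a piecewise-affine value function), and you correctly sense that this is where the argument becomes a ``coupled fixed-point condition.'' That direction is not what the paper establishes and is not provable in general: $\Xi(S,\epsilon)$ maximizes $\|\xi\|_{\ell_1}$ over \emph{all} LP-feasible vectors rather than only dual optima, and H\"older is generally not saturated, so $\epsilon\,\Xi(S,\epsilon)$ is only an upper bound on the gap. The ``$=$'' in the statement should be read as ``$\leq$'' (the surrounding text describes the result as a bound of the form $\mathcal{O}(\epsilon\|C\||\cD_C|)$, and the trivial direction $\GW(C,S,\epsilon)\geq\GW(C,S,0)$ already disposes of the absolute value). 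The reverse-inequality machinery should be dropped.
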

\begin{proof}
Note that as $\GW(C,S,\epsilon)\geq\GW(C,S,0)$, the $0$-th coordinate of the optimal solution $\xi^*$ to~\Cref{eq:dual.no.tr.cons.ineq.aff} is smaller than $\GW(C,S,\epsilon)$. This immediately gives us the inequality:
\begin{align}\label{equ:matrix_ineq}
\sum_{i} \xi_i^*Z_{A_i} \geq C-\GW(C,S,\epsilon)
\end{align}
Now recall that we assume that the diagonal of the matrix $C$ is $0$. Evaluating inequality \Cref{equ:matrix_ineq} on the computational basis state $x$ allows us to conclude that:
\begin{align}\label{equ:matrix_ineq_2}
\langle x|\sum_{i} \xi_i^*Z_{A_i}|x\rangle \geq-\GW(C,S,\epsilon),
\end{align}
which implies that $\sum_i |\xi_i^*|\leq \Xi(S)$.
The claim then follows by applying H\"older's inequality to~\Cref{equ:continuity_estimate}, as by construction $\|\cA(X_\epsilon^*)\|_{\ell_\infty}\leq \epsilon$ and $\||\xi^*\|_{\ell_1}\leq \Xi(S)$.
\end{proof}
Note that this bound is often much better than the general bound of~\cite[Theorem 5]{henze2025solvingquadraticbinaryoptimization}, as it provides a linear scaling with $\epsilon$, whereas that bound only gave a $\epsilon^{1/3}$ scaling. In addition, for many choices of $S$, the bound can become independent of the size of $S$, only depending on the value of the program. For instance, consider the case $S=\{\{1\},\ldots,\{n\}\}$. Then it is not too difficult to show that $\Xi(S,\epsilon)=\GW(C,S,\epsilon)$, as we can pick the computational basis states to generate any sign pattern on the $\xi$. In other words, for such sets of constraints, we have that we achieve a relative error by solving the problem up to $\epsilon$.

An important class of examples for our stability bounds are Hamiltonians like the one in~\Cref{equ:commuting_example}, which has the commuting group given by $\dot\cD_C=\pm\{Z_1Z_2,Z_2Z_3,Z_1Z_3\}$. A direct inspection shows that for that choice $\Xi(\dot\cD_C,\epsilon)=3\GW(C,\dot\cD_C,\epsilon)$. 
A direct consequence is that we have a relative error $\epsilon$ for solving $\GW(C,\dot\cD_C,0)$ for this particular case by solving  $\GW(C,\dot\cD_C,\epsilon)$, a result we used in~\Cref{sec:numerics}.

\end{document}